\newcommand{\bn}{\bm{n}}
\newcommand{\be}{\bm{e}}
\DeclareMathOperator\erf{erf}
\newcommand\restr[2]{{
  \left.\kern-\nulldelimiterspace 
  #1 
  \vphantom{\big|} 
  \right|_{#2} 
}}%
\begin{document}

\title{Motion of a Rigid Body in a Special Lorentz Gas: Loss of Memory Effect}

\titlerunning{Motion of a Body in a Special Lorentz Gas}

\author{Kai Koike}

\authorrunning{Kai Koike} 

\institute{Kai Koike \at
1) School of Fundamental Science and Technology, Keio University, 3-14-1 Hiyoshi, Kohoku-ku, Yokohama 223-8522, Japan \\
\email{koike@math.keio.ac.jp} \\
2) Mathematical Science Team, RIKEN Center for Advanced Intelligence Project, 1-4-1 Nihonbashi, Chuo-ku, Tokyo 103-0027, Japan
}

\date{Received: date / Accepted: date}

\maketitle

\begin{abstract}
  Linear motion of a rigid body in a special kind of Lorentz gas is mathematically analyzed. The rigid body moves against gas drag according to Newton's equation. The gas model is a special Lorentz gas consisting of gas molecules and background obstacles, which was introduced in~(Tsuji and Aoki: J. Stat. Phys. \textbf{146}, 620--645, 2012). The specular boundary condition is imposed on the resulting kinetic equation. This study complements the numerical study by Tsuji and Aoki cited above --- although the setting in this paper is slightly different from theirs, qualitatively the same asymptotic behavior is proved: The velocity $V(t)$ of the rigid body decays exponentially if the obstacles undergo thermal motion; if the obstacles are motionless, then the velocity $V(t)$ decays algebraically with a rate $t^{-5}$ independent of the spatial dimension. This demonstrates the idea that interaction of the molecules with the background obstacles destroy the memory effect due to recollision.
\keywords{Lorentz gas \and Rigid body motion \and Moving boundary problem \and Recollision \and Memory effect \and Long time behavior}
\end{abstract}

\section{Introduction}
Fluid force acting on a moving body in a fluid is not solely determined by the instantaneous velocity of the body; it also depends on the past history of motion. This is because the disturbance made in the fluid is not immediately wiped away and affects the future motion of the body. The important of this non-Markovian nature --- \textit{memory effect} --- of fluid force is well-studied for viscous fluids~(see e.g.~\cite{Daitche2011,Daitche2015} and the references therein).

Analysis of force acting on a moving body in \textit{rarefied gas} has attracted attention due to its importance in MEMS~(micro-electro-mechanical systems) and vacuum technology~\cite{Russo2009,Shrestha2015,Shrestha2015a,Shrestha2015b,Versluis2009,Tiwari2013,Rader2011,Kobert2015,Dechriste2012,Jin2016,Tsuji2012,Tsuji2013,Tsuji2014,Tsuji2015}. Mathematical studies are relatively limited and only \textit{free molecular flow} (gas flow which is so dilute that collisions among molecules can be neglected) has been analyzed~\cite{Aoki2008,Aoki2009,Butta2015,Caprino2007,Caprino2006,Cavallaro2007,Cavallaro2012,Chen2014,Chen2015,Chen2015a,Fanelli2016,Koike2017,Sisti2014}.

Let us review in particular a result by Caprino et al.~\cite{Caprino2007}. Consider a linear motion of a cylindrical rigid body in a free molecular flow that was otherwise at rest~(see Fig.~\ref{fig:cylinder}). And assume that the force acting on the body is just the gas drag $D(t)$ and that molecules elastically reflect at the surface of the body (specular reflection). They proved that the velocity $V(t)$ of the moving body decays only algebraically as $V(t)\approx t^{-(d+2)}$, where $d$ is the spatial dimension (see Section~\ref{sec:formulation} for a more precise definition of $d$). This algebraic decay is caused by the non-Markovian nature of the drag $D(t)$ (i.e., $D(t)$ is not solely determined by $V(t)$), which shows that memory effect is crucial in determining the long time behavior of the rigid body motion; in fact, artificially neglecting the history part in $D(t)$ leads to exponential decay of $V(t)$~\cite[p. 171]{Caprino2006}.

\begin{figure}[htbp]
  \centering
  \begin{tikzpicture}[scale=0.63]
    \fill[black,opacity=0.23] (0,0) circle (0.5 and 2);
    \fill[gray!30!black,opacity=0.25] (0,2) -- (2,2) arc (90:270:0.5 and 2) -- (0,-2) arc (-90:90:0.5 and 2);
    \fill[gray!60!black,opacity=0.25] (2,0) circle (0.5 and 2);
    \draw[->,>=stealth,dotted] (-4,0) -- (6,0);
    \draw[->,>=stealth,thick] (2,0) -- (3,0) node[above] {$V(t)$};
  \end{tikzpicture}
  \caption{A cylindrical rigid body is moving in one direction with velocity $V(t)$.}
  \label{fig:cylinder}
\end{figure}
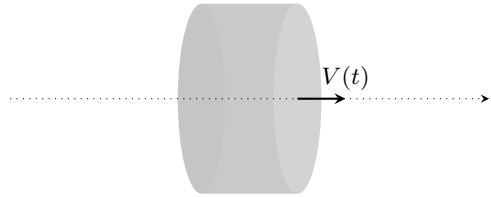

The memory effect in free molecular flow is caused by microscopic dynamics called \textit{recollision}: multiple collisions of a molecule with the body~(Fig.~\ref{fig:recollision}). Recollision causes the velocity distribution of molecules on the surface of the body at time $t$ dependent on the history of $V(t)$; therefore, the drag $D(t)$ also depends on the history of motion --- memory effect.

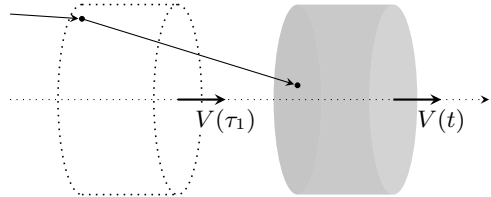
\begin{figure}[htbp]
  \centering
  \begin{tikzpicture}[scale=0.63]
    \draw[dotted,semithick] (-2.5,2) arc (90:270:0.5 and 2);
    \draw[dotted,semithick] (-2.5,2) -- (-0.5,2);
    \draw[dotted,semithick] (-2.5,-2) -- (-0.5,-2);
    \draw[dotted,semithick] (-0.5,2) arc (90:340:0.5 and 2);
    \draw[dotted,semithick] (-0.5,2) arc (90:-5:0.5 and 2);
    \draw[->,>=stealth,thick] (-0.5,0) -- (0.5,0) node[below] {$V(\tau_1)$};

    \fill[black,opacity=0.23] (2,0) circle (0.5 and 2);
    \fill[gray!30!black,opacity=0.25] (2,2) -- (4,2) arc (90:270:0.5 and 2) -- (2,-2) arc (-90:90:0.5 and 2);
    \fill[gray!60!black,opacity=0.25] (4,0) circle (0.5 and 2);
    \draw[->,>=stealth,dotted] (-4,0) -- (6,0);
    \draw[->,>=stealth,thick] (4,0) -- (5,0) node[below] {$V(t)$};

    \filldraw[fill=black,draw=black] (-2.5,1.7) circle (0.05);
    \filldraw[fill=black,draw=black] (2,0.3) circle (0.05);
    \draw[->,>=stealth,shorten >=1.3] (-4,1.8) -- (-2.5,1.7);
    \draw[->,>=stealth,shorten >=1.3] (-2.5,1.7) -- (2,0.3);
  \end{tikzpicture}
  \caption{A molecule (black dot) colliding with the body multiple times. Here $\tau_1<t$.}
  \label{fig:recollision}
\end{figure}

The central question of this paper is the following: What happens to the memory effect if the molecules have certain interaction with background obstacles?\footnote{If the obstacles are molecules themselves and the interaction is an elastic collision, then the resulting kinetic equation is the linearized Boltzmann equation~\cite{Gallavotti}.} Intuitively, the memory effect will be lost. Consider a molecule colliding with the body at time $t$ and $\tau_1(<t)$. This time, however, there may be interaction with the obstacles in-between this recollision (see Fig.~\ref{fig:scattering}). If there are sufficient scattering by the obstacles, then the velocities of the molecule at time $t-0$ and at $\tau_1+0$ are likely to be uncorrelated; therefore, the information of the history of motion ($V(\tau_1)$ in this situation) is not conveyed to time $t$. And the memory effect in $D(t)$ should be lost, which then results in a qualitative change in the long time behavior.

\begin{figure}[htbp]
  \centering
  \begin{tikzpicture}[scale=0.63]
    \draw[dotted,semithick] (-2.5,2) arc (90:270:0.5 and 2);
    \draw[dotted,semithick] (-2.5,2) -- (-0.5,2);
    \draw[dotted,semithick] (-2.5,-2) -- (-0.5,-2);
    \draw[dotted,semithick] (-0.5,2) arc (90:340:0.5 and 2);
    \draw[dotted,semithick] (-0.5,2) arc (90:-5:0.5 and 2);
    \draw[->,>=stealth,thick] (-0.5,0) -- (0.5,0) node[below] {$V(\tau_1)$};
    \fill[black,opacity=0.23] (2,0) circle (0.5 and 2);
    \fill[gray!30!black,opacity=0.25] (2,2) -- (4,2) arc (90:270:0.5 and 2) -- (2,-2) arc (-90:90:0.5 and 2);
    \fill[gray!60!black,opacity=0.25] (4,0) circle (0.5 and 2);
    \draw[dotted] (-4,0) -- (6,0);
    \draw[->,>=stealth,thick] (4,0) -- (5,0) node[below] {$V(t)$};
    \filldraw[fill=black,draw=black] (-2.5,1.7) circle (0.05);
    \filldraw[fill=black,draw=black] (2,0.3) circle (0.05);
    \draw[dotted] (-4,1.8) -- (-2.5,1.7);
    \draw[dotted] (-2.5,1.7) -- (1,0.63);
    \draw[->,>=stealth,shorten >=1.3] (1,0.63) -- (2,0.3);
    \filldraw[fill=black,draw=black] (1,0.63) circle (0.05);
    \draw[->,>=stealth,shorten >=1.3] (-2.5,0.15) -- (1,0.63);
    \filldraw[fill=black,draw=black] (-2.5,0.15) circle (0.05);
    \draw[->,>=stealth,shorten >=1.3] (-3.5,-1) -- (-2.5,0.15);
    \filldraw[fill=black,draw=black] (1,0.8) circle (0.05);
    \draw[->,>=stealth,shorten >=1.6] (0.3,1.5) -- (1,0.77);
    \draw[->,>=stealth,shorten >=1.3] (1,0.8) -- (1.5,1.2);
  \end{tikzpicture}
  \caption{A molecule may have interaction with the obstacles in-between a recollision.}
  \label{fig:scattering}
\end{figure}
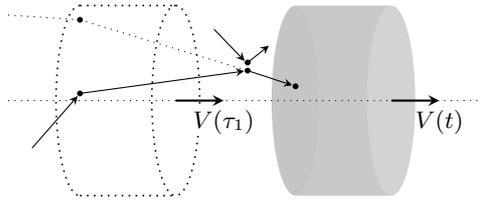

This question was raised and numerically analyzed by Tsuji and Aoki in~\cite{Tsuji2014} assuming rather special interaction with the background obstacles. The resulting kinetic equation is called a special \textit{Lorentz gas}.\footnote{See~\cite{Gallavotti} for a description of Lorentz gas in general.} They showed that the velocity $V(t)$ of the rigid body decays exponentially if the obstacles undergo thermal motion; if they are motionless, then $V(t)$ decays algebraically with a rate independent of the spatial dimension ($t^{-4}$ in their setting).

The purpose of this paper is to give a mathematical proof of this numerical observation. The setting, however, is slightly different from theirs. In this paper, the boundary condition for the kinetic equation is the specular reflection instead of the diffuse reflection and there is no linear restoring force applied to the rigid body; this is more close to the original setting in~\cite{Caprino2006,Caprino2007}. Nevertheless, qualitatively the same asymptotic behavior as in Theorem~\ref{thm:large_kappa} is proved. The proof uses the methods developed in~\cite{Caprino2006,Caprino2007} with additional decay estimates of the recollision terms with the help of a semi-explicit solution formula for the special Lorentz gas (Lemma~\ref{lem:fW_formula}).

The outline of the paper is as follows: I explain the formulation in the next section. The main theorem is stated in Section~\ref{sec:theorems} and is proved in Section~\ref{sec:proof_thm:kappa_large}. Some discussion of the problem is given in Section~\ref{sec:discussion}.

\section{Motion of a Rigid Body in a Lorentz Gas}
\label{sec:formulation}
This section gives the equations governing the motion of a rigid body and the surrounding gas.

Consider a rigid body in $\mathbb{R}^3$ whose section by the $d$-dimensional plane $\mathbb{R}^d$ ($d=1,2,3$) is
\begin{equation}
  \label{C(t)}
  C(t)=\{ x=(x_1,x_{\perp})\in \mathbb{R}\times \mathbb{R}^{d-1} \mid |x_1-X(t)|\leq h/2,\, |x_{\perp}|\leq 1 \},\footnote{Set $x=x_1$ and ignore the condition $|x_{\perp}|\leq 1$ when $d=1$.}
\end{equation}
where $h>0$ is a constant and $X(t)$ is a function of time ($X(0)=0$). That is, the rigid body is either a cylinder ($d=3$); a plate with infinite extension in the $x_3$-direction ($d=2$); or a plane wall with infinite extension both in the $x_2$ and $x_3$-directions ($d=1$). Denote by $V(t)=dX(t)/dt$ the velocity of the rigid body.

A gas fills the region outside the rigid body and is described by the velocity distribution function $f(x,\xi,t)$. Here, $x\in \Omega(t)\coloneqq \mathbb{R}^d \backslash C(t)$,
\begin{equation}
  \label{xi}
  \xi=(\xi_1,\xi_{\perp},\hat{\xi})=(\tilde{\xi},\hat{\xi})\in (\mathbb{R}\times \mathbb{R}^{d-1})\times \mathbb{R}^{3-d}
\end{equation}
and $t\geq 0$. Note that the velocity variable $\xi$ is three-dimensional even when $d=1$ or $2$. The gas is modeled as a special kind of Lorentz gas which is introduced in~\cite{Tsuji2012}. I explain this model below only briefly; the reader can find a more detailed account in their paper.

The gas consists of monatomic gas molecules and randomly dispersed obstacles. A crucial assumption of the model is that the distribution of the obstacles is not disturbed by the presence of the gas molecules and is given by a spatially homogeneous Maxwellian; only the velocity distribution function $f$ of the gas molecules changes in time. The evolution law of $f$ is determined by specifying the interaction of the gas molecules with the obstacles (the model assumes that the gas molecules are so dilute so that the interaction of the gas molecules with itself can be neglected). This model assumes that the obstacles behave like the condensed phase of the gas: The molecules hitting the obstacles are absorbed and re-emitted from them. And the velocity $\xi$ of an emitted molecule from an obstacle moving with velocity $\xi_s$ follows a Maxwellian distribution $f_0(\xi-\xi_s)$, where $f_0(\xi)=f(x,\xi,0)$ is the initial distribution of the gas molecules --- meaning that the gas molecules are initially saturated.

The kinetic equation for $f$ is derived in~\cite{Tsuji2012} by a standard argument in the kinetic theory of gases under the assumptions stated above and the requirement that
\begin{equation}
  \label{epsilon}
  \varepsilon=\frac{\text{average speed of the obstacles}}{\text{average speed of the molecules}}
\end{equation}
is small. Taking into account only terms up to $O(\varepsilon)$ and writing in suitable dimensionless variables,\footnote{In this paper, the length scale $\sqrt{2R_* T_{*0}}/\omega_*$ in~\cite{Tsuji2012} is replaced by the size of the rigid body, which is the radius of the cylinder when $d=3$; the width of the plate when $d=2$; and an arbitrary positive constant when $d=1$. All other scales are the same as in~\cite{Tsuji2012}. Note that eq.~\eqref{C(t)} is already written in these dimensionless variables so that the radius of the cylinder when $d=3$ (or the width of the plate when $d=2$) is set equal to unity.} the kinetic equation is:
\begin{equation}
  \label{Lorentz}
  \partial_t f+\tilde{\xi}\cdot \nabla_x f=\frac{\nu_{\varepsilon}(|\xi|)}{\kappa}[\pi^{-3/2}\exp(-|\xi|^2)-f]
\end{equation}
for $x\in \Omega(t)$, $\xi \in \mathbb{R}^3$ and $t>0$. Here, $\kappa>0$ is the Knudsen number for the collisions between the molecules and the obstacles. The function $\nu_{\varepsilon}(z)$ ($\varepsilon \geq 0$) is defined for $z>0$ by
\begin{equation}
  \label{nu_eps}
  \nu_{\varepsilon}(z)=\frac{\varepsilon}{2}\left[ \exp(-z^2/\varepsilon^2)+\pi^{1/2}\left( \frac{z}{\varepsilon}+\frac{\varepsilon}{2z} \right) \erf(z/\varepsilon) \right],
\end{equation}
where $\erf(z)=2\pi^{-1/2}\int_{0}^{z}\exp(-y^2)\, dy$ is the error function; the value at $z=0$ is defined by $\nu_{\varepsilon}(0)=2\varepsilon /\pi^{1/2}$. Note that the definition of $\nu_{\varepsilon}(z)$ differs from that in~\cite{Tsuji2012} by a constant, which is absorbed in $\kappa$. Note that eq.~\eqref{Lorentz} also depends on $\hat{\xi}$, which is why the velocity variable $\xi$ is three-dimensional even when $d=1$ or $2$ --- unlike the spacial variable $x\in \mathbb{R}^d$.

Although the expression for $\nu_{\varepsilon}(z)$ is rather complex, the only properties of $\nu_{\varepsilon}(z)$ used in this paper are its continuity and the bounds:
\begin{equation}
  \label{nu_bounds}
  \varepsilon+C^{-1}\frac{z^2}{\varepsilon+z}\leq \nu_{\varepsilon}(z)\leq \varepsilon+C\frac{z^2}{\varepsilon+z} \quad (z\geq 0)
\end{equation}
for some $C\geq 1$. Note that by eq.~\eqref{nu_eps}, $\nu_{\varepsilon}(z)\to z$ as $\varepsilon \to 0$; and $\nu_{\varepsilon}'(0)=0$ and $\nu_{\varepsilon}''(0)=2/(3\varepsilon)$; therefore, the limit $\varepsilon \to 0$ is singular in the sense that the first derivative of $\nu_{\varepsilon}(z)$ vanishes at $z=0$ for $\varepsilon>0$ but $\nu_{0}(z)=z$. And this is reflected in the bounds~\eqref{nu_bounds}.

The initial condition is:
\begin{equation}
  \label{init_gas}
  f(x,\xi,0)=f_0(\xi)\coloneqq \pi^{-3/2}\exp(-|\xi|^2)
\end{equation}
for $x\in \Omega(0)$ and $\xi \in \mathbb{R}^3$.

The rigid body motion affects the gas dynamics through a boundary condition for eq.~\eqref{Lorentz}: the specular boundary condition in this paper. Let $\be_1=(1,0,0)\in \mathbb{R}^3$ and $\bn=(\bn_d,\bm{0})\in \mathbb{R}^d \times \mathbb{R}^{3-d}$, where $\bn_d=\bn_d(x,t) \in \mathbb{R}^d$ is the unit normal to $\partial C(t)$ at $x\in \partial C(t)$ pointing towards the gas. Then the specular boundary condition is:
\begin{equation}
  \label{specular}
  f(x,\xi,t)=f(x,\xi-2[(\xi-V(t)\be_1)\cdot \bn]\bn,t)
\end{equation}
for $x\in \partial C(t)$, $\xi \in \mathbb{R}^3$ with $(\xi-V(t)\be_1)\cdot \bn>0$ and $t>0$.

On the other hand, the gas affects the rigid body motion through the gas drag: The velocity $V(t)$ of the rigid body is governed by Newton's equation
\begin{equation}
  \label{Newton}
  \frac{dV(t)}{dt}=-D(t),\quad V(0)=V_0,
\end{equation}
where $D(t)$ is the gas drag given by
\begin{equation}
  \label{drag}
  D(t)=\int_{\partial C(t)}\, dS\int \xi_1(\xi-V(t)\be_1)\cdot \bn f\, d\xi
\end{equation}
and $V_0>0$ is the initial velocity. Formula~\eqref{drag} is derived by considering the net momentum flux of the molecules at $\partial C(t)$ (see~\cite{Sone2007}).\footnote{The interaction of the obstacles with the rigid body is not considered as in~\cite{Tsuji2012}.} Let
\begin{equation}
  \label{I_pm}
  I^{\pm}(t)=\{ x\in C(t) \mid x_1=X(t)\pm h/2 \} \times \{ \xi \in \mathbb{R}^3 \mid \xi_1 \lessgtr V(t) \}.
\end{equation}
Using boundary condition~\eqref{specular}, $D(t)$ is also written as
\begin{equation}
  \label{drag_rewritten}
  D(t)=2\left( \int_{I^{+}(t)}(\xi_1-V(t))^2 f\, d\xi dS-\int_{I^{-}(t)}(\xi_1-V(t))^2 f\, d\xi dS \right).
\end{equation}
Note that the lateral side of $\partial C(t)$ does not contribute to $D(t)$.

Solving eqs.~\eqref{Lorentz}, \eqref{init_gas} and~\eqref{specular}; and eq.~\eqref{Newton} with eq.~\eqref{drag_rewritten} determines the motion of the rigid body and the surrounding gas. These equations are coupled in both ways: Boundary condition~\eqref{specular} requires the knowledge of the velocity $V(t)$ and computing the drag $D(t)$ requires the gas state $f$.

\section{Theorem on the Long Time Behavior}
\label{sec:theorems}
The problem I discuss in this paper is the long time behavior of $V(t)$, which is the content of Theorem~\ref{thm:large_kappa}. This gives a mathematical basis of the numerical observation given in~\cite{Tsuji2012}.

In order to state the theorem (and proving it), I use a function $D_0 \colon \mathbb{R}\to \mathbb{R}$ defined by
\begin{equation}
  \label{D0}
  D_0(U)=c_d \left( \int_{-\infty}^{U}(u-U)^2 e^{-u^2}\, du-\int_{U}^{\infty}(u-U)^2 e^{-u^2}\, du \right),
\end{equation}
where $c_1=2\pi^{-1/2}$, $c_2=4\pi^{-1/2}$ and $c_2=2\pi^{1/2}$. Note that
\begin{align}
  \label{D0_origin}
  \begin{aligned}
    & D_0(V(t)) \\
    & =2\left( \int_{I^{+}(t)}(\xi_1-V(t))^2 f_0 \, d\xi dS-\int_{I^{-}(t)}(\xi_1-V(t))^2 f_0 \, d\xi dS \right),
  \end{aligned}
\end{align}
which explains the choice of the constant $c_d$. The following lemma gives some properties of $D_0$; its proof is easy (see~\cite{Caprino2006}).

\begin{lemma}
  \label{lem:D0}
  $D_0 \colon \mathbb{R}\to \mathbb{R}$ is convex on the interval $[0,\infty)$; and it is odd, smooth and uniformly increasing on $\mathbb{R}$.
\end{lemma}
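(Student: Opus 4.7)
The plan is to read off all four properties from explicit formulas for $D_0'$ and $D_0''$ obtained by differentiating under the integral sign in~\eqref{D0}. When I apply Leibniz's rule to each of the two integrals, the boundary contribution $(u-U)^2 e^{-u^2}\big|_{u=U}$ vanishes because of the double zero of $(u-U)^2$ at $u=U$, so only the $\partial_U$-derivatives of the integrands contribute. Combining the two terms I expect to obtain the compact form
\begin{equation}
  D_0'(U) = 2 c_d \int_{-\infty}^{\infty} |u-U|\, e^{-u^2}\, du,
\end{equation}
which is manifestly positive. One more differentiation, using that $\partial_U |u-U|$ equals $+1$ on $\{u<U\}$ and $-1$ on $\{u>U\}$ (justified against the continuous weight $e^{-u^2}$), then yields $D_0''(U) = 2 c_d \sqrt{\pi}\, \erf(U)$. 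Iterating shows that every further derivative is a polynomial in $U$ times $e^{-U^2}$, which gives smoothness on $\mathbb{R}$.

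The four claims then follow quickly from these formulas. Convexity on $[0,\infty)$ is immediate since $\erf(U)\geq 0$ there, whence $D_0''(U)\geq 0$. For uniform monotonicity I observe that $D_0''$ has the sign of $U$, so $D_0'$ attains its global minimum at $U=0$, where
\begin{equation}
  D_0'(0) = 2 c_d \int_{-\infty}^{\infty} |u|\, e^{-u^2}\, du = 2 c_d > 0;
\end{equation}
this provides a uniform positive lower bound on $D_0'$ and hence uniform monotonicity of $D_0$. Oddness follows by the change of variables $u = -v$ in both integrals entering $D_0(-U)$: the two integration ranges swap and the integrand becomes $(v-U)^2 e^{-v^2}$, so the two resulting terms combine to give exactly $-D_0(U)$.

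There is no real obstacle in this argument; the one step worth flagging is the vanishing of the boundary terms in Leibniz's rule, which is exactly what reduces $D_0'$ to the absolute-value integral and in turn makes the positivity of $D_0'$ and the clean identification $D_0'' = 2 c_d \sqrt{\pi}\,\erf$ both transparent.
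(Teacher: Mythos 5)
Your proof is correct, and the explicit formulas check out: differentiating~\eqref{D0} under the integral sign with vanishing boundary terms gives
\begin{equation}
D_0'(U)=2c_d\int_{-\infty}^{\infty}|u-U|\,e^{-u^2}\,du,\qquad D_0''(U)=2c_d\bigl(\textstyle\int_{-\infty}^{U}-\int_{U}^{\infty}\bigr)e^{-u^2}\,du=2c_d\sqrt{\pi}\,\erf(U),
\end{equation}
and from these the four claims (convexity on $[0,\infty)$ from $\erf\geq 0$ there, the uniform lower bound $D_0'(U)\geq D_0'(0)=2c_d>0$ since $D_0''$ has the sign of $U$, smoothness from $D_0'''=4c_d e^{-U^2}\in C^\infty$, and oddness via $u\mapsto -u$) follow exactly as you say. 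Note, however, that the paper does not give its own proof of this lemma --- it simply calls it easy and cites Caprino et al.~\cite{Caprino2006} --- so there is no in-paper argument to compare against; your calculation is the natural, self-contained verification. One cosmetic remark: you write that ``every further derivative is a polynomial in $U$ times $e^{-U^2}$,'' which is accurate only from $D_0'''$ onward (neither $D_0'$ nor $D_0''$ has that form), but that is plenty to conclude $D_0\in C^\infty(\mathbb{R})$.
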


I use the following notation: $C_0=D_{0}'(0)$, $C_{\gamma}=D_{0}'(\gamma)$ ($0<\gamma \leq 1$), $t_{\gamma}=(\log \gamma^{-1})/C_{\gamma}$ and
\begin{equation}
  \label{w}
  w_{\varepsilon,\kappa,d}(t)=\frac{1}{(1+t)^{d+2}}\left( \frac{1}{\sqrt{1+t/(\varepsilon \kappa)}}+\frac{1}{1+t/\kappa} \right)^{3-d}.
\end{equation}
Note that $0<C_0<C_{\gamma} \leq D_{0}'(1)$ by Lemma~\ref{lem:D0}. Using the convention that $1/0=\infty$ and $1/\infty=0$,
\begin{equation}
  \label{w_epsilon=0}
  w_{0,\kappa,d}(t)=\frac{1}{(1+t)^{d+2}(1+t/\kappa)^{3-d}}.
\end{equation}
Moreover, in the limit of $\kappa \to \infty$,
\begin{equation}
  \label{w_kappa=infty}
  \lim_{\kappa \to \infty}w_{\varepsilon,\kappa,d}(t)=\frac{2^{3-d}}{(1+t)^{d+2}}.
\end{equation}

The following theorem is the main result of this paper, which gives the long time behavior of $V(t)$.

\begin{theorem}
  \label{thm:large_kappa}
  Suppose that $\kappa \geq 1$ and $\varepsilon \leq \kappa C_0/4$. Then for $\gamma>0$ sufficiently small, there exists a solution $(f,V)$ to eqs.~\eqref{Lorentz}, \eqref{init_gas} and~\eqref{specular}; and eq.~\eqref{Newton} with $V_0=\gamma$ satisfying the following inequalities:
  \begin{align} 
    V(t) & \geq \gamma e^{-C_{\gamma}t}-\gamma^3 A_1 w_{\varepsilon,\kappa,d}(t)e^{-\frac{\varepsilon}{2\kappa}t}, \label{lower_thm1} \\
    V(t) & \leq \gamma e^{-C_0 t}-\gamma^5 A_2 w_{\varepsilon,\kappa,d}(t)e^{-\frac{\varepsilon}{\kappa}t}\bm{1}_{\{ t\geq 2t_{\gamma} \}}, \label{upper_thm1}
  \end{align}
  where $A_1$ and $A_2$ are positive constants depending only on $d$. Moreover, any solution $(f,V)$ satisfies these inequalities and $V$ is decreasing on the interval $[0,t_{\gamma}]$.
\end{theorem}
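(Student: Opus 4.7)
The plan is to reduce the coupled system to a scalar integro-differential equation $V'(t) = -D_0(V(t)) - R(t)$, where the remainder $R(t)$ carries all the memory content of the drag, and then to compare this equation against the autonomous ODE $u'(t) = -D_0(u(t))$ via sharp two-sided estimates on $R(t)$.

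First, using the semi-explicit solution formula from Lemma~\ref{lem:fW_formula} together with the specular boundary condition~\eqref{specular}, I would trace characteristics of~\eqref{Lorentz} backwards from $\partial C(t)$. Each trajectory either (i) reaches the initial slice $\Omega(0)$ without touching the body, contributing $f_0$; (ii) is reset to the Maxwellian $\pi^{-3/2}e^{-|\xi|^2}$ by the source term of~\eqref{Lorentz}, weighted by the damping $e^{-\nu_\varepsilon(|\xi|)(t-s)/\kappa}$; or (iii) terminates on $\partial C(\tau)$ for some $0\leq \tau<t$. After insertion into~\eqref{drag_rewritten} and comparison with~\eqref{D0_origin}, contributions (i) and (ii) together reproduce $f_0$ on the relevant boundary region (collisional damping and source integrate to $1$ when the pre-collisional distribution is $f_0$), so they give exactly $D_0(V(t))$, yielding the decomposition $D(t) = D_0(V(t)) + R(t)$, where $R(t)$ is an explicit integral over the recolliding trajectories, each carrying damping $e^{-\nu_\varepsilon(|\xi|)(t-\tau)/\kappa}$ accumulated during obstacle-free flight.

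The heart of the argument is the two-sided estimate of $R(t)$: an upper bound $|R(t)| \leq \mathrm{const}\cdot \gamma^3 w_{\varepsilon,\kappa,d}(t)\,e^{-\varepsilon t/(2\kappa)}$ and a matching lower bound $R(t) \geq \mathrm{const}\cdot \gamma^5 w_{\varepsilon,\kappa,d}(t)\,e^{-\varepsilon t/\kappa}$ valid for $t\geq 2t_\gamma$. The factor $(1+t)^{-(d+2)}$ in $w_{\varepsilon,\kappa,d}$ arises from the Caprino--Marchioro geometric argument: integrating the Maxwellian weight in $\tilde\xi$ under the constraint that a molecule departing at time $\tau$ return to the body at time $t$ yields $(t-\tau)^{-(d+2)}$. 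The remaining $(3-d)$-power factor comes from the Gaussian integration in $\hat\xi \in \mathbb{R}^{3-d}$ against the damping $e^{-\nu_\varepsilon(|\xi|)(t-\tau)/\kappa}$: using the two-sided bounds~\eqref{nu_bounds}, the regime $|\hat\xi|\lesssim \varepsilon$ contributes the summand $(1+t/(\varepsilon\kappa))^{-1/2}$ per transverse direction (from a truly Gaussian damping of width $\sim \varepsilon^2/(t-\tau)$), while the regime $|\hat\xi|\gtrsim \varepsilon$ contributes $(1+t/\kappa)^{-1}$ (from an exponential damping of rate $\sim (t-\tau)/\kappa$).

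To close, I would combine these estimates with the convexity bounds $C_0 V \leq D_0(V) \leq C_\gamma V$ on $[0,\gamma]$ from Lemma~\ref{lem:D0}. Inequality~\eqref{lower_thm1} then follows by Gronwall applied to $V'(t) \geq -C_\gamma V(t) - |R(t)|$, and~\eqref{upper_thm1} follows, for $t\geq 2t_\gamma$, by Gronwall applied to $V'(t) \leq -C_0 V(t) - R(t)$ together with the lower bound on $R(t)$. Monotonicity of $V$ on $[0,t_\gamma]$ is immediate because there $D_0(V(t)) > |R(t)|$ when $\gamma$ is small, so $V'(t)<0$. Existence of a solution is obtained from a standard contraction-mapping argument in a closed subset of $C([0,\infty))$ endowed with a weighted norm compatible with~\eqref{lower_thm1}--\eqref{upper_thm1}, and the estimates above show that any solution automatically lies in this set. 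The principal obstacle will be the lower bound on $R(t)$: the recollision integral is high-dimensional and its sign is not manifest, so one must isolate a subregion of the $(\tau,\xi)$-space on which the integrand is definitely positive and control cancellations from the complementary region at a strictly higher order in $\gamma$; this is exactly what accounts for the jump from the $\gamma^3$ size of $|R(t)|$ to the smaller $\gamma^5$ size of its positive part appearing in~\eqref{upper_thm1}.
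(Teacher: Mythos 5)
Your decomposition $D(t)=D_0(V(t))+R(t)$, the two-sided size estimates on the recollision remainder $R$ ($|R|\lesssim\gamma^3 w_{\varepsilon,\kappa,d}\,e^{-\varepsilon t/(2\kappa)}$, $R\gtrsim\gamma^5 w_{\varepsilon,\kappa,d}\,e^{-\varepsilon t/\kappa}$ for $t\geq 2t_\gamma$), and the Gronwall-type closure against the linearized rate $K(W)\in[C_0,C_\gamma]$ all match the structure of the paper's proof, and your explanation of the geometric origin of the $(1+t)^{-(d+2)}$ and $(3-d)$-power factors in $w_{\varepsilon,\kappa,d}$ is correct. However, the existence argument is a genuine gap: you invoke ``a standard contraction-mapping argument,'' but the map $W\mapsto V_W$ is not known to be a contraction --- indeed the paper (Remark~\ref{rem:thm1}(ii)) explicitly states that uniqueness of the solution is unknown, which would contradict any contraction estimate. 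The difficulty is that $r_W^\pm$ depends on $W$ through the recollision times $\tau_k(W)$, which vary discontinuously and extremely sensitively with $W$ (grazing collisions, appearance and disappearance of recollision events), so Lipschitz continuity of $W\mapsto r_W^\pm$ in sup norm is not available. The paper instead defines $V_W$ as the solution of the \emph{linear} equation~\eqref{VW} (linearizing the drag via $K(W)V_W$ rather than $D_0(V_W)$), shows $W\in\mathcal{K}\Rightarrow V_W\in\mathcal{K}$, proves \emph{continuity} of the map (not Lipschitz) in $C_b([0,\infty))$, and then obtains compactness of $\mathcal{K}$ from the uniform Lipschitz bound $|W'|\leq 1$ and the uniform decay in~\eqref{W_lower}--\eqref{W_upper} via the Arzel\`a--Ascoli theorem on a non-compact domain; Schauder's fixed point theorem then yields existence without uniqueness.

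A second, smaller gap: you assert that ``the estimates above show that any solution automatically lies in this set,'' but this is not automatic and requires its own argument. For an a priori solution $(f,V)$ one does not know in advance that $V\in\mathcal{K}$, so the recollision estimates (which require $W\in\mathcal{K}$) cannot be applied directly. The paper handles this in Section~\ref{sec:any_solution} by a continuation (bootstrap) argument: letting $T$ be the first exit time from the bounds~\eqref{lower_thm1}--\eqref{upper_thm1}, the strict inequalities obtained in the fixed-point step (see~\eqref{VWinK_step1} and~\eqref{VWinK_step2}) imply that $V$ cannot exit at $T$, and a parallel argument controls the monotonicity threshold $T'$; this forces $T=+\infty$. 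You need something of this form rather than a one-line appeal to the a priori estimates.
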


\begin{remark}
  \label{rem:thm1}
  \begin{enumerate}[label=(\roman*)]
    \item If $\varepsilon \neq 0$, $V(t)$ decays exponentially; if $\varepsilon=0$, $V(t)$ decays algebraically with a rate $-5$ (i.e., $V(t)\approx t^{-5}$), which is independent of the spatial dimension $d$ (see also the discussion in Section~\ref{sec:discussion}); in the limit of $\kappa \to \infty$, $V(t)$ decays algebraically with a rate $-(d+2)$, which is exactly the result in the free molecular case~\cite{Caprino2007}.
    \item The uniqueness of the solution is unknown as in~\cite{Caprino2006,Caprino2007}; however, at least the uniqueness of the long time behavior is guaranteed by the theorem.
    \item $V(t)$ changes its sign: $V(t)>0$ for $t\leq t_{\gamma}$ and $V(t)<0$ for $t\geq 8t_{\gamma}$ (taking $\gamma$ sufficiently small if necessary). This is similar to the free molecular case~\cite{Caprino2007}.
    \item If $\varepsilon$ is too large, Theorem~\ref{thm:large_kappa} does not hold; in fact, if $\varepsilon \geq 2\kappa C_0$, then $V(t)$ is always positive and decays monotonically and exponentially. This theorem is stated in Section~\ref{sec:discussion} (Theorem~\ref{thm:small_kappa}) and proved in the appendix.
  \end{enumerate}
\end{remark}

\section{Proof of Theorem~\ref{thm:large_kappa}}
\label{sec:proof_thm:kappa_large}
First, I set the notation and explain the outline of the proof.

Let $W\colon [0,\infty)\to \mathbb{R}$ be an arbitrary Lipschitz continuous function and put $X_W(t)=\int_{0}^{t}W(s)\, ds$. Define $C_W(t)$ as the right hand side of eq.~\eqref{C(t)} but with $X(t)$ replaced by $X_W(t)$; similarly, define $I_{W}^{\pm}(t)$ as the right hand side of eq.~\eqref{I_pm} but with $C(t)$, $X(t)$ and $V(t)$ replaced by $C_W(t)$, $X_W(t)$ and $W(t)$.

Denote by $f=f_W$ the solution to eqs.~\eqref{Lorentz}, \eqref{init_gas} (with $\Omega(t)$ replaced by $\Omega_W(t)=\mathbb{R}^d \backslash C_W(t)$) and the specular boundary condition
\begin{equation}
  \label{specular_W}
  f(x,\xi,t)=f(x,\xi-2[(\xi-W(t)\be_1)\cdot \bn]\bn,t)
\end{equation}
for $x\in C_W(t)$, $\xi \in \mathbb{R}^3$ with $(\xi-W(t)\be_1 )\bn>0$ and $t>0$, where $\bn$ is now the unit normal to $\partial C_W(t)$. $f_W$ is constructed explicitly by the method of characteristics in Section~\ref{sec:characteristics}.

Using $f_W$, I shall define another function $V_W \colon [0,\infty)\to \mathbb{R}$ as follows: First, define $r_{W}^{\pm}(t)$ by
\begin{equation}
  \label{rpm}
  r_{W}^{\pm}(t)=\pm 2\int_{I_{W}^{\pm}(t)}(\xi_1-W(t))^2 (f_W-f_0)\, d\xi dS.
\end{equation}
Next, define $K\colon \mathbb{R}\to [0,\infty)$ by
\begin{equation}
  \label{K}
  K(U)=D_0(U)/U
\end{equation}
for $U\neq 0$; if $U=0$, define $K(0)=D_{0}'(0)$. Now, define $V_W \colon [0,\infty)\to \mathbb{R}$ by solving the equations
\begin{equation}
  \label{VW}
  \frac{dV_W(t)}{dt}=-K(W(t))V_W(t)-r_{W}^{+}(t)-r_{W}^{-}(t),\quad V_W(0)=\gamma.
\end{equation}
This is solved explicitly:
\begin{equation}
  \label{VW_formula}
  V_W(t)=\gamma e^{-\int_{0}^{t}K(W(s))\, ds}-\int_{0}^{t}e^{-\int_{s}^{t}K(W(\tau))\, d\tau}(r_{W}^{+}(s)+r_{W}^{-}(s))\, ds.
\end{equation}

Suppose that $V$ is a fixed point of the map $W\mapsto V_W$. Then $(f_V,V)$ solves eqs.~\eqref{Lorentz}, \eqref{init_gas} and~\eqref{specular}; and eq.~\eqref{Newton} with $V_0=\gamma$. This is verified easily using eq.~\eqref{D0_origin}.

To show the existence of a fixed point, the map $W\mapsto V_W$ must be analyzed in a suitable function space:

\begin{definition}
  \label{def:space1}
  Let $\gamma$, $A_1$ and $A_2$ be positive constants. A Lipschitz continuous function $W\colon [0,\infty)\to \mathbb{R}$ belongs to $\mathcal{K}=\mathcal{K}(\gamma,A_1,A_2)$ if $W(0)=\gamma$; $W$ is decreasing on the interval $[0,t_{\gamma}]$; and satisfies
  \begin{align}
    W(t) & \geq \gamma e^{-C_{\gamma}t}-\gamma^3 A_1 w_{\varepsilon,\kappa,d}(t)e^{-\frac{\varepsilon}{2\kappa}t}, \label{W_lower} \\
    W(t) & \leq \gamma e^{-C_0 t}-\gamma^5 A_2 w_{\varepsilon,\kappa,d}(t)e^{-\frac{\varepsilon}{\kappa}t}\bm{1}_{\{ t\geq 2t_{\gamma} \}} \label{W_upper}
  \end{align}
  and $|dW(t)/dt|\leq 1$.\footnote{The dependence of $\mathcal{K}$ on the parameters $\varepsilon$, $\kappa$ and $d$ is omitted for notational simplicity.}
\end{definition}

The plan of the proof is as follows: First, I prove appropriate decay estimates of $r_{W}^{\pm}(t)$ in Section~\ref{sec:rWpm_bounds} with the help of preparations in Sections~\ref{sec:characteristics}, \ref{sec:xi_estimates} and~\ref{sec:integral_estimates}. By using the decay estimates, I show in Section~\ref{sec:fixed_point} that $W\in \mathcal{K}$ implies $V_W \in \mathcal{K}$ if $\gamma$ is sufficiently small. Then Schauder's fixed point theorem is applied to show the existence of a fixed point $V\in \mathcal{K}$ (Section~\ref{sec:fixed_point}), and $(f_V,V)$ is a solution to the equations; since $V\in \mathcal{K}$, $V$ satisfies ineqs.~\eqref{lower_thm1} and \eqref{upper_thm1} by definition~\ref{def:space1}. Finally, ``any solution'' part of Theorem~\ref{thm:large_kappa} is proved in Section~\ref{sec:any_solution}.

\subsection{Analysis of $f_W$ by the Method of Characteristics}
\label{sec:characteristics}
The solution $f=f_W$ to eqs.~\eqref{Lorentz}, \eqref{init_gas} and~\eqref{specular_W} can be constructed by the method of characteristics as follows.

Let $W\colon [0,\infty)\to \mathbb{R}$ be an arbitrary Lipschitz continuous function and let $(x,\xi)\in I_{W}^{\pm}(t)$. The characteristics $(x(s),\xi(s))=(x(s,t;x,\xi),\xi(s,t;x,\xi))$ starting from $(x,\xi)$ are defined as follows: Let $x(s)=x-(t-s)\tilde{\xi}$ and $\xi(s)=\xi$ for $s \leq t$ until $x(s)$ hits the boundary $\partial C_W(s)$ --- denote this time of \textit{recollision} (or \textit{precollision}) as $\tau_1$; if such recollision do not occur at positive time, then put $\tau_1=0$. Thus $(x(s),\xi(s))$ is defined for $\tau_1 \leq s\leq t$.

If $\tau_1>0$, define the specularly reflected velocity $\xi'(\tau_1)\in \mathbb{R}^3$ by
\begin{equation}
  \label{reflected_velocity}
  \xi'(\tau_1)=(2W(\tau_1)-\xi_1,\xi_{\perp},\hat{\xi}).
\end{equation}
Then extend the characteristics as follows: Let $x(s)=x(\tau_1)-(\tau_1-s)\tilde{\xi}'(\tau_1)$ and $\xi(s)=\xi'(\tau_1)$ for $s<\tau_1$ until $x(s)$ again hits the boundary --- denote this time of recollision as $\tau_2$; if such recollision do not occur at positive time, then put $\tau_2=0$. Thus $(x(s),\xi(s))$ is defined for $\tau_2 \leq s<\tau_1$.

Repeat this to define $\tau_n$ until $\tau_{N+1}=0$ for some $N\geq 0$. Such $N$ exists for a.e. $(x,\xi)\in I_{W}^{\pm}(t)$ for each $t>0$ by~\cite[Proposition A.1]{Caprino2006}: Infinite or tangential (meaning that $\xi(\tau_n)=W(\tau_n)$) recollisions are measure theoretically negligible.

Using $\{ \tau_n \}$ constructed above, an explicit formula for $f_W$ by the following lemma:

\begin{lemma}
  \label{lem:fW_formula}
  If $\tau_1>0$, then
  \begin{align}
    \label{fW_formula}
    \begin{aligned}
      & f_W(x,\xi,t)-f_0(\xi) \\
      & =\sum_{n=1}^{N}(f_0(\xi'(\tau_n))-f_0(\xi(\tau_{n})))\prod_{i=1}^{n}\exp\left( -\frac{\nu_{\varepsilon}(|\xi(\tau_i)|)}{\kappa}(\tau_{i-1}-\tau_{i}) \right).
    \end{aligned}
  \end{align}
  for a.e. $(x,\xi)\in I_{W}^{\pm}(t)$; if $\tau_1=0$, then $f_W(x,\xi,t)=f_0(\xi)$.
\end{lemma}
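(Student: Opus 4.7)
The approach is the method of characteristics. Along the characteristic $(x(s),\xi(s))$ defined in the excerpt, equation~\eqref{Lorentz} reduces to the scalar linear ODE
\begin{equation*}
  \frac{d}{ds}f_W(x(s),\xi(s),s) \;=\; \frac{\nu_\varepsilon(|\xi(s)|)}{\kappa}\bigl[f_0(\xi(s)) - f_W(x(s),\xi(s),s)\bigr],
\end{equation*}
whose coefficient is piecewise constant in $s$ because $\xi(s)$ jumps only at the reflection times $\tau_n$. Formula~\eqref{fW_formula} should therefore follow by integrating the ODE explicitly on each leg $[\tau_i,\tau_{i-1}]$ (with the convention $\tau_0\eqdef t$, $\tau_{N+1}\eqdef 0$), gluing successive legs through the specular boundary condition~\eqref{specular_W}, and terminating with the initial condition~\eqref{init_gas}. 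The telescoping sum with $n$ exponential factors in the $n$-th summand is exactly the signature one expects from $n$ successive reflected legs.

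The three-step plan is as follows. First, on a single leg with constant velocity $v=\xi(\tau_i)$, integration of the ODE yields
\begin{equation*}
  f_W(x(\tau_{i-1}),v,\tau_{i-1}^-) - f_0(v) = \bigl(f_W(x(\tau_i),v,\tau_i^+)-f_0(v)\bigr)\,e^{-\nu_\varepsilon(|v|)(\tau_{i-1}-\tau_i)/\kappa}.
\end{equation*}
Second, at each reflection time $\tau_n$ I invoke~\eqref{specular_W} to identify $f_W(x(\tau_n),\xi(\tau_n),\tau_n)$ with $f_W(x(\tau_n),\xi'(\tau_n),\tau_n)$ and insert the trivial decomposition
\begin{equation*}
  f_W(x(\tau_n),\xi'(\tau_n),\tau_n) - f_0(\xi(\tau_n)) = \bigl(f_W(x(\tau_n),\xi'(\tau_n),\tau_n) - f_0(\xi'(\tau_n))\bigr) + \bigl(f_0(\xi'(\tau_n))-f_0(\xi(\tau_n))\bigr);
\end{equation*}
the last bracket is exactly the $n$-th summand of~\eqref{fW_formula} once the exponential factors from the first $n$ legs have been accumulated, while the first bracket feeds into the next iteration on $[\tau_{n+1},\tau_n]$, where the constant velocity is $\xi'(\tau_n)=\xi(\tau_{n+1})$. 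Third, after $N$ iterations one is left with the residual $\bigl(f_W(x(\tau_N),\xi'(\tau_N),\tau_N) - f_0(\xi'(\tau_N))\bigr)\prod_{i=1}^{N} e^{-\nu_\varepsilon(|\xi(\tau_i)|)(\tau_{i-1}-\tau_i)/\kappa}$; integrating the ODE on the final leg $[0,\tau_N]$ with constant velocity $\xi'(\tau_N)$ and using $f_W(\cdot,\cdot,0)=f_0$ from~\eqref{init_gas} makes this residual vanish. The case $\tau_1=0$ is the same computation carried out once on $[0,t]$ and produces $f_W(x,\xi,t)=f_0(\xi)$.

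The only genuine obstacle is a careful treatment of the boundary at the reflection times. One must check (i) that $f_W$ has well-defined traces along the characteristic at each $\tau_n$, (ii) that both $\xi(\tau_n)-W(\tau_n)\be_1$ and $\xi'(\tau_n)-W(\tau_n)\be_1$ have normal components of the correct opposite signs so that~\eqref{specular_W} applies in the form used above, and (iii) that the backward construction terminates after finitely many steps. All three points are handled by restricting to the full-measure set of $(x,\xi)\in I_W^{\pm}(t)$ on which the backward trajectory has only finitely many, uniformly transverse reflections, which is precisely the content of \cite[Proposition~A.1]{Caprino2006} cited in the excerpt. Once this set is fixed, what remains is the bookkeeping sketched above.
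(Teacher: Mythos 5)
Your proposal is correct and takes essentially the same approach as the paper: both integrate the linear ODE obtained along characteristics (Duhamel on each leg), glue legs via the specular boundary condition~\eqref{specular_W}, telescope by inserting $\pm f_0(\xi'(\tau_n))$, and terminate via the initial condition~\eqref{init_gas}, restricting to the full-measure set of~\cite[Proposition~A.1]{Caprino2006}. The only difference is cosmetic: you write out the scalar ODE and its integration explicitly where the paper asserts eq.~\eqref{f-f0_tau1} ``easily follows.''
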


\begin{proof}
  By eq.~\eqref{Lorentz} (with $\Omega(t)$ replaced by $\Omega_W(t)=\mathbb{R}^d \backslash C_W(t)$), it easily follows that
  \begin{align}
    \label{f-f0_tau1}
    \begin{aligned}
      & f_W(x,\xi,t)-f_0(\xi) \\
      & =(f_W(x(\tau_1),\xi(\tau_1),\tau_1)-f_0(\xi))\exp\left( -\frac{\nu_{\varepsilon}(|\xi|)}{\kappa}(t-\tau_1) \right).
    \end{aligned}
  \end{align}
  If $\tau_1=0$ (meaning that no recollision occurred), then the right hand side equals zero by eq.~\eqref{init_gas}; if $\tau_1>0$, then eq.~\eqref{specular_W} implies
  \begin{equation}
    \label{fW_reflection}
    f_W(x(\tau_1),\xi(\tau_1),\tau_1)=f_W(x(\tau_1),\xi'(\tau_1),\tau_1)
  \end{equation}
  since $\bn(x(\tau_1),\tau_1)=\pm \be_1$ in this case.
  
  Again, by eq.~\eqref{Lorentz},
  \begin{align}
    \label{f-f0_tau1_tau2}
    \begin{aligned}
      & f_W(x(\tau_1),\xi'(\tau_1),\tau_1)-f_0(\xi'(\tau_1)) \\
      & =(f_W(x(\tau_2),\xi(\tau_2),\tau_2)-f_0(\xi'(\tau_1)))\exp\left( -\frac{\nu_{\varepsilon}(|\xi'(\tau_1)|)}{\kappa}(\tau_1-\tau_2) \right).
    \end{aligned}
  \end{align}
  Inserting eqs.~\eqref{fW_reflection} and~\eqref{f-f0_tau1_tau2} into \eqref{f-f0_tau1} gives
  \begin{align}
    \label{f-f0_tau2}
    \begin{aligned}
      & f_W(x,\xi,t)-f_0(\xi) \\
      & =(f_0(\xi'(\tau_1))-f_0(\xi))\exp\left( -\frac{\nu_{\varepsilon}(|\xi|)}{\kappa}(t-\tau_1) \right) \\
      & \quad +(f_W(x(\tau_2),\xi(\tau_2),\tau_2)-f_0(\xi'(\tau_1))) \\
      & \quad \quad \times \exp\left( -\frac{\nu_{\varepsilon}(|\xi|)}{\kappa}(t-\tau_1) \right) \exp\left( -\frac{\nu_{\varepsilon}(|\xi'(\tau_1)|)}{\kappa}(\tau_1-\tau_2) \right).
    \end{aligned}
  \end{align}
  Repeating this argument and using eq.~\eqref{init_gas} gives eq.~\eqref{fW_formula}. \qed
\end{proof}

Equation~\eqref{fW_formula} can be used to prove the following bound of $f_W-f_0$.

\begin{lemma}
  \label{lem:fW_bound}
  For a.e. $(x,\xi)\in I_{W}^{\pm}(t)$,
  \begin{equation}
    \label{fW_bound}
    |f_W(x,\xi,t)-f_0(\xi)|\leq \pi^{-3/2}e^{-|(\xi_{\perp},\hat{\xi})|^2}\exp\left( -\frac{\nu_{\varepsilon}(|\xi|)}{\kappa}(t-\tau_1) \right) \bm{1}_{\{ \tau_1>0 \}}.
  \end{equation}
  Here, $\{ \tau_1>0 \}=\{ (x,\xi)\in I_{W}^{\pm}(t) \mid \tau_1>0 \}$.
\end{lemma}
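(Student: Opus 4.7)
\medskip

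\noindent\textbf{Proof plan.} My plan is to feed the explicit formula~\eqref{fW_formula} into a bound that exploits two features: the specular reflection changes only the $\xi_1$-component of the velocity (so $|(\xi_\perp,\hat\xi)|$ is conserved along the whole broken-line characteristic), and the consecutive velocities satisfy $\xi(\tau_{n+1})=\xi'(\tau_n)$. The first feature lets me pull out a single Gaussian factor $\pi^{-3/2}e^{-|(\xi_\perp,\hat\xi)|^2}$ from every $f_0$ appearing in~\eqref{fW_formula}. The case $\tau_1=0$ is immediate from the ``if $\tau_1=0$'' part of Lemma~\ref{lem:fW_formula}, which also supplies the indicator $\bm{1}_{\{\tau_1>0\}}$.

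When $\tau_1>0$, write $a_n=e^{-\xi_1(\tau_n)^2}$, $b_n=e^{-\xi_1'(\tau_n)^2}$, and $P_n=\prod_{i=1}^n E_i$ with $E_i=\exp\bigl(-\nu_\varepsilon(|\xi(\tau_i)|)(\tau_{i-1}-\tau_i)/\kappa\bigr)$. Then~\eqref{fW_formula} becomes
\begin{equation*}
f_W(x,\xi,t)-f_0(\xi)=\pi^{-3/2}e^{-|(\xi_\perp,\hat\xi)|^2}\sum_{n=1}^N(b_n-a_n)P_n.
\end{equation*}
The target bound therefore reduces to showing $|S|\le P_1=E_1$, where $S$ denotes the sum above. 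Two ingredients are available: $P_n$ is non-increasing because $\nu_\varepsilon\ge 0$ (hence $E_i\le 1$), and the construction of the characteristic gives the chaining relation $a_{n+1}=b_n$ for $n<N$.

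I will obtain $|S|\le P_1$ by summation by parts. Splitting $S=\sum_{n=1}^{N-1}(a_{n+1}-a_n)P_n+(b_N-a_N)P_N$ and rearranging, one arrives at
\begin{equation*}
S=-a_1P_1+\sum_{m=2}^{N}a_m(P_{m-1}-P_m)+b_NP_N.
\end{equation*}
Because $a_m,b_N\in[0,1]$ and $P_{m-1}-P_m\ge 0$, the lower bound is $S\ge -a_1P_1\ge -P_1$, while the upper bound is
\begin{equation*}
S\le\sum_{m=2}^N(P_{m-1}-P_m)+P_N=P_1.
\end{equation*}
Combining gives $|S|\le P_1=\exp(-\nu_\varepsilon(|\xi|)(t-\tau_1)/\kappa)$, which is the desired estimate after multiplying by the pulled-out Gaussian factor.

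The genuinely non-trivial step is the Abel summation: a crude pointwise bound on each $|b_n-a_n|P_n$ would only yield $|S|\lesssim\sum_n P_n$, which is far too weak (it loses the sharp single-$E_1$ decay needed later for the recollision estimates in Section~\ref{sec:rWpm_bounds}). Everything else is bookkeeping: recognizing that only $\xi_1$ is affected by reflection, that the $a$'s and $b$'s are bounded by $1$ because they are values of the Gaussian $e^{-u^2}$, and that monotonicity of $P_n$ comes for free from $\nu_\varepsilon\ge 0$.
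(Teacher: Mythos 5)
Your proof is correct and takes a genuinely different route from the paper's. The paper establishes the chain $W(\tau_{k-1})>\xi_1(\tau_k)>W(\tau_k)>\xi_1'(\tau_k)=\xi_1(\tau_{k+1})$, locates the ``turning index'' $k_0$ (the first time $W(\tau_k)<0$), observes that $f_0(\xi'(\tau_k))-f_0(\xi(\tau_k))$ is nonnegative for $k<k_0$ and negative for $k\geq k_0$, then bounds the alternating sum separately from above (dropping the negative tail and telescoping the nonnegative head with the chaining identity $\xi(\tau_{k+1})=\xi'(\tau_k)$) and from below (dropping the nonnegative head). Your Abel-summation argument bypasses the sign analysis entirely: it uses only the chaining relation $a_{n+1}=b_n$, the trivial bounds $a_n,b_n\in[0,1]$, and the monotonicity of $P_n$, so the argument works uniformly with no case split on the turning index and no appeal to the monotone structure of the $\xi_1$-chain (ineqs.~\eqref{xi1_W}--\eqref{f0_sign2}). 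The paper's version makes the physical mechanism (the velocity magnitude shrinks until the body starts moving backward) more visible, but your version is shorter and more robust; both yield the same sharp single-$E_1$ bound.
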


\begin{proof}
  By Lemma~\ref{lem:fW_formula}, $\tau_1=0$ implies $f_W(x,\xi,t)=f_0(\xi)$ and ineq.~\eqref{fW_bound} trivially holds; so I assume $\tau_1>0$ in the following.

  Let $(x,\xi)\in I_{W}^{+}(t)$. (The case of $(x,\xi)\in I_{W}^{-}(t)$ is similar.) Note that by definition $W(t)>\xi_1$. And for a recollision to occur at time $\tau_1>0$, the rigid body must be moving faster than the molecule at time $\tau_1$: $\xi_1=\xi_1(\tau_1)>W(\tau_1)$; therefore, $W(\tau_1)>\xi_{1}'(\tau_1)=\xi_1(\tau_2)$ by eq.~\eqref{reflected_velocity}. Repeating this argument leads to
  \begin{equation}
    \label{xi1_W}
    W(\tau_{k-1})>\xi_1(\tau_k)>W(\tau_k)>\xi_{1}'(\tau_k)=\xi_1(\tau_{k+1})
  \end{equation}
  for $1\leq k\leq N$, where I used the convention that $\tau_0=t$.

  Suppose there exists $1\leq k\leq N$ such that $W(\tau_k)<0$. And let $k_0$ be the smallest such $k$; if such $k$ does not exist, let $k_0=N+1$. Then
  \begin{align}
    W(\tau_k) & \geq 0 \quad (1\leq k<k_0), \label{W_sign1} \\
    W(\tau_k) & <0 \quad (k_0\leq k\leq N) \label{W_sign2}
  \end{align}
  by ineqs.~\eqref{xi1_W}. Since
  \begin{equation}
    \label{xi_prime_squared}
    |\xi_{1}'(\tau_k)|^2=|\xi_1(\tau_k)|^2+4W(\tau_k)(W(\tau_k)-\xi_1(\tau_k)),
  \end{equation}
  ineqs.~\eqref{W_sign1} and~\eqref{W_sign2} imply
  \begin{align}
    |\xi_{1}'(\tau_k)| & \leq |\xi_1(\tau_k)| \quad (1\leq k<k_0), \label{xi_sign1} \\
    |\xi_{1}'(\tau_k)| & >|\xi_1(\tau_k)| \quad (k_0\leq k\leq N). \label{xi_sign2}
  \end{align}
  These are equivalent to
  \begin{align}
    f_0(\xi'(\tau_k)) & \geq f_0(\xi(\tau_k)) \quad (1\leq k<k_0), \label{f0_sign1} \\
    f_0(\xi'(\tau_k)) & <f_0(\xi(\tau_k)) \quad (k_0 \leq k\leq N); \label{f0_sign2}
  \end{align}
  therefore, by Lemma~\ref{lem:fW_formula},
  \begin{align}
    \label{f-f0_upper}
    \begin{aligned}
      & f_W(x,\xi,t)-f_0(\xi) \\
      & \leq \sum_{k=1}^{k_0-1}(f_0(\xi'(\tau_k))-f_0(\xi(\tau_k)))\exp\left( -\frac{\nu_{\varepsilon}(|\xi(\tau_1)|)}{\kappa}(t-\tau_1) \right) \\
      & =(f_0(\xi'(\tau_{k_0-1}))-f_0(\xi))\exp\left( -\frac{\nu_{\varepsilon}(|\xi|)}{\kappa}(t-\tau_1) \right) \\
      & \leq \pi^{-3/2}e^{-|(\xi_{\perp},\hat{\xi})|^2}\exp\left( -\frac{\nu_{\varepsilon}(|\xi|)}{\kappa}(t-\tau_1) \right).
    \end{aligned}
  \end{align}
  Similarly,
  \begin{align}
    \label{f-f0_lower}
    \begin{aligned}
      & f_W(x,\xi,t)-f_0(\xi) \\
      & \geq \sum_{k=k_0}^{N}(f_0(\xi'(\tau_k))-f_0(\xi(\tau_k)))\exp\left( -\frac{\nu_{\varepsilon}(|\xi(\tau_1)|)}{\kappa}(t-\tau_1) \right) \\
      & =(f_0(\xi'(\tau_{N}))-f_0(\xi(\tau_{k_0})))\exp\left( -\frac{\nu_{\varepsilon}(|\xi|)}{\kappa}(t-\tau_1) \right) \\
      & \geq \pi^{-3/2}e^{-|(\xi_{\perp},\hat{\xi})|^2}\exp\left( -\frac{\nu_{\varepsilon}(|\xi|)}{\kappa}(t-\tau_1) \right).
    \end{aligned}
  \end{align}
  These inequalities prove the lemma. \qed
\end{proof}

\subsection{Bounds of $\xi$ Leading to Recollisions}
\label{sec:xi_estimates}
Let $W\in \mathcal{K}=\mathcal{K}(\gamma,A_1,A_2)$ and $(x,\xi)\in I_{W}^{\pm}(t)$. In order for the characteristics $(x(s),\xi(s))$ starting from $(x,\xi)$ to have at least one recollision ($\tau_1>0$), $\xi$ must satisfy certain bounds, which I explain in the following.

First, note that if $\tau_1>0$, then
\begin{equation}
  \label{xi1_W_rel}
  (t-\tau_1)\xi_1=\int_{0}^{t}W(s)\, ds.
\end{equation}
Note that if I use a notation
\begin{equation}
  \label{Waver}
  \langle W \rangle_{s,t}=\frac{1}{t-s}\int_{s}^{t}W(\tau)\, d\tau \quad (0\leq s\leq t),\footnote{$\langle W \rangle_{t,t}=W(t)$.}
\end{equation}
then eq.~\eqref{xi1_W_rel} is equivalent to $\xi_1=\langle W \rangle_{\tau_1,t}$.

\begin{lemma}
  \label{lem:xi_estimates}
  Let $W\in \mathcal{K}$, $0<\eta<1$ and $(x,\xi)\in I_{W}^{\pm}(t)$. If $0<\tau_1 \leq \eta t$, then
  \begin{equation}
    \label{xi1_estimate1}
    -C\gamma^3 A_1 \frac{e^{-\frac{\varepsilon}{2\kappa}\tau_1}}{1+t}\leq \xi_1 \leq C\frac{\gamma}{1+t}
  \end{equation}
  and
  \begin{equation}
    \label{xiperp_estimate}
    |\xi_{\perp}|\leq \frac{C}{t},
  \end{equation}
  where $C$ is a positive constant depending only on $\eta$ and $d$; if $\eta t<\tau_1<t$, then
  \begin{equation}
    \label{xi1_estimate2}
    -C\gamma^3 A_1 w_{\varepsilon,\kappa,d}(t)e^{-\frac{\varepsilon}{2\kappa}\eta t}\leq \xi_1 \leq \gamma e^{-C_0 \eta t}.
  \end{equation}
\end{lemma}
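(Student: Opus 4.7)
The starting point is the identity $\xi_1=\langle W\rangle_{\tau_1,t}$: on the same face of $\partial C_W$, matching $x_1$ at times $\tau_1$ and $t$ gives $(t-\tau_1)\xi_1=X_W(t)-X_W(\tau_1)=\int_{\tau_1}^{t}W(s)\,ds$, which is eq.~\eqref{xi1_W_rel} expressed in the notation of~\eqref{Waver}. My strategy is to average the pointwise bounds~\eqref{W_lower} and \eqref{W_upper} from the definition of $\mathcal{K}$ over the interval $[\tau_1,t]$, splitting according to whether $\tau_1$ is small compared to $t$.

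In the regime $0<\tau_1\le\eta t$, the averaging factor $1/(t-\tau_1)\le 1/((1-\eta)t)$ already supplies decay in $t$. For the upper bound I drop the non-positive correction in~\eqref{W_upper} to get $W(s)\le\gamma e^{-C_0 s}$; integrating yields $\xi_1\le\gamma\min\{1,(C_0(1-\eta)t)^{-1}\}\le C\gamma/(1+t)$ after handling $t\le 1$ and $t>1$ separately. For the lower bound I drop the non-negative main term of~\eqref{W_lower}, so that
\begin{equation*}
  \xi_1\ge -\frac{\gamma^3 A_1}{t-\tau_1}\int_{\tau_1}^{t} w_{\varepsilon,\kappa,d}(s)\,e^{-\varepsilon s/(2\kappa)}\,ds,
\end{equation*}
and then estimate the right-hand side by $-C\gamma^3 A_1 e^{-\varepsilon\tau_1/(2\kappa)}/(1+t)$. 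For $t\ge 1$ I pull out $e^{-\varepsilon\tau_1/(2\kappa)}$ and use $\int_0^\infty w_{\varepsilon,\kappa,d}(s)\,ds\le 2^{3-d}/(d+1)$, which follows from $w_{\varepsilon,\kappa,d}(s)\le 2^{3-d}(1+s)^{-(d+2)}$; for $t\le 1$ I instead bound the integrand pointwise at $s=\tau_1$ to avoid the small denominator. The $|\xi_\perp|$ estimate is purely geometric: since recollision is at a face, both $x_\perp$ and $x_\perp(\tau_1)$ lie in $\{|y|\le 1\}$, so $(t-\tau_1)|\xi_\perp|=|x_\perp-x_\perp(\tau_1)|\le 2$, giving $|\xi_\perp|\le 2/((1-\eta)t)$.

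In the regime $\eta t<\tau_1<t$, the integration window lies in $[\eta t,t]$, so pointwise bounds at the left endpoint suffice. Both $\gamma e^{-C_0 s}$ and $w_{\varepsilon,\kappa,d}(s)e^{-\varepsilon s/(2\kappa)}$ are decreasing in $s$, hence $\xi_1\le\gamma e^{-C_0\eta t}$ (matching the claim on the nose, with constant $1$) and $\xi_1\ge -\gamma^3 A_1 w_{\varepsilon,\kappa,d}(\eta t)e^{-\varepsilon\eta t/(2\kappa)}$ follow immediately. To convert the latter to the stated bound in terms of $w_{\varepsilon,\kappa,d}(t)$, I use the elementary comparison $w_{\varepsilon,\kappa,d}(\eta t)\le C(\eta)\,w_{\varepsilon,\kappa,d}(t)$, obtained by estimating each factor in the definition~\eqref{w}: $[(1+t)/(1+\eta t)]^{d+2}\le\eta^{-(d+2)}$ and an analogous $\eta^{-(3-d)}$ bound on the bracketed factor raised to $3-d$, via $(1+t/\kappa)/(1+\eta t/\kappa)\le\eta^{-1}$ and $(1+t/(\varepsilon\kappa))/(1+\eta t/(\varepsilon\kappa))\le\eta^{-1}$.

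The main obstacle is purely bookkeeping: forcing the factor $1/(1+t)$ to appear cleanly in the Case~1 lower bound (where averaging naturally gives $1/(t-\tau_1)$ only) is what requires the small-$t$/large-$t$ split, and the $w_{\varepsilon,\kappa,d}$-comparison in Case~2 is a direct but parameter-heavy computation through~\eqref{w}. Finally, opposite-face recollisions (particle crossing from one face to the other), when they occur, merely add the positive shift $h/(t-\tau_1)$ to $\xi_1$, which trivially improves the lower bounds; the upper bound then reverts to the kinematic constraint $\xi_1<W(t)$ on $I_W^+(t)$ (symmetrically for $I_W^-(t)$), which gives $\xi_1\le\gamma e^{-C_0 t}$ via~\eqref{W_upper} and is easily dominated by the stated right-hand sides.
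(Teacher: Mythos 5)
Your proof is correct and follows essentially the same route as the paper: apply the two-sided bounds \eqref{W_lower}--\eqref{W_upper} defining $\mathcal{K}$ to the identity $\xi_1=\langle W\rangle_{\tau_1,t}$, average over $[\tau_1,t]$, and split on $\tau_1\lessgtr\eta t$ to trade the $1/(t-\tau_1)$ factor for $1/(1+t)$ in the first regime and left-endpoint evaluation plus $w_{\varepsilon,\kappa,d}(\eta t)\le C(\eta)w_{\varepsilon,\kappa,d}(t)$ in the second; the paper's proof is a terse version of yours that states the two averaged inequalities and the $w$-comparison and leaves the small-$t$/large-$t$ bookkeeping implicit, with the same geometric bound $|x_\perp-(t-\tau_1)\xi_\perp|\le 1$ for $\xi_\perp$. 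Your closing remark on opposite-face recollisions flags a point the paper silently skips in stating eq.~\eqref{xi1_W_rel}, and the fix you sketch (the $\pm h/(t-\tau_1)$ shift together with the kinematic constraint $\xi_1\lessgtr W(t)$ on $I_W^\pm(t)$, dominated by the claimed bounds) closes it.
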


\begin{proof}
  Applying ineq.~\eqref{W_lower} to eq.~\eqref{xi1_W_rel} gives
  \begin{align}
    \label{xi1_lower}
    \begin{aligned}
      \xi_1
      & \geq -\frac{\gamma^3 A_1}{t-\tau_1}\int_{\tau_1}^{t}w_{\varepsilon,\kappa,d}(s)e^{-\frac{\varepsilon}{2\kappa}s}\, ds.
    \end{aligned}
  \end{align}
  Similarly, ineq.~\eqref{W_upper} gives
  \begin{equation}
    \label{xi1_upper}
    \xi_1 \leq \frac{1}{t-\tau_1}\int_{\tau_1}^{t}\gamma e^{-C_0 s} \, ds.
  \end{equation}
  These show ineqs.~\eqref{xi1_estimate1} and \eqref{xi1_estimate2}. (Note that $w_{\varepsilon,\kappa,d}(\eta t)\leq Cw_{\varepsilon,\kappa,d}(t)$ for some positive constant $C$.)

  Moreover, $\xi_{\perp}$ must satisfy $|x_{\perp}-(t-\tau_1)\xi_{\perp}|\leq 1$ if $\tau_1>0$; therefore,
  \begin{equation}
    \label{xiperp_estimate_step1}
    |\xi_{\perp}|\leq \frac{2}{t-\tau_1}.
  \end{equation}
  This shows ineq.~\eqref{xiperp_estimate} if $0<\tau_1<\eta t$. \qed
\end{proof}

\begin{remark}
  \label{rem:xi_estimates}
  Retaining the term $e^{-\varepsilon \tau_1/(2\kappa)}$ in the lower bound of ineq.~\eqref{xi1_estimate1} plays an important role in proving the upper bound or $|r_{W}^{+}(t)$ (Section~\ref{sec:|rWp|_upper}).
\end{remark}

\subsection{Bounds of an Integral Involving $\nu_{\varepsilon}(z)$}
\label{sec:integral_estimates}
A certain integral involving $\nu_{\varepsilon}(z)$ appears in the estimates of $r_{W}^{\pm}(t)$. To give bounds of this integral is the purpose of this section.

First, I need a preliminary lemma.

\begin{lemma}
  \label{lem:integral_preliminary}
  Let $d,\sigma \in \{ 1,2 \}$. The integral
  \begin{equation}
    \label{Idsigma}
    I_{d,\sigma}(t)=\int_{\hat{\xi}\in \mathbb{R}^{3-d}}e^{-|\hat{\xi}|^2}e^{-|\hat{\xi}|^{\sigma}t}\, d\hat{\xi}
  \end{equation}
  satisfies
  \begin{equation}
    \label{Idsigma_estimates}
    \frac{C^{-1}}{(1+t)^{(3-d)/\sigma}}\leq I_{d,\sigma}(t)\leq \frac{C}{(1+t)^{(3-d)/\sigma}}
  \end{equation}
  for some positive constant $C$.
\end{lemma}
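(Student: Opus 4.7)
The plan is to establish matching upper and lower bounds for $I_{d,\sigma}(t)$ by splitting the analysis at $t=1$ and rescaling the integration variable in the large-$t$ regime. Throughout, the fact that the parameters vary only over the finite set $(d,\sigma)\in\{1,2\}\times\{1,2\}$ will let me absorb all dimension- and exponent-dependent constants into a single $C$.

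In the range $0\leq t\leq 1$, the prefactor $(1+t)^{(3-d)/\sigma}$ lies between $1$ and $2^{(3-d)/\sigma}$, so it suffices to bound $I_{d,\sigma}(t)$ above and below by positive constants. The upper bound follows at once from $e^{-|\hat{\xi}|^{\sigma}t}\leq 1$ together with $\int_{\mathbb{R}^{3-d}}e^{-|\hat{\xi}|^2}\,d\hat{\xi}=\pi^{(3-d)/2}$. For the lower bound, since $t\leq 1$ gives $e^{-|\hat{\xi}|^{\sigma}t}\geq e^{-|\hat{\xi}|^{\sigma}}$, one has $I_{d,\sigma}(t)\geq \int_{\mathbb{R}^{3-d}}e^{-|\hat{\xi}|^2-|\hat{\xi}|^{\sigma}}\,d\hat{\xi}>0$.

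In the range $t\geq 1$, I would substitute $\hat{\xi}=t^{-1/\sigma}\eta$, obtaining
\[
I_{d,\sigma}(t)=t^{-(3-d)/\sigma}\int_{\mathbb{R}^{3-d}} e^{-|\eta|^{2}t^{-2/\sigma}}\,e^{-|\eta|^{\sigma}}\,d\eta.
\]
The upper bound then follows from $e^{-|\eta|^{2}t^{-2/\sigma}}\leq 1$ and the finiteness of $\int_{\mathbb{R}^{3-d}}e^{-|\eta|^{\sigma}}\,d\eta$ for $\sigma\in\{1,2\}$. For the lower bound, restrict the integration to the unit ball $|\eta|\leq 1$; on this set $|\eta|^2 t^{-2/\sigma}\leq 1$ because $t\geq 1$ and $\sigma\geq 1$, so $e^{-|\eta|^{2}t^{-2/\sigma}}\geq e^{-1}$, and hence $I_{d,\sigma}(t)\geq c\,t^{-(3-d)/\sigma}$ for a positive constant $c$. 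Since $t$ and $1+t$ are comparable for $t\geq 1$, both bounds take the required form with $(1+t)^{-(3-d)/\sigma}$.

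There is no substantive obstacle here — the argument is a direct estimation, and the only subtlety is the routine one of selecting a single $C$ uniformly in $(d,\sigma)$, accomplished by taking the maximum/minimum of the finitely many explicit constants produced above. The conceptual point worth highlighting is that the natural scale of $\hat{\xi}$ in the integrand is $t^{-1/\sigma}$, which is exactly the scale that produces the exponent $(3-d)/\sigma$ in the final bound.
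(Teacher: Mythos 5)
Your proof is correct, and it takes a genuinely different route from the paper's. The paper proceeds case-by-case over the four pairs $(d,\sigma)$: for $(d,\sigma)=(1,2)$ it computes $I_{1,2}(t)=\pi/(1+t)$ exactly; for $(d,\sigma)=(1,1)$ it completes the square, rewrites the integral in terms of the complementary error function, and derives the asymptotics $I_{1,1}(t)=2\pi/t^2+O(t^{-3})$ via repeated integration by parts; and for $d=2$ it squares the integral and uses the equivalence of $|x|+|y|$ and $\sqrt{x^2+y^2}$ to reduce to the $d=1$ computations. You instead isolate the scaling structure once and for all: after handling $t\leq 1$ by bounding the integrand between two fixed Gaussian-type profiles, you substitute $\hat{\xi}=t^{-1/\sigma}\eta$ for $t\geq 1$, which extracts the factor $t^{-(3-d)/\sigma}$ outright and reduces the claim to the two-sided boundedness of $\int e^{-|\eta|^2 t^{-2/\sigma}}e^{-|\eta|^\sigma}\,d\eta$, obtained by dropping or restricting the first factor. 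Your argument is more economical (no special functions, no asymptotic expansions, no reduction step for $d=2$), it makes the origin of the exponent $(3-d)/\sigma$ transparent, and it would extend immediately to any $\sigma>0$ and any dimension; the paper's computations, on the other hand, yield the sharp leading constants and asymptotics as a byproduct, though the lemma as stated does not need them.
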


\begin{proof}
  First, let us consider the case of $d=1$:
  \begin{equation}
    \label{I1sigma}
    I_{1,\sigma}(t)=2\pi \int_{0}^{\infty}re^{-r^2}e^{-r^{\sigma}t}\, dr.
  \end{equation}
  If $\sigma=2$, then
  \begin{equation}
    \label{I12}
    I_{1,2}(t)=2\pi \int_{0}^{\infty}re^{-(1+t)r^2}\, dr=\frac{\pi}{1+t},
  \end{equation}
  which shows ineq.~\eqref{Idsigma_estimates}. If $\sigma=1$, then
  \begin{align}
    \label{I11}
    \begin{aligned}
      I_{1,1}(t)
      & =2\pi \int_{0}^{\infty}re^{-(r+t/2)^2}e^{t^2/4}\, dr \\
      & =2\pi e^{t^2/4}\int_{t/2}^{\infty}(\lambda-t/2)e^{-\lambda^2}\, d\lambda \\
      & =\pi \left( 1-te^{t^2/4}\int_{t/2}^{\infty}e^{-\lambda^2}\, d\lambda \right).
    \end{aligned}
  \end{align}
  Repeated use of the relation $e^{-\lambda^2}=-(2\lambda)^{-1}(e^{-\lambda^2})'$ shows that
  \begin{equation}
    \label{erfc_asymptotic_expansion}
    te^{t^2/4}\int_{t/2}^{\infty}e^{-\lambda^2}\, d\lambda=1-\frac{2}{t^2}+\frac{3}{4}te^{t^2/4}\int_{t/2}^{\infty}\frac{1}{\lambda^4}e^{-\lambda^2}\, d\lambda;
  \end{equation}
  therefore,
  \begin{equation}
    \label{I11_continued}
    I_{1,1}(t)=\frac{2\pi}{t^2}+O(t^{-3}) \quad \text{as $t\to \infty$},
  \end{equation}
  which shows ineqs.~\eqref{Idsigma_estimates}.

  The case of $d=2$ is reduced to that of $d=1$ by using
  \begin{equation}
    \label{I2sigma_squared}
    (I_{2,\sigma}(t))^2=2\pi \int_{\mathbb{R}^2}e^{-(x^2+y^2)}e^{-(|x|^{\sigma}+|y|^{\sigma})t}\, dxdy
  \end{equation}
  and $(|x|+|y|)/\sqrt{2}\leq \sqrt{x^2+y^2}\leq |x|+|y|$. \qed
\end{proof}

The following lemma is the main result of this section.

\begin{lemma}
  \label{lem:integral}
  Let
  \begin{equation}
    \label{J}
    J=J(\tilde{\xi},t)=\int_{\hat{\xi}\in \mathbb{R}^{3-d}}e^{-|\hat{\xi}|^2}e^{-\frac{\nu_{\varepsilon}(|\xi|)}{\kappa}t}\, d\hat{\xi}.
  \end{equation}
  Then
  \begin{equation}
    \label{integral_estimate1}
    J\leq C\left( \frac{1}{\sqrt{1+t/(\varepsilon \kappa)}}+\frac{1}{1+t/\kappa} \right)^{3-d} e^{-\frac{\varepsilon}{\kappa}t}
  \end{equation}
  and
  \begin{equation}
    \label{integral_estimate2}
    J\geq C^{-1}e^{-C'|\tilde{\xi}|\frac{t}{\kappa}}\left( \frac{1}{\sqrt{1+t/(\varepsilon \kappa)}}+\frac{1}{1+t/\kappa} \right)^{3-d} e^{-\frac{\varepsilon}{\kappa}t}
  \end{equation}
  for some constants $C,C'\geq 1$.
\end{lemma}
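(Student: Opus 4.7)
The starting point is bounds~\eqref{nu_bounds}, which I use by writing $\nu_\varepsilon(|\xi|) = \varepsilon + (\nu_\varepsilon(|\xi|) - \varepsilon)$, with $\nu_\varepsilon(|\xi|) - \varepsilon$ comparable to $|\xi|^2/(\varepsilon+|\xi|)$. Factoring $e^{-\nu_\varepsilon(|\xi|)t/\kappa} = e^{-\varepsilon t/\kappa}\,e^{-(\nu_\varepsilon(|\xi|)-\varepsilon)t/\kappa}$ immediately extracts the common factor $e^{-\varepsilon t/\kappa}$ appearing in both~\eqref{integral_estimate1} and~\eqref{integral_estimate2}. The remaining task is to control the integral $\int e^{-|\hat{\xi}|^2} e^{-c t\,|\xi|^2/(\kappa(\varepsilon+|\xi|))}\,d\hat{\xi}$ from above and below by constant multiples of $(a+b)^{3-d}$, with $a = (1+t/(\varepsilon\kappa))^{-1/2}$ and $b = (1+t/\kappa)^{-1}$.

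For the upper bound, I would exploit that $r\mapsto r^2/(\varepsilon+r)$ is increasing and $|\xi|\geq|\hat{\xi}|$, so the lower bound of~\eqref{nu_bounds} yields a $\tilde{\xi}$-independent integrand $e^{-|\hat{\xi}|^2}\,e^{-c t\,|\hat{\xi}|^2/(\kappa(\varepsilon+|\hat{\xi}|))}$. I then split $\mathbb{R}^{3-d}$ into $\{|\hat{\xi}|\leq\varepsilon\}$, where $|\hat{\xi}|^2/(\varepsilon+|\hat{\xi}|)\geq |\hat{\xi}|^2/(2\varepsilon)$, and $\{|\hat{\xi}|\geq\varepsilon\}$, where $|\hat{\xi}|^2/(\varepsilon+|\hat{\xi}|)\geq |\hat{\xi}|/2$. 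Extending each piece to all of $\mathbb{R}^{3-d}$ identifies them with multiples of $I_{d,2}(ct/(\varepsilon\kappa))$ and $I_{d,1}(ct/\kappa)$; Lemma~\ref{lem:integral_preliminary} combined with the elementary inequality $x^k+y^k\leq 2(x+y)^k$ (valid for $x,y\geq 0$ and $k\in\{0,1,2\}$) then gives~\eqref{integral_estimate1}.

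For the lower bound I need a decomposition that isolates $|\tilde{\xi}|$ as a prefactor. The key algebraic step is
\[
\frac{|\xi|^2}{\varepsilon+|\xi|} \leq \frac{|\tilde{\xi}|^2}{\varepsilon+|\tilde{\xi}|} + \frac{|\hat{\xi}|^2}{\varepsilon+|\hat{\xi}|} \leq |\tilde{\xi}| + \frac{|\hat{\xi}|^2}{\varepsilon+|\hat{\xi}|},
\]
which follows from $\varepsilon+|\xi|\geq\varepsilon+\max(|\tilde{\xi}|,|\hat{\xi}|)$ and $r^2/(\varepsilon+r)\leq r$. Combined with the upper bound of~\eqref{nu_bounds} this produces the prefactor $e^{-C'|\tilde{\xi}|t/\kappa}$ demanded by~\eqref{integral_estimate2}. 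The remaining $\hat{\xi}$-integral I would bound from below in two ways, using either $|\hat{\xi}|^2/(\varepsilon+|\hat{\xi}|)\leq |\hat{\xi}|^2/\varepsilon$ or $|\hat{\xi}|^2/(\varepsilon+|\hat{\xi}|)\leq |\hat{\xi}|$; Lemma~\ref{lem:integral_preliminary} then yields lower bounds by constant multiples of $a^{3-d}$ and $b^{3-d}$ respectively, and the reverse elementary inequality $x^k+y^k\geq 2^{1-k}(x+y)^k$ (for $k\in\{0,1,2\}$) finishes the proof. The only genuinely delicate step is the displayed decoupling estimate; every other step is either~\eqref{nu_bounds} or a routine application of Lemma~\ref{lem:integral_preliminary}.
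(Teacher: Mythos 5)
Your proposal is correct and follows essentially the same route as the paper: extract the factor $e^{-\varepsilon t/\kappa}$, use~\eqref{nu_bounds} to reduce to the rational function $\alpha_{\varepsilon}(z)=z^2/(\varepsilon+z)$, remove the $\tilde\xi$-dependence, compare $\alpha_{\varepsilon}$ to $z$ and to $z^2/\varepsilon$, and invoke Lemma~\ref{lem:integral_preliminary}. The one place you depart is the decoupling step in the lower bound: the paper bounds $\alpha_{\varepsilon}(|\xi|)\leq\alpha_{\varepsilon}(|\tilde\xi|+|\hat\xi|)$ via the triangle inequality and then applies the almost-subadditivity inequality $\alpha_{\varepsilon}(z+w)\leq 2(\alpha_{\varepsilon}(z)+\alpha_{\varepsilon}(w))$, whereas you use the exact Pythagorean decomposition $|\xi|^2=|\tilde\xi|^2+|\hat\xi|^2$ to write
\[
\frac{|\xi|^2}{\varepsilon+|\xi|}=\frac{|\tilde\xi|^2}{\varepsilon+|\xi|}+\frac{|\hat\xi|^2}{\varepsilon+|\xi|}\leq\frac{|\tilde\xi|^2}{\varepsilon+|\tilde\xi|}+\frac{|\hat\xi|^2}{\varepsilon+|\hat\xi|},
\]
which is slightly cleaner (no auxiliary subadditivity lemma, no factor of $2$). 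Your subsequent step of bounding the $\hat\xi$-integral from below by \emph{each} of the two comparison integrals $I_{d,1}$, $I_{d,2}$ separately and then taking the maximum is also clean and correct, and is in fact a more transparent implementation of the last inequality in the paper's eq.~\eqref{integral_estimates_step2}. So: same strategy, with a modest simplification in the lower-bound decoupling.
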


\begin{proof}
  First, I prove the upper bound~\eqref{integral_estimate1}. By the lower bound in ineqs.~\eqref{nu_bounds},
  \begin{align}
    \label{integral_estimates_step1}
    \begin{aligned}
      e^{-\frac{\nu_{\varepsilon}(|\xi|)}{\kappa}t}
      & \leq e^{-\frac{\varepsilon}{\kappa}t}e^{-C^{-1}\frac{|\xi|^2}{\varepsilon+|\xi|}\frac{t}{\kappa}} \\
      & \leq e^{-\frac{\varepsilon}{\kappa}t}\left( e^{-\frac{|\xi|}{2C}\frac{t}{\kappa}}+e^{-\frac{|\xi|^2}{2C\varepsilon}\frac{t}{\kappa}} \right) \\
      & \leq e^{-\frac{\varepsilon}{\kappa}t}\left( e^{-\frac{|\hat{\xi}|}{2C}\frac{t}{\kappa}}+e^{-\frac{|\hat{\xi}|^2}{2C\varepsilon}\frac{t}{\kappa}} \right).
    \end{aligned}
  \end{align}
  Now Lemma~\ref{lem:integral_preliminary} gives the upper bound~\eqref{integral_estimate1}.

  The lower bound~\eqref{integral_estimate2} is proved as follows. Let
  \begin{equation}
    \label{alpha}
    \alpha_{\varepsilon}(z)=\frac{z^2}{\varepsilon+z} \quad (z\geq 0).
  \end{equation}
  Then $\alpha_{\varepsilon}(z)$ is increasing in $z$ and satisfies
  \begin{align}
    \alpha_{\varepsilon}(z) & \leq z, \label{alpha_bound1} \\
    \alpha_{\varepsilon}(z) & \leq \frac{z}{2}+\frac{z^2}{2\varepsilon}, \label{alpha_bound2} \\
    \alpha_{\varepsilon}(z+w) & \leq 2(\alpha_{\varepsilon}(z)+\alpha_{\varepsilon}(w)) \label{alpha_almost_subadditive}
  \end{align}
  for $z,w\geq 0$. By using these properties of $\alpha_{\varepsilon}(z)$ and the upper bound in ineqs.~\eqref{nu_bounds},
  \begin{align}
    \label{integral_estimates_step2}
    \begin{aligned}
      e^{-\frac{\nu_{\varepsilon}(|\xi|)}{\kappa}t}
      & \geq e^{-\frac{\varepsilon}{\kappa}t}e^{-C\alpha_{\varepsilon}(|\xi|)\frac{t}{\kappa}} \\
      & \geq e^{-\frac{\varepsilon}{\kappa}t}e^{-C\alpha_{\varepsilon}(|\tilde{\xi}|+|\hat{\xi}|)\frac{t}{\kappa}} \\
      & \geq e^{-\frac{\varepsilon}{\kappa}t}e^{-2C\alpha_{\varepsilon}(|\tilde{\xi}|)\frac{t}{\kappa}}e^{-2C\alpha_{\varepsilon}(|\hat{\xi}|)\frac{t}{\kappa}} \\
      & \geq e^{-\frac{\varepsilon}{\kappa}t}e^{-2C|\tilde{\xi}|\frac{t}{\kappa}}\left( e^{-C|\hat{\xi}|\frac{t}{\kappa}}+e^{-C\frac{|\hat{\xi}|^2}{\varepsilon}\frac{t}{\kappa}} \right).
    \end{aligned}
  \end{align}
  Lemma~\ref{lem:integral_preliminary} gives the lower bound~\eqref{integral_estimate2}. \qed
\end{proof}

\subsection{Decay Estimates of $r_{W}^{\pm}(t)$}
\label{sec:rWpm_bounds}
\subsubsection{Upper Bound of $|r_{W}^{+}(t)|$}
\label{sec:|rWp|_upper}
In this section, I prove an upper bound of $|r_{W}^{+}(t)|$, which is given in the following proposition.

\begin{proposition}
  \label{prop:|rWp|_upper}
  Let $\kappa \geq 1$, $\varepsilon \leq \kappa C_0/4$ and $W\in \mathcal{K}(\gamma,A_1,A_2)$. If $\gamma$ is sufficiently small so that $\gamma \leq \min \{ 1,1/A_1 \}$ and $C_0/C_{\gamma}\geq 9/10$, then
  \begin{equation}
    \label{|rWp|_upper}
    |r_{W}^{+}(t)|\leq C\gamma^{\frac{21}{4}}w_{\varepsilon,\kappa,d}(t)e^{-\frac{\varepsilon}{\kappa}t}\bm{1}_{\{ t\geq t_{\gamma} \}}
  \end{equation}
  for some positive constant $C$ independent of $\gamma$, $A_1$ and $A_2$.
\end{proposition}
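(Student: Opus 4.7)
The plan is to substitute the pointwise bound from Lemma~\ref{lem:fW_bound} into the definition~\eqref{rpm} of $r_W^+(t)$, integrate out the spectator variable $\hat\xi$ via Lemma~\ref{lem:integral}, and reduce the remaining integral over $(x,\xi_1,\xi_\perp)\in I_W^+(t)\cap\{\tau_1>0\}$ using the recollision bounds of Lemma~\ref{lem:xi_estimates}. The $\hat\xi$-integration supplies the weight $(1/\sqrt{1+(t-\tau_1)/(\varepsilon\kappa)}+1/(1+(t-\tau_1)/\kappa))^{3-d}e^{-\varepsilon(t-\tau_1)/\kappa}$, which after a split in $\tau_1$ will become the target $w_{\varepsilon,\kappa,d}(t)e^{-\varepsilon t/\kappa}$.

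I would then split $\{\tau_1>0\}$ into $\{0<\tau_1\leq \eta t\}$ and $\{\eta t<\tau_1<t\}$ for a fixed $\eta\in(0,1)$ to be chosen. In the first subregime $t-\tau_1\asymp t$, so Lemma~\ref{lem:integral} gives the $t$-weight directly; ineq.~\eqref{xiperp_estimate} restricts $|\xi_\perp|\leq C/t$, contributing $C/t^{d-1}$ to the $\xi_\perp$-integration; and I would combine Lemma~\ref{lem:xi_estimates} (absorbing $\gamma^3A_1\leq \gamma^2$) with the constraint $\xi_1<W(t)$ and the bound $W(t)\leq \gamma e^{-C_0 t}\leq \gamma^{1+C_0/C_\gamma}$ from ineq.~\eqref{W_upper} (valid since $t\geq t_\gamma$) to reduce the $\xi_1$-range to $[-C\gamma^2/(1+t),\,W(t)]$. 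This yields
\begin{equation*}
\int (\xi_1-W(t))^2\, d\xi_1 \lesssim \bigl(W(t)+\gamma^2/(1+t)\bigr)^3 \lesssim \gamma^3 e^{-3C_0 t}+\gamma^6/(1+t)^3.
\end{equation*}
In the second subregime, ineq.~\eqref{xi1_estimate2} gives $|\xi_1|\leq \gamma e^{-C_0\eta t}$, so $(\xi_1-W(t))^2\leq C\gamma^2 e^{-2C_0\eta t}$ and the $\xi_1$-interval has length $\leq C\gamma e^{-C_0\eta t}$, giving a total contribution of at most $C\gamma^3 e^{-3C_0\eta t}$ times bounded surface and spectator integrals.

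I would then convert exponential $t$-decay into powers of $\gamma$ using $e^{-C_0 t}\leq \gamma^{C_0/C_\gamma}e^{-C_0(t-t_\gamma)}$ for $t\geq t_\gamma$, together with the standing hypothesis $C_0/C_\gamma\geq 9/10$. In the first subregime, the $W(t)^3\leq \gamma^3e^{-3C_0 t}$ term picks up $\gamma^{27/10}$, and the residual exponential absorbs both $e^{\varepsilon t/\kappa}$ (using $\varepsilon\leq \kappa C_0/4$) and the polynomial weight $w_{\varepsilon,\kappa,d}(t)$; the polynomial term $\gamma^6/(1+t)^3$, multiplied by $C/t^{d-1}$ and the $\hat\xi$-factor, is already shaped like $\gamma^{21/4}w_{\varepsilon,\kappa,d}(t)e^{-\varepsilon t/\kappa}$ with margin to spare. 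In the second subregime, $\gamma^3 e^{-3C_0\eta t}=\gamma^{3+3\eta C_0/C_\gamma}e^{-3C_0\eta(t-t_\gamma)}$; choosing $\eta$ so that $3\eta C_0/C_\gamma\geq 9/4$ (feasible since $C_0/C_\gamma\geq 9/10$, e.g.\ $\eta=5/6$) gives the required exponent $21/4=3+9/4$, and the surplus exponential in $t-t_\gamma$ dominates the polynomial weight $w_{\varepsilon,\kappa,d}(t)e^{-\varepsilon t/\kappa}$ for $t\geq t_\gamma$.

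The hard part will be the precise bookkeeping of $\gamma$-exponents against $t$-weights: at the critical time $t=t_\gamma$ the weight $w_{\varepsilon,\kappa,d}(t_\gamma)$ is of order $(\log(1/\gamma))^{-(d+2)}$ rather than a power of $\gamma$, so every excess $\gamma$-power in either subregime has to be split carefully so that the residual exponential in $t-t_\gamma$ feeds the polynomial $t$-decay. The exponent $21/4$ is pinned exactly by the assumption $C_0/C_\gamma\geq 9/10$ (via $3+27/10$ in the first subregime or $3+3\eta C_0/C_\gamma$ in the second with $\eta=5/6$); the hypothesis $\varepsilon\leq \kappa C_0/4$ is used precisely to guarantee that the $\nu_\varepsilon$-decay from Lemma~\ref{lem:integral} can absorb the $e^{\varepsilon t/\kappa}$ mismatch produced when passing from $e^{-\varepsilon(t-\tau_1)/\kappa}$ to the target $e^{-\varepsilon t/\kappa}$.
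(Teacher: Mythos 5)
Your plan follows the same skeleton as the paper's proof (Lemma~\ref{lem:fW_bound} $\to$ Lemma~\ref{lem:integral} $\to$ Lemma~\ref{lem:xi_estimates}, with a split at $\tau_1\lessgtr\eta t$), so the architecture is sound. However, there is a concrete error in the second subregime, and a couple of smaller gaps.

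\textbf{The second subregime estimate is wrong as stated.} You claim that on $\{\eta t<\tau_1<t\}$, ineq.~\eqref{xi1_estimate2} gives $|\xi_1|\leq\gamma e^{-C_0\eta t}$. That is a misreading of the lemma: ineq.~\eqref{xi1_estimate2} only gives the \emph{upper} bound $\xi_1\leq\gamma e^{-C_0\eta t}$; the \emph{lower} bound is $\xi_1\geq-C\gamma^3 A_1 w_{\varepsilon,\kappa,d}(t)e^{-\varepsilon\eta t/(2\kappa)}$, which is \emph{algebraic} in $t$ (when $\varepsilon=0$ it is $\sim (1+t)^{-(d+2)}(1+t/\kappa)^{-(3-d)}$). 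For large $t$ this dominates $\gamma e^{-C_0\eta t}$, so both the bound on $(\xi_1-W(t))^2$ and the length of the $\xi_1$-interval are controlled by the $\gamma^3 A_1 w$ term, not by $\gamma e^{-C_0\eta t}$. Your conclusion ``total contribution $\leq C\gamma^3 e^{-3C_0\eta t}$'' is therefore false for large $t$. The paper handles this by bounding $0<W(t)-\xi_1\leq C\gamma^{7/4}w_{\varepsilon,\kappa,d}(t)e^{-\varepsilon t/(3\kappa)}$ (ineq.~\eqref{W-xi1_Rt''}), cubing it, and using $(w_{\varepsilon,\kappa,d}(t))^3\leq C\,w_{\varepsilon,\kappa,d}(t)$ and $\gamma A_1\leq 1$. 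Your estimate can be repaired along the same lines: split $(W(t)-\xi_1)^3\lesssim (W(t))^3+(\gamma^3 A_1 w e^{-\varepsilon\eta t/(2\kappa)})^3$; the first piece gives $\gamma^3 e^{-3C_0 t}\leq C\gamma^{21/4}w_{\varepsilon,\kappa,d}(t)e^{-\varepsilon t/\kappa}$ for $t\geq t_\gamma$ as you argued, and the second gives $\gamma^9A_1^3 w^3 e^{-3\varepsilon\eta t/(2\kappa)}\leq C\gamma^6 w_{\varepsilon,\kappa,d}(t)e^{-\varepsilon t/\kappa}$ using $\gamma A_1\leq 1$, $3\eta/2\geq 1$ and $w^3\leq Cw$.

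\textbf{Vanishing for $t<t_\gamma$ is not addressed.} Your target has the factor $\bm{1}_{\{t\geq t_\gamma\}}$, but you only invoke $t\geq t_\gamma$ in passing. The paper first proves $r_W^+(t)=0$ for $t<t_\gamma$: since $W\in\mathcal{K}$ is decreasing on $[0,t_\gamma]$, no recollision can occur ($\tau_1=0$) for $(x,\xi)\in I_W^+(t)$ with $t<t_\gamma$, hence $f_W=f_0$ by Lemma~\ref{lem:fW_formula}. This step is needed, not optional.

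\textbf{First subregime.} The paper keeps the factor $e^{-\varepsilon\tau_1/(2\kappa)}$ in the lower bound of ineq.~\eqref{xi1_estimate1} (see Remark~\ref{rem:xi_estimates}), which after squaring pairs with the $e^{-\varepsilon(t-\tau_1)/\kappa}$ from Lemma~\ref{lem:integral} to give $e^{-\varepsilon t/\kappa}$ \emph{exactly}, with no mismatch to absorb. The paper also uses a single uniform bound $W(t)\leq\gamma^{7/4}e^{-C_0 t/6}$ (from $t\geq t_\gamma$ and $C_0/C_\gamma\geq 9/10$), which makes the exponent $21/4=3\cdot 7/4$ appear cleanly. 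Your alternative of dropping the $e^{-\varepsilon\tau_1/(2\kappa)}$ factor and absorbing the resulting $e^{\varepsilon\tau_1/\kappa}$ mismatch via $\varepsilon\leq\kappa C_0/4$ can be made to work, but as written the bookkeeping (``picks up $\gamma^{27/10}$, residual absorbs both $e^{\varepsilon t/\kappa}$ and the polynomial weight'') is not a proof: at $t=t_\gamma$ the mismatch and the polynomial weight each cost a divergent-in-$\gamma$ factor ($\gamma^{-C_0/(4C_\gamma)}$ and $(\log\gamma^{-1})^{d+2}$ respectively), and you need to verify explicitly that the surplus $\gamma$-power covers them. The paper avoids this delicacy entirely.
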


\begin{proof}
  Suppose that $t<t_{\gamma}$. Then since $W$ is decreasing on the interval $[0,t_{\gamma}]$, $\tau_1=0$ for any $(x,\xi)\in I_{W}^{+}(t)$; therefore, $f_W(x,\xi,t)=f_0(\xi)$ by Lemma~\ref{lem:fW_formula}, and hence $r_{W}^{+}(t)=0$ by eq.~\eqref{rpm}. So I assume in the rest of the proof that $t\geq t_{\gamma}$. In particular, by ineq.~\eqref{W_upper} and $C_0/C_{\gamma}\geq 9/10$,
  \begin{equation}
    \label{W_t_geq_t0}
    W(t)\leq \gamma e^{-\frac{5}{6}C_0 t_{\gamma}}e^{-\frac{C_0}{6}t}=\gamma^{1+\frac{5C_0}{6C_{\gamma}}t}e^{-\frac{C_0}{6}t}\leq \gamma^{\frac{7}{4}}e^{-\frac{C_0}{6}t}.
  \end{equation}

  Now let
  \begin{equation}
    \label{Rt}
    R_t=\{ (x,\xi)\in I_{W}^{+}(t) \mid \tau_1>0 \}.
  \end{equation}
  By definition, $\tau_1=0$ for any $(x,\xi)\in I_{W}^{+}(t)\backslash R_t$, which implies $f_W(x,\xi,t)=f_0(\xi)$ by Lemma~\ref{lem:fW_formula}; therefore,
  \begin{equation}
    \label{rW+}
    r_{W}^{+}(t)=2\int_{R_t}(\xi_1-W(t))^2 (f_W-f_0)\, d\xi dS.
  \end{equation}
  Next, divide $R_t$ into two parts: $R_t=R_{t}'\cup R_{t}''$, where
  \begin{equation}
    \label{Rt_decomposition}
    R_{t}'=\{ (x,\xi)\in R_t \mid \tau_1 \leq 2t/3 \}, \quad R_{t}''=\{ (x,\xi)\in R_t \mid \tau_1>2t/3 \};
  \end{equation}
  correspondingly, $r_{W}^{+}(t)=r_{W,1}^{+}(t)+r_{W,2}^{+}(t)$, where
  \begin{align}
    r_{W,1}^{+}(t) & =2\int_{R_{t}'}(\xi_1-W(t))^2 (f_W-f_0)\, d\xi dS, \label{rW1p} \\
    r_{W,2}^{+}(t) & =2\int_{R_{t}''}(\xi_1-W(t))^2 (f_W-f_0)\, d\xi dS. \label{rW2p}
  \end{align}

  First, I prove the following:
  \begin{equation}
    \label{|rW1p|_upper}
    |r_{W,1}^{+}(t)|\leq C\gamma^{\frac{21}{4}}w_{\varepsilon,\kappa,d}(t)e^{-\frac{\varepsilon}{\kappa}t}.
  \end{equation}
  Let $(x,\xi)\in R_{t}'$ ($\tau_1 \leq 2t/3$). By Lemma~\ref{lem:xi_estimates} (the lower bound in ineqs.~\eqref{xi1_estimate1}), ineq.~\eqref{W_t_geq_t0}, $\varepsilon/\kappa \leq C_0/4$ and $\gamma A_1 \leq 1$,
  \begin{equation}
    \label{W-xi1_Rt'}
    0<W(t)-\xi_1 \leq C\left( \gamma^{\frac{7}{4}}e^{-\frac{C_0}{6}t}+\gamma^3 A_1 \frac{e^{-\frac{\varepsilon}{2\kappa}\tau_1}}{1+t} \right) \leq C\gamma^{\frac{7}{4}}\frac{e^{-\frac{\varepsilon}{2\kappa}\tau_1}}{1+t}.
  \end{equation}
  This and Lemma~\ref{lem:xi_estimates} (ineq.~\eqref{xiperp_estimate}) give
  \begin{align}
    \label{Rt'_inclusion}
    \begin{aligned}
      R_{t}'
      & \subset \{ x\in C_{W}(t) \mid x_1=X_{W}(t)+h/2 \} \\ 
      & \quad \times \{ |\xi_1-W(t)|\leq C\gamma^{7/4}/(1+t) \} \times \{ |\xi_{\perp}|\leq C/t \} \times \mathbb{R}_{\hat{\xi}}^{3-d}.
    \end{aligned}
  \end{align}
  Moreover, ineq.~\eqref{W-xi1_Rt'} and Lemma~\ref{lem:fW_bound} give
  \begin{equation}
    \label{rW1+_integrand}
    (\xi_1-W(t))^2 |f_W-f_0|\leq C\gamma^{\frac{7}{2}}\frac{e^{-\frac{\varepsilon}{\kappa}\tau_1}}{(1+t)^2}e^{-|(\xi_{\perp},\hat{\xi})|^2}e^{-\frac{\nu_{\varepsilon}(|\xi|)}{\kappa}(t-\tau_1)}.
  \end{equation}
  Combining this with Lemma~\ref{lem:integral} and ineq.~\eqref{rW1+_integrand} show (note that $\tau_1$ is independent of $\hat{\xi}$)
  \begin{align}
    \label{rW1+_integrand_xihat}
    \begin{aligned}
      & \int_{\hat{\xi}}(\xi_1-W(t))^2 |f_W-f_0|\, d\hat{\xi} \\
      & \leq C\gamma^{\frac{7}{2}}e^{-|\xi_{\perp}|^2}\frac{e^{-\frac{\varepsilon}{\kappa}\tau_1}}{(1+t)^2}\left( \frac{1}{\sqrt{1+t/(3\varepsilon \kappa)}}+\frac{1}{1+t/(3\kappa)} \right)^{3-d}e^{-\frac{\varepsilon}{\kappa}(t-\tau_1)} \\
      & \leq C\gamma^{\frac{7}{2}}\frac{e^{-|\xi_{\perp}|^2}}{(1+t)^2}\left( \frac{1}{\sqrt{1+t/(\varepsilon \kappa)}}+\frac{1}{1+t/\kappa} \right)^{3-d}e^{-\frac{\varepsilon}{\kappa}t}.
    \end{aligned}
  \end{align}
  Combining this and inclusion~\eqref{Rt'_inclusion} proves ineq.~\eqref{|rW1p|_upper}.
  
  Next, I prove the following:
  \begin{equation}
    \label{|rW2p|_upper}
    |r_{W,2}^{+}(t)|\leq C\gamma^{\frac{21}{4}}(w_{\varepsilon,\kappa,d}(t))^3 e^{-\frac{\varepsilon}{\kappa}t}.
  \end{equation}
  Let $(x,\xi)\in R_{t}''$ ($\tau_1>2t/3$). By Lemma~\ref{lem:xi_estimates} (the lower bound in ineqs.~\eqref{xi1_estimate2}), ineq.~\eqref{W_t_geq_t0}, $\varepsilon/\kappa \leq C_0/4$ and $\gamma A_1 \leq 1$,
  \begin{align}
    \label{W-xi1_Rt''}
    \begin{aligned}
      0<W(t)-\xi_1
      & \leq C\left( \gamma^{\frac{7}{4}}e^{-\frac{C_0}{6}t}+\gamma^3 A_1 w_{\varepsilon,\kappa,d}(t)e^{-\frac{\varepsilon}{3\kappa}t} \right) \\
      & \leq C\gamma^{\frac{7}{4}}w_{\varepsilon,\kappa,d}(t)e^{-\frac{\varepsilon}{3\kappa}t}.
    \end{aligned}
  \end{align}
  This implies in particular the inclusion
  \begin{align}
    \label{Rt''_inclusion}
    \begin{aligned}
      R_{t}''
      & \subset \{ x\in C_{W}(t) \mid x_1=X_{W}(t)+h/2 \} \\ 
      & \quad \times \{ |\xi_1-W(t)|\leq C\gamma^{7/4}w_{\varepsilon,\kappa,d}(t)e^{-\varepsilon t/(3\kappa)} \} \times \mathbb{R}^2.
    \end{aligned}
  \end{align}
  This and ineq.~\eqref{W-xi1_Rt''} together with $|f_W-f_0|\leq \pi^{-3/2}e^{-|(\xi_{\perp},\hat{\xi})|^2}$ (which follows from Lemma~\ref{lem:fW_bound}) prove ineq.~\eqref{|rW2p|_upper}.

  Combining ineqs.~\eqref{|rW1p|_upper} and~\eqref{|rW2p|_upper} proves the lemma. \qed
\end{proof}

\subsubsection{Upper Bound of $|r_{W}^{-}(t)|$}
\label{sec:|rWm|_upper}
In this section, I prove an upper bound of $|r_{W}^{-}(t)|$:

\begin{proposition}
  \label{prop:|rWm|_upper}
  Let $\kappa \geq 1$, $\varepsilon \leq \kappa C_0/4$ and $W\in \mathcal{K}(\gamma,A_1,A_2)$. If $\gamma$ is sufficiently small so that $\gamma \leq \min \{ 1,1/A_1 \}$, then
  \begin{equation}
    \label{|rWm|_upper}
    |r_{W}^{-}(t)|\leq C\gamma^3 w_{\varepsilon,\kappa,d}(t)e^{-\frac{5\varepsilon}{6\kappa}t}
  \end{equation}
  for some positive constant $C$ independent of $\gamma$, $A_1$ and $A_2$.
\end{proposition}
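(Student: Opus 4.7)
The proof will mirror the structure of Proposition~\ref{prop:|rWp|_upper} for $|r_W^+(t)|$. The key structural difference is that on $I_W^-(t)$ we have $\xi_1 > W(t)$ rather than $\xi_1 < W(t)$, so the bound $\xi_1 \leq C\gamma/(1+t)$ from Lemma~\ref{lem:xi_estimates} does not gain the extra smallness of $|\xi_1 - W(t)|$ that was exploited in Proposition~\ref{prop:|rWp|_upper}. This accounts for the weaker power $\gamma^3$ (as opposed to $\gamma^{21/4}$) in the stated bound, and correspondingly there is no need for the restriction $t \geq t_\gamma$.

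After restricting to the recollision set $R_t = I_W^-(t) \cap \{\tau_1 > 0\}$ (outside of which $f_W = f_0$ by Lemma~\ref{lem:fW_formula}), I would decompose $R_t = R_t' \cup R_t''$ with $R_t' = \{\tau_1 \leq t/6\}$ and $R_t'' = \{\tau_1 > t/6\}$. The threshold $\eta = 1/6$ (rather than $2/3$ as in Proposition~\ref{prop:|rWp|_upper}) is calibrated to produce the factor $e^{-5\varepsilon t/(6\kappa)}$: on $R_t'$ we have $t - \tau_1 \geq 5t/6$, and Lemma~\ref{lem:integral} applied to the pointwise bound $|f_W - f_0| \leq \pi^{-3/2} e^{-|(\xi_\perp,\hat\xi)|^2} e^{-\nu_\varepsilon(|\xi|)(t-\tau_1)/\kappa}$ from Lemma~\ref{lem:fW_bound} supplies exactly this exponential factor. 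On $R_t'$, Lemma~\ref{lem:xi_estimates} gives $\xi_1 \leq C\gamma/(1+t)$ and $|\xi_\perp| \leq C/(1+t)$. Since $\xi_1 > W(t)$, integrating $(\xi_1 - W(t))^2$ over the range $(W(t), C\gamma/(1+t)]$ contributes $C(\gamma^3/(1+t)^3 + |W(t)|^3)$; the Gaussian in $\xi_\perp$ over $|\xi_\perp| \leq C/(1+t)$ contributes $C/(1+t)^{d-1}$; and the $\hat\xi$-integral via Lemma~\ref{lem:integral} produces the $(3-d)$-power factor that completes $w_{\varepsilon,\kappa,d}(t)$.

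On $R_t''$, Lemma~\ref{lem:xi_estimates} provides the stronger $\xi_1 \leq \gamma e^{-C_0 t/6}$, but the bound on $|\xi_\perp|$ degenerates. I would therefore fall back to the crude $|f_W - f_0| \leq \pi^{-3/2} e^{-|(\xi_\perp,\hat\xi)|^2}$ and perform the Gaussian integrals over all of $\mathbb{R}^{3-1}$, obtaining $|r_{W,2}^-(t)| \leq C(\gamma^3 e^{-C_0 t/2} + |W(t)|^3)$. The assumption $\varepsilon \leq \kappa C_0/4$ guarantees $C_0/2 > 5\varepsilon/(6\kappa)$, so $e^{-C_0 t/2}$ is absorbed into $C w_{\varepsilon,\kappa,d}(t) e^{-5\varepsilon t/(6\kappa)}$ with a constant depending on $d$.

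The principal technical obstacle is controlling the $|W(t)|^3$-contributions that appear in both regimes. When $W(t) \geq 0$, Definition~\ref{def:space1} gives $|W(t)| \leq \gamma e^{-C_0 t}$ and the cube is again absorbed by $\varepsilon \leq \kappa C_0/4$. When $W(t) < 0$, the same definition yields $|W(t)| \leq \gamma^3 A_1 w_{\varepsilon,\kappa,d}(t) e^{-\varepsilon t/(2\kappa)}$, so $|W(t)|^3 \leq \gamma^9 A_1^3 w_{\varepsilon,\kappa,d}(t)^3 e^{-3\varepsilon t/(2\kappa)}$; using $\gamma A_1 \leq 1$ and the uniform boundedness of $w_{\varepsilon,\kappa,d}$, this is dominated by $C\gamma^3 w_{\varepsilon,\kappa,d}(t) e^{-5\varepsilon t/(6\kappa)}$ because the surplus factor $\gamma^3 w_{\varepsilon,\kappa,d}(t)^2 e^{-2\varepsilon t/(3\kappa)}$ is uniformly bounded (since $-3/2 + 5/6 = -2/3$). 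Combining the estimates on $R_t'$ and $R_t''$ then produces the stated bound.
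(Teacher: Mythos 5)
Your proposal is correct and follows essentially the same route as the paper: you use the same decomposition of $L_t$ at the threshold $\tau_1 = t/6$, the same pointwise bound from Lemma~\ref{lem:fW_bound}, the $\xi$-range estimates from Lemma~\ref{lem:xi_estimates}, and Lemma~\ref{lem:integral} to produce the $(3-d)$-power and the factor $e^{-5\varepsilon t/(6\kappa)}$. The only cosmetic difference is bookkeeping: the paper folds the lower bound on $W(t)$ from Definition~\ref{def:space1} directly into a single range estimate for $\xi_1-W(t)$ (e.g., $0<\xi_1-W(t)\leq C\gamma/(1+t)$ on $L_t'$), whereas you keep a separate $|W(t)|^3$ contribution and absorb it at the end using the two-sided bounds on $W$; both yield the same result.
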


\begin{proof}
  Let
  \begin{equation}
    \label{Lt}
    L_t=\{ (x,\xi)\in I_{W}^{-}(t) \mid \tau_1>0 \}
  \end{equation}
  and
  \begin{equation}
    \label{Lt_decomposition}
    L_{t}'=\{ (x,\xi)\in L_t \mid \tau_1 \leq t/6 \}, \quad L_{t}''=\{ (x,\xi)\in L_t \mid \tau_1>t/6 \}.
  \end{equation}
  Then as in the proof of Proposition~\ref{prop:|rWp|_upper}, $r_{W}^{-}(t)=r_{W,1}^{-}(t)+r_{W,2}^{-}(t)$, where
  \begin{align}
    r_{W,1}^{-}(t) & =2\int_{L_{t}'}(\xi_1-W(t))^2 (f_0-f_W)\, d\xi dS, \label{rW1m} \\
    r_{W,2}^{-}(t) & =2\int_{L_{t}''}(\xi_1-W(t))^2 (f_0-f_W)\, d\xi dS. \label{rW2m}
  \end{align}

  First, I prove the following:
  \begin{equation}
    \label{|rW1m|_upper}
    |r_{W,1}^{-}(t)|\leq C\gamma^3 w_{\varepsilon,\kappa,d}(t)e^{-\frac{5\varepsilon}{6\kappa}t}.
  \end{equation}
  Let $(x,\xi)\in L_{t}'$ ($\tau_1 \leq t/6$). By ineq.~\eqref{W_lower}, Lemma~\ref{lem:xi_estimates} (the upper bound in ineqs.~\eqref{xi1_estimate1}) and $\gamma A_1 \leq 1$,
  \begin{equation}
    \label{W-xi1_Lt'}
    0<\xi_1-W(t) \leq C\frac{\gamma}{1+t}+\gamma^3 A_1 w_{\varepsilon,\kappa,d}(t)e^{-\frac{\varepsilon}{2\kappa}t}\leq C\frac{\gamma}{1+t}.
  \end{equation}
  This and Lemma~\ref{lem:xi_estimates} (ineq.~\eqref{xiperp_estimate}) give
  \begin{align}
    \label{Lt'_inclusion}
    \begin{aligned}
      L_{t}'
      & \subset \{ x\in C_{W}(t) \mid x_1=X_{W}(t)-h/2 \} \\ 
      & \quad \times \{ |\xi_1-W(t)|\leq C\gamma/(1+t) \} \times \{ |\xi_{\perp}|\leq C/t \} \times \mathbb{R}_{\hat{\xi}}^{3-d}.
    \end{aligned}
  \end{align}
  Inequality~\eqref{W-xi1_Lt'} and Lemma~\ref{lem:fW_bound} give
  \begin{equation}
    \label{rW1-_integrand}
    (\xi_1-W(t))^2 |f_W-f_0|\leq C\frac{\gamma^2}{(1+t)^2}e^{-|(\xi_{\perp},\hat{\xi})|^2}e^{-\frac{5\nu_{\varepsilon}(|\xi|)}{6\kappa}t}.
  \end{equation}
  Now ineq.~\eqref{|rW1m|_upper} is proved by integrating ineq.~\eqref{rW1-_integrand} over $L_{t}'$ and using Lemma~\ref{lem:integral} and inclusion~\eqref{Lt'_inclusion}.
  
  Next, I prove the following:
  \begin{equation}
    \label{|rW2m|_upper}
    |r_{W,2}^{-}(t)|\leq C\gamma^3 (w_{\varepsilon,\kappa,d}(t))^3 e^{-\frac{3\varepsilon}{2\kappa}t}.
  \end{equation}
  Let $(x,\xi)\in L_{t}''$ ($\tau_1>t/6$). By ineq.~\eqref{W_lower}, Lemma~\ref{lem:xi_estimates} (the upper bound in ineqs.~\eqref{xi1_estimate2}), $\varepsilon/\kappa \leq C_0/4$ and $\gamma A_1 \leq 1$,
  \begin{equation}
    \label{W-xi1_Lt''}
    0<\xi_1-W(t)\leq \gamma e^{-\frac{C_0}{6}t}+\gamma^3 A_1 w_{\varepsilon,\kappa,d}(t)e^{-\frac{\varepsilon}{2\kappa}t}\leq C\gamma w_{\varepsilon,\kappa,d}(t)e^{-\frac{\varepsilon}{2\kappa}t}.
  \end{equation}
  This implies
  \begin{align}
    \label{Lt''_inclusion}
    \begin{aligned}
      L_{t}''
      & \subset \{ x\in C_{W}(t) \mid x_1=X_{W}(t)-h/2 \} \\ 
      & \quad \times \{ |\xi_1-W(t)|\leq C\gamma w_{\varepsilon,\kappa,d}(t)e^{-\varepsilon t/(2\kappa)} \} \times \mathbb{R}^2.
    \end{aligned}
  \end{align}
  This and ineq.~\eqref{W-xi1_Lt''} together with $|f_W-f_0|\leq \pi^{-3/2}e^{-|(\xi_{\perp},\hat{\xi})|^2}$ prove ineq.~\eqref{|rW2m|_upper}.
  
  Combining ineqs.~\eqref{|rW1m|_upper} and~\eqref{|rW2m|_upper} proves the lemma. \qed
\end{proof}

\subsubsection{Lower Bound of $r_{W}^{-}(t)$}
\label{sec:rWm_lower}
This section proves a lower bound of $r_{W}^{-}(t)$. First, I show a lemma.

\begin{lemma}
  \label{lem:s0}
  For $t>0$, define $s_0=s_0(t)$ by
  \begin{equation}
    \label{s0}
    s_0=\min \left\{ s\in (0,t) \mid W(s)\leq \frac{\gamma+\langle W \rangle_{s,t}}{2} \right\}.
  \end{equation}
  Then if $\gamma$ is sufficiently small and $t$ is sufficiently large (independent of $\gamma$),
  \begin{equation}
    \label{s0_bounds}
    \frac{1}{C_{\gamma}}\log \frac{6}{5}\leq s_0 \leq \frac{1}{C_0}\log \frac{8}{3}.
  \end{equation}
\end{lemma}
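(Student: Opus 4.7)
The plan is to bound $s_0$ above and below by evaluating the defining inequality at carefully chosen points, using the two-sided bounds on $W$ from Definition~\ref{def:space1} together with a smallness estimate for $\langle W\rangle_{s,t}$ when $t$ is large.

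For the upper bound, set $s^{*}=(\log 8/3)/C_0$. The upper bound $W(t)\leq \gamma e^{-C_0 t}$ from Definition~\ref{def:space1} gives $W(s^{*})\leq 3\gamma/8$, so it suffices to show $\langle W\rangle_{s^{*},t}\geq -\gamma/4$, since then $(\gamma+\langle W\rangle_{s^{*},t})/2\geq 3\gamma/8\geq W(s^{*})$, placing $s^{*}$ in the set that defines $s_0$. For the average estimate I would use the lower bound on $W$ together with the fact that $w_{\varepsilon,\kappa,d}$ is integrable on $[0,\infty)$; this yields $\int_{s^{*}}^{t}W(\tau)\,d\tau\geq -\gamma^{3}A_1\int_0^{\infty}w_{\varepsilon,\kappa,d}(\tau)\,d\tau\geq -C\gamma^{3}$, whence $\langle W\rangle_{s^{*},t}\geq -C\gamma^{3}/(t-s^{*})>-\gamma/4$ once $t$ exceeds some threshold that does not depend on $\gamma$.

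For the lower bound, set $s_{*}=(\log 6/5)/C_\gamma$ and note that $s_{*}<t_\gamma$ for $\gamma$ small (since $t_\gamma=(\log\gamma^{-1})/C_\gamma$), so $W$ is decreasing on $[0,s_{*}]$. The lower bound from Definition~\ref{def:space1} then gives, for every $s\in(0,s_{*}]$, $W(s)\geq W(s_{*})\geq \gamma e^{-C_\gamma s_{*}}-\gamma^{3}A_1 w_{\varepsilon,\kappa,d}(s_{*})\geq 5\gamma/6-C\gamma^{3}$. On the other hand, using $W(\tau)\leq \gamma e^{-C_0\tau}\leq \gamma$, I split $\int_s^t W(\tau)\,d\tau=\int_s^{s_{*}}W\,d\tau+\int_{s_{*}}^{t}W\,d\tau\leq \gamma s_{*}+\gamma/C_0$, giving $\langle W\rangle_{s,t}\leq \gamma(s_{*}+1/C_0)/(t-s_{*})\leq \gamma/2$ for $t$ sufficiently large. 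Combining, $(\gamma+\langle W\rangle_{s,t})/2\leq 3\gamma/4<5\gamma/6-C\gamma^{3}\leq W(s)$ for $\gamma$ small, so no $s\in(0,s_{*})$ lies in the defining set, and therefore $s_{0}\geq s_{*}$.

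The proof is essentially bookkeeping; the only mildly delicate point is to verify that the two thresholds ``$\gamma$ small'' and ``$t$ large'' can indeed be chosen independently, as claimed in the statement. This works because each error term is either $O(\gamma^{3})$ (hence negligible relative to $\gamma$ for $\gamma$ small, with constants controlled by $A_1$ and $d$ but not by $t$) or of the form $C/(t-s_{*})$ with $C$ independent of $\gamma$.
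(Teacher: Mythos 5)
Your proof is correct and rests on the same ingredients as the paper's --- the two-sided bounds from Definition~\ref{def:space1}, the boundedness/integrability of $w_{\varepsilon,\kappa,d}$, and the smallness of $\langle W\rangle_{s,t}$ for large $t$ --- but organizes them differently. The paper uses the identity $W(s_0)=\tfrac{1}{2}\bigl(\gamma+\langle W\rangle_{s_0,t}\bigr)$ (the defining inequality is an equality at the minimizer since $W$ is continuous and the condition fails near $s=0$), sandwiches $W(s_0)$ between $\tfrac{3}{8}\gamma$ and $\tfrac{7}{12}\gamma$, and then inverts $\gamma e^{-C_0 s_0}$ and $\gamma e^{-C_\gamma s_0}$ to extract the bounds on $s_0$; it also proves the upper bound on $s_0$ first and feeds it into the estimate of $\langle W\rangle_{s_0,t}$ for the lower bound. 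You instead evaluate the defining inequality at the explicit test points $s^{*}=(\log 8/3)/C_0$ and $s_{*}=(\log 6/5)/C_\gamma$, showing directly that $s^{*}$ belongs to the set (hence $s_0\le s^{*}$) and that no $s\in(0,s_{*})$ does (hence $s_0\ge s_{*}$), the latter reduction to the single point $s_{*}$ being justified by the monotonicity of $W$ on $[0,t_\gamma]\supset[0,s_{*}]$. Both routes are sound and of comparable length; yours sidesteps the equality at $s_0$ (and, incidentally, also gives non-emptiness of the defining set as a by-product of showing $s^{*}$ lies in it), at the cost of needing the monotonicity argument and slightly cruder constants ($\gamma/2$ in place of the paper's $\gamma/6$), which still close the argument since $5/6-3/4>0$.
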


\begin{proof}
  Note first that $W(t)<\gamma$ ($t>0$) by ineq.~\eqref{W_upper}; therefore, the minimum in eq.~\eqref{s0} exists and $s_0$ is well-defined.

  First, I show that $s_0 \leq(\log 8/3)/C_0$. Note that by ineq.~\eqref{W_lower},
  \begin{align}
    \label{s0_step1}
    \begin{aligned}
      \langle W \rangle_{s_0,t}
      & \geq \frac{1}{t-s_0}\int_{s_0}^{t}\left( \gamma e^{-C_{\gamma}s}-\gamma^3 A_1 w_{\varepsilon,\kappa,d}(s)e^{-\frac{\varepsilon}{2\kappa}s} \right) \, ds \\
      & \geq -2^{3-d}\gamma^3 A_1.
    \end{aligned}
  \end{align}
  Take $\gamma$ sufficiently small so that $2^{3-d}\gamma^2 A_1 \leq 1/4$. Then by ineq.~\eqref{W_upper}, eq.~\eqref{s0} and ineq.~\eqref{s0_step1},
  \begin{equation}
    \label{s0_step2}
    \frac{3}{8}\gamma \leq \frac{\gamma+\langle W \rangle_{s_0,t}}{2}=W(s_0)\leq \gamma e^{-C_0 s_0};
  \end{equation}
  therefore, $s_0 \leq (\log 8/3)/C_0$.

  Next, I show that $s_0 \geq (\log 6/5)/C_{\gamma}$. By ineq.~\eqref{W_upper},
  \begin{equation}
    \label{s0_step3}
    \langle W \rangle_{s_0,t}\leq \frac{1}{t-s_0}\int_{s_0}^{t}\gamma e^{-C_0 s}\, ds=\gamma e^{-C_0 s_0}\frac{1-e^{-C_0(t-s_0)}}{C_0(t-s_0)}.
  \end{equation}
  Since $s_0 \leq (\log 8/3)/C_0$,
  \begin{equation}
    \label{s0_step4}
    \langle W \rangle_{s_0,t}\leq \frac{\gamma}{6}
  \end{equation}
  for sufficiently large $t$. On the other hand, by ineq.~\eqref{W_lower},
  \begin{align}
    \label{s0_step5}
    \begin{aligned}
      W(s_0)
      & \geq \gamma e^{-C_{\gamma}s_0}-\gamma^3 A_1 w_{\varepsilon,\kappa,d}(s_0)e^{-\frac{\varepsilon}{2\kappa}s_0} \\
      & \geq \gamma e^{-C_{\gamma}s_0}-2^{3-d}\gamma^3 A_1 \\
      & \geq \gamma e^{-C_{\gamma}s_0}-\frac{\gamma}{4};
    \end{aligned}
  \end{align}
  therefore,
  \begin{equation}
    \label{s0_step6}
    \gamma e^{-C_{\gamma}s_0}-\frac{\gamma}{4}\leq W(s_0)=\frac{\gamma+\langle W \rangle_{s_0,t}}{2}\leq \frac{7}{12}\gamma.
  \end{equation}
  This implies $s_0 \geq (\log 6/5)/C_{\gamma}$. \qed
\end{proof}

Next, let
\begin{equation}
  \label{St}
  S_t=\left\{ (x,\xi)\in I_{W}^{-}(t) \mid |x_{\perp}|\leq \frac{1}{2},\, \langle W \rangle_{s_0,t}<\xi_1<\langle W \rangle_{0,t},\, |\xi_{\perp}|\leq \frac{1}{2t} \right\}.
\end{equation}
Then $(x,\xi)\in S_t$ has exactly one recollision:

\begin{lemma}
  \label{lem:tau2=0}
  Let $(x,\xi)\in S_t$. If $\gamma$ is sufficiently small and $t$ is sufficiently large (independent of $\gamma$), then $0<\tau_1<s_0$ and $\tau_2=0$.
\end{lemma}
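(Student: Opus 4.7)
My plan is to translate both conclusions into inequalities for $g(s) := \langle W\rangle_{s,t}$ (extended by $g(t) := W(t)$) and then extract them from Lemma~\ref{lem:s0} combined with the shape constraints built into $\mathcal{K}$.

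\textbf{Step 1} (reformulating recollision via $g$). For $(x,\xi) \in S_{t} \subset I_{W}^{-}(t)$, the straight-line characteristic $x_{1}(s) = X_{W}(t) - h/2 - (t-s)\xi_{1}$ and the left face position $X_{W}(s) - h/2 = X_{W}(t) - h/2 - (t-s)\langle W\rangle_{s,t}$ give
\[
x_{1}(s) - (X_{W}(s) - h/2) = (t-s)(g(s) - \xi_{1}),
\]
so the characteristic sits strictly to the left of the body precisely when $g(s) < \xi_{1}$. Therefore $\tau_{1} = \sup\{s \in [0,t) : g(s) \geq \xi_{1}\}$, and $g(\tau_{1}) = \xi_{1}$ by continuity whenever $\tau_{1} > 0$. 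The bounds $|x_{\perp}| \leq 1/2$ and $|\xi_{\perp}| \leq 1/(2t)$ in~\eqref{St} ensure $|x_{\perp}(s)| \leq 1$ throughout $[0,t]$, so the transverse extent of the body never obstructs the argument.

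\textbf{Step 2} (locating $\tau_{1}$ in $(0,s_{0})$). From~\eqref{St} we have $g(s_{0}) < \xi_{1} < g(0)$. A direct computation gives $g'(s) = (g(s) - W(s))/(t-s)$, and for $s < s_{0}$ the minimality in~\eqref{s0} yields $W(s) > (\gamma + g(s))/2 > g(s)$, where the final strict inequality follows from the elementary bound $g(s) \leq \gamma e^{-C_{0}s} < \gamma$ (a consequence of $W(\tau) \leq \gamma e^{-C_{0}\tau}$ from~\eqref{W_upper}). Hence $g$ is strictly decreasing on $[0,s_{0}]$, and the intermediate value theorem produces a unique $\tau_{1} \in (0,s_{0})$ with $g(\tau_{1}) = \xi_{1}$. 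To confirm $\tau_{1}$ is the \emph{first} recollision, I would verify $g(s) < \xi_{1}$ also on $[s_{0},t)$: on $[s_{0},t_{\gamma}]$ the monotonicity of $W$ built into $\mathcal{K}$ again gives $g' \leq 0$ so $g(s) \leq g(s_{0})$; on the tail $[t_{\gamma},t]$ (if non-empty) the estimate $W(\tau) \leq \gamma^{1+C_{0}/C_{\gamma}}$ from~\eqref{W_upper} forces $g(s)$ to lie well below the order-$\gamma/t$ lower bound on $g(s_{0})$ guaranteed by Lemma~\ref{lem:s0}.

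\textbf{Step 3} (no further recollision). After reflection, $\xi_{1}'(\tau_{1}) = 2W(\tau_{1}) - \xi_{1}$, and the analogous computation for $s \in (0,\tau_{1})$ gives
\[
x_{1}(s) - (X_{W}(s) - h/2) = (\tau_{1}-s)(\langle W\rangle_{s,\tau_{1}} - \xi_{1}'(\tau_{1})),
\]
so $\tau_{2} = 0$ is equivalent to $\xi_{1}'(\tau_{1}) > \langle W\rangle_{s,\tau_{1}}$ for every $s \in (0,\tau_{1})$. Substituting $\xi_{1} = \langle W\rangle_{\tau_{1},t}$ from Step 1, this reduces to
\[
W(\tau_{1}) > \frac{\langle W\rangle_{s,\tau_{1}} + \langle W\rangle_{\tau_{1},t}}{2}.
\]
Since $\tau_{1} < s_{0}$, the minimality in~\eqref{s0} gives $W(\tau_{1}) > (\gamma + \langle W\rangle_{\tau_{1},t})/2$; and since $\tau_{1} < s_{0} < t_{\gamma}$ for $\gamma$ small, $W$ is decreasing on $[0,\tau_{1}]$, so $\langle W\rangle_{s,\tau_{1}} \leq W(0) = \gamma$. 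Chaining the two inequalities yields the required bound. A collision with the right face would require the opposite sign in the displayed identity and is therefore automatically ruled out.

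\textbf{Main obstacle.} The delicate part is the tail analysis in Step 2: because $g(s_{0})$ can be slightly negative, merely bounding $g(s) \leq 0$ on $[s_{0},t]$ is not enough; the quantitative bounds on $s_{0}$ from Lemma~\ref{lem:s0} must be combined with the exponential-in-$t$ decay from~\eqref{W_upper} to prevent $g$ from rebounding above $\xi_{1}$. Once this is in place, the remaining content of Steps 1 and 3 reduces to two short applications of the $s_{0}$-identity.
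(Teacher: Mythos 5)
Your Steps 1 and 3 are correct and closely track the paper's own argument: Step 1 is the observation $\xi_1 = \langle W\rangle_{\tau_1,t}$ (eq.~\eqref{xi1_W_rel}), and Step 3 is exactly eq.~\eqref{tau2=0_step8} ($\xi_1'(\tau_1) > \gamma$ combined with $W(t) < \gamma$). The strict-monotonicity argument on $[0,s_0]$ via $g'(s) = (g(s)-W(s))/(t-s) < 0$ is a clean reformulation of the IVT step, though it is equivalent in content to what the paper does.

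The genuine gap is in the tail analysis of Step 2, on $[t_\gamma,t]$. You bound $g(s) = \langle W\rangle_{s,t}$ by the \emph{pointwise} estimate $W(\tau) \leq \gamma^{1+C_0/C_\gamma}$ for $\tau \geq t_\gamma$, concluding $g(s) \leq \gamma^{1+C_0/C_\gamma}$. That bound is a constant in $t$. But the lower bound on $g(s_0)$ supplied by Lemma~\ref{lem:s0} and ineq.~\eqref{W_lower} (made explicit later in eq.~\eqref{tau2=0_step5}) decays like $\gamma/(C_\gamma t)$ as $t\to\infty$. So for $t$ large the flat bound exceeds $g(s_0)$ and the comparison fails --- the inequality $g(s) < g(s_0)$ is simply not established. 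The paper closes this by exploiting the averaging: from~\eqref{W_upper},
\[
g(s) \leq \gamma e^{-C_0 s}\,\frac{1-e^{-C_0(t-s)}}{C_0(t-s)},
\]
which carries the crucial extra factor of order $1/(t-s)$; this is then split into the two cases $s\leq t/2$ (where $e^{-C_0 s}\leq e^{-C_0 t_\gamma}=\gamma^{C_0/C_\gamma}$ and $t-s \geq t/2$, yielding $g(s)\lesssim \gamma^{1+C_0/C_\gamma}/t$) and $s > t/2$ (where $g(s)\leq \gamma e^{-C_0 t/2}$), see eqs.~\eqref{tau2=0_step4}--\eqref{tau2=0_step7}. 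Both branches then beat the $\gamma/t$ lower bound for $\gamma$ small and $t$ large.

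A secondary imprecision: you assert that $W$ being decreasing on $[s_0,t_\gamma]$ ``gives $g'\leq 0$'' there. But $g'(s)\leq 0$ is equivalent to $W(s)\geq \langle W\rangle_{s,t}$, and the average on the right runs over the whole of $[s,t]$, including $[t_\gamma,t]$ where $W$ need not be monotone; monotonicity on $[s_0,t_\gamma]$ alone does not suffice. One also needs a quantitative comparison of $W(s)$ (of order $\gamma^2$ near $t_\gamma$) against $\langle W\rangle_{s,t}$ (of order $\gamma/t$), i.e.\ again the $1/(t-s)$ decay of the average. Finally, the remark that $g(s_0)$ ``can be slightly negative'' is off: in the regime of the lemma, eq.~\eqref{tau2=0_step5} shows $g(s_0)$ is positive and of order $\gamma/t$; treating it as possibly negative would actually make the tail comparison impossible, not merely delicate. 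Your ``Main obstacle'' paragraph flags the right spot, but the proposed remedy (the flat $\gamma^{1+C_0/C_\gamma}$ bound) does not supply the missing ingredient.
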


\begin{proof}
  Let $(x,\xi)\in S_t$. First, I show that $\tau_1>0$. Let $\sigma_1 \in (0,t)$ be the largest $s\in (0,t)$ such that $\xi_1=\langle W \rangle_{s,t}$ --- such $s$ exists since $\langle W \rangle_{s_0,t}<\xi_1<\langle W \rangle_{0,t}$ by definition of $S_t$. Then since
  \begin{equation}
    \label{tau2=0_step1}
    |x_{\perp}-(t-\sigma_1)\xi_{\perp}|\leq |x_{\perp}|+t|\xi_{\perp}|\leq 1
  \end{equation}
  again by definition of $S_t$, $\sigma_1$ coincides with $\tau_1$: $\sigma_1=\tau_1$. And $\tau_1>0$ since $\sigma_1$ is so.

  Next, I show that $\tau_1<s_0$. Take $\gamma$ sufficiently small so that $(\log 8/3)/C_0 \leq t_{\gamma}$. Then $W$ is decreasing on $[s_0,t_{\gamma}]$ by definition~\ref{def:space1} and Lemma~\ref{lem:s0}; therefore,
  \begin{equation}
    \label{tau2=0_step2}
    \langle W \rangle_{s,t}\leq \langle W \rangle_{s_0,t}<\xi_1=\langle W \rangle_{\tau_1,t}
  \end{equation}
  for $s\in [s_0,t_{\gamma}]$. This implies (by contradiction) that (i) $\tau_1<s_0$, or (ii) $\tau_1>t_{\gamma}$. In order to prove that (ii) is impossible (which proves $\tau_1<s_0$), it suffices to prove that
  \begin{equation}
    \label{tau2=0_step3}
    \langle W \rangle_{s,t}<\langle W \rangle_{s_0,t}
  \end{equation}
  for $s>t_{\gamma}$ since $\langle W \rangle_{\tau_1,t}=\xi_1>\langle W \rangle_{s_0,t}$ by definition of $S_t$. Ineq.~\eqref{tau2=0_step3} is prove as follows: By ineq.~\eqref{W_upper},
  \begin{equation}
    \label{tau2=0_step4}
    \langle W \rangle_{s,t}\leq \frac{1}{t-s}\int_{s}^{t}\gamma e^{-C_0 \tau}\, d\tau=\gamma e^{-C_0 s}\frac{1-e^{-C_0(t-s)}}{C_0(t-s)};
  \end{equation}
  on the other hand, by ineq.~\eqref{W_lower},
  \begin{align}
    \label{tau2=0_step5}
    \begin{aligned}
      \langle W \rangle_{s_0,t}
      & \geq \frac{1}{t-s_0}\int_{s_0}^{t}\left( \gamma e^{-C_{\gamma}s}-\gamma^3 A_1 w_{\varepsilon,\kappa,d}(s)e^{-\frac{\varepsilon}{2\kappa}s} \right) \, ds \\
      & \geq \gamma e^{-C_{\gamma}s_0}\frac{1-e^{-C_{\gamma}(t-s_0)}}{C_{\gamma}(t-s_0)}-C\frac{\gamma^3 A_1}{t-s_0} \\
      & \geq \frac{\gamma}{2}e^{-C_{\gamma}s_0}\frac{1-e^{-C_{\gamma}(t-s_0)}}{C_{\gamma}(t-s_0)}
    \end{aligned}
  \end{align}
  if $\gamma$ is sufficiently small and $t$ is sufficiently large. Now let $s>t_{\gamma}$ and consider first the case of $s\leq t/2$. Then by ineq.~\eqref{tau2=0_step4} and~\eqref{tau2=0_step5},
  \begin{equation}
    \label{tau2=0_step6}
    \langle W \rangle_{s,t}\leq \gamma e^{-C_0 t_{\gamma}}\frac{1-e^{-C_0 t/2}}{C_0 t/2}\leq \frac{\gamma}{2}e^{-C_{\gamma}s_0}\frac{1-e^{-C_{\gamma}(t-s_0)}}{C_{\gamma}(t-s_0)}\leq \langle W \rangle_{s_0,t}
  \end{equation}
  if $\gamma$ is sufficiently small and $t$ is sufficiently large, which proves ineq.~\eqref{tau2=0_step3} in this case; consider next the case of $s>t/2$. Then
  \begin{equation}
    \label{tau2=0_step7}
    \langle W \rangle_{s,t}\leq \gamma e^{-C_0 t/2}\leq \frac{\gamma}{2}e^{-C_{\gamma}s_0}\frac{1-e^{-C_{\gamma}(t-s_0)}}{C_{\gamma}(t-s_0)}\leq \langle W \rangle_{s_0,t}
  \end{equation}
  if $\gamma$ is sufficiently small and $t$ is sufficiently large, which proves ineq.~\eqref{tau2=0_step3} in this case.

  Now that $\tau_1<s_0$ (the case (i)) is proved, I next show that $\tau_2=0$: By eq.~\eqref{s0} and $0<\tau_1<s_0$,
  \begin{equation}
    \label{tau2=0_step8}
    \xi_{1}'(\tau_1)=2W(\tau_1)-\xi_1=2W(\tau_1)-\langle W \rangle_{\tau_1,t}>\gamma.
  \end{equation}
  Since $W(t)<\gamma$ ($t>0$) by ineq.~\eqref{W_upper}, more than one recollision is impossible: $\tau_2=0$. \qed
\end{proof}

A lower bound of $r_{W}^{-}(t)$ is given by the following proposition:

\begin{proposition}
  \label{prop:rWm_lower}
  Suppose that $\kappa \geq 1$, $\varepsilon \leq \kappa C_0/4$ and $W\in \mathcal{K}(\gamma,A_1,A_2)$. If $\gamma$ is sufficiently small, then
  \begin{equation}
    \label{rWm_lower}
    r_{W}^{-}(t) \geq C\gamma^5 w_{\varepsilon,\kappa,d}(t)e^{-\frac{\varepsilon}{\kappa}t}\bm{1}_{\{ t\geq t_{\gamma} \}}
  \end{equation}
  for some positive constant $C$ independent of $\gamma$, $A_1$ and $A_2$.
\end{proposition}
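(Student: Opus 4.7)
The plan is to restrict the integral defining $r_W^-(t)$ to the set $S_t$, where Lemmas~\ref{lem:tau2=0} and~\ref{lem:fW_formula} combine to give a clean expression for $f_0-f_W$, and then extract the claimed lower bound term by term. For $t<t_\gamma$, monotonicity of $W$ on $[0,t_\gamma]$ prevents any recollision on $I_W^-(t)$, so $f_W=f_0$ and $r_W^-(t)=0$, matching the indicator $\bm{1}_{\{t\ge t_\gamma\}}$ trivially. Henceforth assume $t\ge t_\gamma$ with $t$ large enough for Lemmas~\ref{lem:s0} and~\ref{lem:tau2=0} to apply. On $S_t$ those lemmas give $0<\tau_1<s_0$ and $\tau_2=0$, so Lemma~\ref{lem:fW_formula} reads
\begin{equation*}
  f_0(\xi)-f_W(x,\xi,t)=\bigl(f_0(\xi)-f_0(\xi'(\tau_1))\bigr)\exp\!\Bigl(-\tfrac{\nu_\varepsilon(|\xi|)}{\kappa}(t-\tau_1)\Bigr).
\end{equation*}
Since $\xi_1'(\tau_1)>\gamma$ by~\eqref{tau2=0_step8} while $|\xi_1|\le\gamma/4$ on $S_t$ (for $\gamma$ sufficiently small and $t$ large), one has $|\xi'(\tau_1)|>|\xi|$ and hence $f_0-f_W>0$ on $S_t$; the restriction to $S_t$ is a legitimate lower bound.

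The factor $\gamma^5$ is extracted in two pieces. Using $|\xi'(\tau_1)|^2-|\xi|^2=4W(\tau_1)(W(\tau_1)-\xi_1)$ together with $W(\tau_1)\ge W(s_0)\ge 3\gamma/8$ (established in the proof of Lemma~\ref{lem:s0}) and $W(\tau_1)-\xi_1\gtrsim\gamma$ (since $\xi_1=\langle W\rangle_{\tau_1,t}\lesssim\gamma/t$ while $W(\tau_1)\gtrsim\gamma$) gives
\begin{equation*}
  f_0(\xi)-f_0(\xi'(\tau_1))\ge C\gamma^2 e^{-|\hat\xi|^2},
\end{equation*}
contributing $\gamma^2$. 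For the factor $(\xi_1-W(t))^2$, I change variables via $\xi_1=\langle W\rangle_{\tau_1,t}$, whose Jacobian is $|d\xi_1/d\tau_1|=(W(\tau_1)-\xi_1)/(t-\tau_1)\gtrsim\gamma/t$. Since $W(t)\lesssim\gamma^{7/4}e^{-C_0 t/6}$ is exponentially small compared with $\gamma/t$, while the lower bound on $W$ from Definition~\ref{def:space1} yields $\langle W\rangle_{0,t}\gtrsim\gamma/t$ for $t\ge t_\gamma$ (so $\langle W\rangle_{\tau_1,t}$ varies over an interval of length $\gtrsim\gamma/t$), a suitable sub-range of $\tau_1$ of order one makes $|\xi_1-W(t)|\gtrsim\gamma/t$ uniformly, whence $\int(\xi_1-W(t))^2\,|d\xi_1|\gtrsim\gamma^3/t^3$. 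This accounts for the remaining $\gamma^3$.

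The angular integrations are then handled by the lower bound in Lemma~\ref{lem:integral}. On the restricted region $|\tilde\xi|\lesssim 1/t$, so the factor $e^{-C'|\tilde\xi|(t-\tau_1)/\kappa}$ stays above $e^{-C'/\kappa}>0$ (using $\kappa\ge 1$), and with $t-\tau_1\sim t$,
\begin{equation*}
  \int_{\mathbb R^{3-d}}e^{-|\hat\xi|^2}e^{-\nu_\varepsilon(|\xi|)(t-\tau_1)/\kappa}\,d\hat\xi\ge C\Bigl(\tfrac{1}{\sqrt{1+t/(\varepsilon\kappa)}}+\tfrac{1}{1+t/\kappa}\Bigr)^{3-d}e^{-\varepsilon t/\kappa}.
\end{equation*}
Together with $\int_{|\xi_\perp|\le 1/(2t)}d\xi_\perp\sim t^{-(d-1)}$ and $\int dS\sim 1$, all factors combine to $C\gamma^5 w_{\varepsilon,\kappa,d}(t)e^{-\varepsilon t/\kappa}$, as required.

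I expect the main obstacle to be the change-of-variables step: the $\tau_1$ sub-range must be chosen so that $|\xi_1-W(t)|\gtrsim\gamma/t$ uniformly (in particular avoiding the zero of $\xi_1\mapsto\xi_1-W(t)$, which may lie in the range of $\langle W\rangle_{\tau_1,t}$ since $\langle W\rangle_{s_0,t}$ can in principle have either sign), while simultaneously retaining a $\tau_1$-range of order one to produce the full $\gamma/t$ from $d\xi_1$. A case analysis depending on the sign of $\langle W\rangle_{s_0,t}$, combined with the lower bound $\langle W\rangle_{0,t}\gtrsim\gamma/t$, is the technical heart of the proposition; once it is carried out the rest reduces to a routine combination of Lemmas~\ref{lem:tau2=0}, \ref{lem:fW_formula} and~\ref{lem:integral}.
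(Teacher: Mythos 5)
There are two genuine gaps in the proposal, one of which is a misstep in reasoning and one of which is a missing estimate; there is also one flagged-but-unresolved step that the paper in fact closes quite cleanly.

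First, the claim that ``monotonicity of $W$ on $[0,t_\gamma]$ prevents any recollision on $I_W^-(t)$'' for $t<t_\gamma$ is false. The absence-of-recollision argument works on $I_W^+(t)$: there $\xi_1<W(t)$, while $\langle W\rangle_{\tau_1,t}>W(t)$ for decreasing $W$, so $\xi_1=\langle W\rangle_{\tau_1,t}$ is impossible. On $I_W^-(t)$ the inequality goes the other way --- $\xi_1>W(t)$ can perfectly well equal $\langle W\rangle_{\tau_1,t}$ for some $\tau_1\in(0,t)$, and recollisions do occur. The correct justification for the indicator is the sign argument: since $W(\tau_k)>0$ for all recollision times when $t\le t_\gamma$, eq.~\eqref{xi_prime_squared} gives $|\xi_1'(\tau_k)|\ge|\xi_1(\tau_k)|$, hence $f_0-f_W\ge 0$ by Lemma~\ref{lem:fW_formula}, hence $r_W^-(t)\ge 0$. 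You get $\ge 0$, not $=0$.

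Second, and more seriously, restricting the integral to $S_t$ is \emph{not} automatically a lower bound. You correctly show $f_0-f_W>0$ on $S_t$, but that does nothing to control the contribution from $I_W^-(t)\setminus S_t$, which can be negative: once $t$ is large enough that $W$ has changed sign along the trajectory, the sign argument above fails, and on the subset where $\xi_1<0$ the integrand is genuinely of the wrong sign. The paper handles this by a decomposition $r_W^-=r_{W,3}^-+r_{W,4}^-$ and a separate estimate for the complement: on $\{\xi_1>0\}$ the chain $0<\xi_1<W(\tau_1)<\xi_1'(\tau_1)<\cdots$ still forces $f_0-f_W>0$, while on $\{\xi_1<0\}$ one has $|\xi_1-W(t)|\le \gamma^3 A_1 w_{\varepsilon,\kappa,d}(t)e^{-\varepsilon t/(2\kappa)}$ from~\eqref{W_lower}, so that piece is bounded below by $-C\gamma^9 A_1^3 (w_{\varepsilon,\kappa,d}(t))^3 e^{-3\varepsilon t/(2\kappa)}$, which is negligible against the $\gamma^5$ main term. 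Without some version of this, the restriction to $S_t$ is unjustified.

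Third, the ``technical heart'' you defer --- the sign of $\langle W\rangle_{s_0,t}$ and the possible zero of $\xi_1-W(t)$ --- is in fact resolved positively and without case analysis. One shows $\langle W\rangle_{s_0,t}>0$ directly (it is bounded below by a positive multiple of $\gamma e^{-C_\gamma s_0}/(C_\gamma(t-s_0))$) and, crucially, $\langle W\rangle_{s_0,t}-W(t)\ge C\gamma/t$ for large $t$, so $\xi_1-W(t)$ is uniformly $\gtrsim\gamma/t$ on the whole $\xi_1$-range $[\langle W\rangle_{s_0,t},\langle W\rangle_{0,t}]$. Combined with $\langle W\rangle_{0,t}-\langle W\rangle_{s_0,t}\ge C\gamma/t$, the $\xi_1$-integral is $\ge C\gamma^3/t^3$ directly, with no change of variables and no danger of the sign flipping inside the range. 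Your change-of-variables route would also work once the sign issue is settled, but the paper's direct evaluation is cleaner precisely because it first establishes these two positive $\gamma/t$-lower bounds.

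The rest of the proposal --- the use of Lemmas~\ref{lem:tau2=0} and~\ref{lem:fW_formula} to get a one-recollision formula on $S_t$, the extraction of $\gamma^2$ from $|\xi'(\tau_1)|^2-|\xi|^2=4W(\tau_1)(W(\tau_1)-\xi_1)\asymp\gamma^2$, and the application of Lemma~\ref{lem:integral} for the $\hat\xi$-integration --- agrees with the paper and is sound.
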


\begin{proof}
  First, note that $W(t)>0$ for $t\leq t_{\gamma}$ (see Remark~\ref{rem:thm1}~(iii)). This implies that
  \begin{equation}
    \label{rWm_lower_step0}
    |\xi_{1}'(\tau_k)|\geq |\xi_1(\tau_k)| \quad (1\leq k<N+1)
  \end{equation}
  for $(x,\xi)\in I_{W}^{-}(t)$ (see eq.~\eqref{xi_prime_squared}). This implies that $f_0(\xi)-f_W(x,\xi,t)\geq 0$ by Lemma~\ref{lem:fW_formula}; therefore, $r_{W}^{-}(t)\geq 0$ by eq.~\eqref{rpm}. This consideration legitimates the multiplication by $\bm{1}_{\{ t\geq t_{\gamma}\}}$ in ineq.~\eqref{rWm_lower}. Hence I can safely assume that $t\geq t_{\gamma}$ in the following. Moreover, in the following, I take $\gamma$ sufficiently small and $t$ sufficiently large so that Lemma~\ref{lem:tau2=0} holds: $0<\tau_1<s_0$ and $\tau_2=0$.

  Let
  \begin{align}
    \label{rWm_lower_step1}
    \begin{aligned}
      r_{W,3}^{-}(t) & =2\int_{S_t}(\xi_1-W(t))^2(f_0-f_W)\, d\xi dS, \\
      r_{W,4}^{-}(t) & =2\int_{I_{W}^{-}(t)\backslash S_t}(\xi_1-W(t))^2(f_0-f_W)\, d\xi dS.
    \end{aligned}
  \end{align}
  Then $r_{W}^{-}(t)=r_{W,3}^{-}(t)+r_{W,4}^{-}(t)$.

  First, I prove the following:
  \begin{equation}
    \label{rW3m_lower}
    r_{W,3}^{-}(t)\geq C\gamma^5 w_{\varepsilon,\kappa,d}(t)e^{-\frac{\varepsilon}{\kappa}t}\bm{1}_{\{ t\geq t_{\gamma} \}}.
  \end{equation}
  Let $(x,\xi)\in S_t$. Then
  \begin{equation}
    \label{rWm_lower_step2}
    C\gamma^2 \leq (\xi_{1}'(\tau_1))^2 -(\xi_1)^2 \leq 8 \gamma^2:
  \end{equation}
  Note first that
  \begin{equation}
    \label{rWm_lower_step3}
    (\xi_{1}'(\tau_1))^2 -(\xi_1)^2=(2W(\tau_1)-\xi_1)^2 -(\xi_1)^2=4W(\tau_1)(W(\tau_1)-\langle W \rangle_{\tau_1,t}).
  \end{equation}
  The upper bound in ineqs.~\eqref{rWm_lower_step2} follows from this since $W(t)<\gamma$ ($t>0$) by ineq.~\eqref{W_upper}. Next, since $\tau_1<s_0$,
  \begin{equation}
    \label{rWm_lower_step4}
    (\xi_{1}'(\tau_1))^2 -(\xi_1)^2 \geq 2W(\tau_1)(\gamma-\langle W \rangle_{\tau_1,t})
  \end{equation}
  by eq.~\eqref{s0} and~\eqref{rWm_lower_step3}. Take $\gamma$ sufficiently small so that $s_0 \leq t_{\gamma}$. Then since $W$ is decreasing on the interval $[0,t_{\gamma}]$,
  \begin{equation}
    \label{rWm_lower_step5}
    W(\tau_1)\geq W( (\log 8/3)/C_0) \geq C\gamma
  \end{equation}
  by ineq.~\eqref{W_lower}. Moreover, by ineq.~\eqref{W_upper},
  \begin{align}
    \label{rWm_lower_step6}
    \begin{aligned}
      \gamma-\langle W \rangle_{\tau_1,t}
      & =\gamma-\frac{1}{t-\tau_1}\int_{\tau_1}^{t}W(s)\, ds \\
      & \geq \gamma-\frac{1}{t-\tau_1}\int_{\tau_1}^{t}\gamma e^{-C_0 s}\, ds \\
      & =\gamma-\gamma e^{-C_0 \tau_1}\frac{1-e^{-C_0(t-\tau_1)}}{C_0(t-\tau_1)} \\
      & \geq C\gamma
    \end{aligned}
  \end{align}
  if $t$ is sufficiently large. Ineqs.~\eqref{rWm_lower_step4}, \eqref{rWm_lower_step5} and~\eqref{rWm_lower_step6} show the lower bound in ineqs.~\eqref{rWm_lower_step2}.

  Let us continue the proof of ineq.~\eqref{rW3m_lower}: Since $\tau_1>0$ and $\tau_2=0$,
  \begin{align}
    \label{rWm_lower_step7}
    \begin{aligned}
      f_0(\xi)-f_W(x,\xi,t)
      & =(f_0(\xi)-f_0(\xi'(\tau_1)))\exp\left( -\frac{\nu_{\varepsilon}(|\xi|)}{\kappa}(t-\tau_1) \right) \\
      & \geq C\gamma^2 e^{-(\xi_1)^2}e^{-|(\xi_{\perp},\hat{\xi})|^2}\exp\left( -\frac{\nu_{\varepsilon}(|\xi|)}{\kappa}t \right)
    \end{aligned}
  \end{align}
  by Lemma~\ref{lem:fW_formula} and ineq.~\eqref{rWm_lower_step2}; therefore, by Lemma~\ref{lem:integral} and eq.~\eqref{St},
  \begin{align}
    \label{rWm_lower_step8}
    \begin{aligned}
      r_{W,3}^{-}(t)
      & \geq C\gamma^2 \int_{\langle W \rangle_{s_0,t}}^{\langle W \rangle_{0,t}}e^{-(\xi_1)^2}(\xi_1-W(t))^2 \, d\xi_1 \\
      & \quad \times \int_{|\xi_{\perp}|\leq 1/(2t)}e^{-|\xi_{\perp}|^2}\, d\xi_{\perp}\int_{\hat{\xi}}e^{-|\hat{\xi}|^2}e^{-\frac{\nu_{\varepsilon}(|\xi|)}{\kappa}t}\, d\hat{\xi} \\
      & \geq C\gamma^2 \int_{\langle W \rangle_{s_0,t}}^{\langle W \rangle_{0,t}}e^{-(\xi_1)^2-C'|\xi_1|t}(\xi_1-W(t))^2 \, d\xi_1 \\
      & \quad \times \int_{|\xi_{\perp}|\leq 1/(2t)}e^{-|\xi_{\perp}|^2-C'|\xi_{\perp}|t}\, d\xi_{\perp} \\
      & \quad \times \left( \frac{1}{\sqrt{1+t/(\varepsilon \kappa)}}+\frac{1}{1+t/\kappa} \right)^{3-d}e^{-\frac{\varepsilon}{\kappa}t}.
    \end{aligned}
  \end{align}
  Note that
  \begin{equation}
    \label{rWm_lower_step9}
    0<\langle W \rangle_{s_0,t}<\langle W \rangle_{0,t}\leq \frac{\gamma}{C_0 t};
  \end{equation}
  therefore,
  \begin{align}
    \label{rWm_lower_step10}
    \begin{aligned}
      r_{W,3}^{-}(t)
      & \geq C\frac{\gamma^2}{(1+t)^{d-1}}\left( \frac{1}{\sqrt{1+t/(\varepsilon \kappa)}}+\frac{1}{1+t/\kappa} \right)^{3-d}e^{-\frac{\varepsilon}{\kappa}t} \\
      & \quad \times \int_{\langle W \rangle_{s_0,t}}^{\langle W \rangle_{0,t}}(\xi_1-W(t))^2 \, d\xi_1.
    \end{aligned}
  \end{align}
  The last integral is evaluated as
  \begin{equation}
    \label{rWm_lower_step11}
    \int_{\langle W \rangle_{s_0,t}}^{\langle W \rangle_{0,t}}(\xi_1-W(t))^2 \, d\xi_1=3^{-1}\{ (\langle W \rangle_{0,t}-W(t))^3-(\langle W \rangle_{s_0,t}-W(t))^3 \}.
  \end{equation}
  Note that
  \begin{align}
    \label{rWm_lower_step12}
    \begin{aligned}
      \langle W \rangle_{0,t}-\langle W \rangle_{s_0,t}
      & =\frac{1}{t-s_0}\int_{0}^{s_0}W(s)\, ds+\left( \frac{1}{t}-\frac{1}{t-s_0} \right)\int_{0}^{t}W(s)\, ds \\
      & =\frac{s_0}{t-s_0}(\langle W \rangle_{0,s_0}-\langle W \rangle_{0,t})
    \end{aligned}
  \end{align}
  and
  \begin{align}
    \label{rWm_lower_step13}
    \begin{aligned}
      \langle W \rangle_{0,s_0}-\langle W \rangle_{0,t}
      & \geq W(s_0)-\frac{1}{t}\int_{0}^{t}\gamma e^{-C_0 s}\, ds \\
      & \geq \gamma e^{-C_{\gamma}s_0}-2^{3-d}\gamma^3 A_1-\gamma \frac{1-e^{-C_0 t}}{C_0 t} \\
      & \geq C\gamma
    \end{aligned}
  \end{align}
  if $\gamma$ is sufficiently small and $t$ is sufficiently large; therefore,
  \begin{equation}
    \label{rWm_lower_step14}
    \langle W \rangle_{0,t}-\langle W \rangle_{s_0,t}\geq C\frac{\gamma}{t}.
  \end{equation}
  Moreover,
  \begin{align}
    \label{rWm_lower_step15}
    \begin{aligned}
      & \langle W \rangle_{s_0,t}-W(t) \\
      & \geq \frac{1}{t-s_0}\int_{s_0}^{t}\left( \gamma e^{-C_0 s}-\gamma^3 A_1 w_{\varepsilon,\kappa,d}(s)e^{-\frac{\varepsilon}{2\kappa}s} \right) \, ds-\gamma e^{-C_{\gamma}t} \\
      & \geq \gamma e^{-C_0 s_0}\frac{1-e^{-C_0 (t-s_0)}}{C_0 (t-s_0)}-\frac{2^{3-d}\gamma^3 A_1}{(d+1)(t-s_0)}-\gamma e^{-C_{\gamma}t} \\
      & \geq C\frac{\gamma}{t}
    \end{aligned}
  \end{align}
  if $\gamma$ is sufficiently large and $t$ is sufficiently large. Combining this with by eq.~\eqref{rWm_lower_step11} gives
  \begin{align}
    \label{rWm_lower_step16}
    \begin{aligned}
      \int_{\langle W \rangle_{s_0,t}}^{\langle W \rangle_{0,t}}(\xi_1-W(t))^2 \, d\xi_1
      & \geq (\langle W \rangle_{0,t}-\langle W \rangle_{s_0,t})(\langle W \rangle_{s_0,t}-W(t))^2 \\
      & \geq C\frac{\gamma^3}{(1+t)^3}.
    \end{aligned}
  \end{align}
  Now ineqs.~\eqref{rWm_lower_step10} and~\eqref{rWm_lower_step16} prove ineq.~\eqref{rW3m_lower}.

  Next, I prove the following:
  \begin{equation}
    \label{rW4m_lower}
    r_{W,4}^{-}(t)\geq -C\gamma^9 A_{1}^{3}(w_{\varepsilon,\kappa,d}(t))^3 e^{-\frac{3\varepsilon}{2\kappa}t}\bm{1}_{\{ t\geq t_{\gamma}\}}.
  \end{equation}
  Let $(x,\xi)\in S_{t}'\coloneqq I_{W}^{-}(t)\backslash S_t$. If $\xi_1>0$, then
  \begin{equation}
    \label{rWm_lower_step17}
    0<\xi_1<W(\tau_1)<\xi_{1}'(\tau_1)<\cdots<\xi_{1}'(\tau_{N-1})<W(\tau_N)<\xi_{1}'(\tau_N)
  \end{equation}
  and hence $f_0(\xi)-f_W(x,\xi,t)>0$ by Lemma~\ref{lem:fW_formula}; therefore,
  \begin{equation}
    \label{rWm_lower_step18}
    \int_{S_{t}'\cap \{ \xi_1>0 \}}(\xi_1-W(t))^2(f_0-f_W)\, d\xi dS \geq 0.
  \end{equation}
  So to prove ineq.~\eqref{rW4m_lower}, it suffices to prove that
  \begin{align}
    \label{rWm_lower_step19}
    \begin{aligned}
      & \int_{S_{t}'\cap \{ \xi_1<0 \}}(\xi_1-W(t))^2(f_0-f_W)\, d\xi dS \\
      & \geq -C\gamma^9 A_{1}^{3}(w_{\varepsilon,\kappa,d}(t))^3 e^{-\frac{3\varepsilon}{2\kappa}t}\bm{1}_{\{ t\geq t_{\gamma}\}}.
    \end{aligned}
  \end{align}
  Let $(x,\xi)\in S_{t}'\cap \{ \xi_1<0 \}$. Then by ineq.~\eqref{W_lower},
  \begin{align}
    \label{rWm_lower_step20}
    \begin{aligned}
      0
      & >W(t)-\xi_1 \\
      & \geq W(t) \\
      & \geq -\gamma^3 A_1 w_{\varepsilon,\kappa,d}(t)e^{-\frac{\varepsilon}{2\kappa}t}.
    \end{aligned}
  \end{align}
  This in particular implies the inclusion
  \begin{align}
    \label{St'_inclusion}
    \begin{aligned}
      S_{t}'\cap \{ \xi_1<0 \}
      & \subset \{ x\in C_{W}(t) \mid x_1=X_{W}(t)-h/2 \} \\ 
      & \quad \times \{ |\xi_1-W(t)|\leq \gamma^3 A_1 w_{\varepsilon,\kappa,d}(t)e^{-\varepsilon t/(2\kappa)} \} \times \mathbb{R}^2.
    \end{aligned}
  \end{align}
  Ineqs.~\eqref{St'_inclusion}, \eqref{rWm_lower_step20} and $|f_0-f_W|\leq \pi^{-3/2}e^{-|(\xi_{\perp},\hat{\xi})|^2}$ show ineq.~\eqref{rWm_lower_step19}, which implies ineq.~\eqref{rW4m_lower}.

  Combining ineqs.~\eqref{rW3m_lower} and~\eqref{rW4m_lower} proves the proposition. \qed
\end{proof}

\subsection{Existence of a Fixed Point}
\label{sec:fixed_point}
Applying the decay estimates of $r_{W}^{\pm}(t)$ obtained in Section~\ref{sec:rWpm_bounds}, I show in this section that the map $W\mapsto V_W$ (defined by eq.~\eqref{VW}) has a fixed point. The following proposition is the key to the proof.

\begin{proposition}
  \label{prop:VWinK}
  Suppose that $\kappa \geq 1$ and $\varepsilon \leq \kappa C_0/4$. Then there exists positive constants $A_1$ and $A_2$ independent of $\gamma$ such that $W\in \mathcal{K}=\mathcal{K}(\gamma,A_1,A_2)$ implies $V_W \in \mathcal{K}$ for sufficiently small $\gamma$.
\end{proposition}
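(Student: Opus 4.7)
The plan is to apply formula~\eqref{VW_formula} together with the decay estimates from Propositions~\ref{prop:|rWp|_upper}, \ref{prop:|rWm|_upper} and~\ref{prop:rWm_lower}, verifying in turn each of the four conditions in Definition~\ref{def:space1}. Throughout I use that for $W\in\mathcal{K}$ with $\gamma$ small, the bounds~\eqref{W_lower}--\eqref{W_upper} together with $w_{\varepsilon,\kappa,d}\leq 2^{3-d}$ force $|W(s)|\leq \gamma$; since $K(U)=D_0(U)/U$ is even and, by convexity of $D_0$ on $[0,\infty)$ (Lemma~\ref{lem:D0}), non-decreasing on $[0,\infty)$ with $K(0)=C_0$ and $K(\gamma)\leq D_0'(\gamma)=C_\gamma$, this yields the sandwich $C_0\leq K(W(s))\leq C_\gamma$.

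For the lower bound~\eqref{W_lower} on $V_W$, I bound the leading term of~\eqref{VW_formula} from below by $\gamma e^{-C_\gamma t}$. The remainder integral is controlled by combining Propositions~\ref{prop:|rWp|_upper} and~\ref{prop:|rWm|_upper}, which yield $|r_W^+(s)+r_W^-(s)|\leq C\gamma^3 w_{\varepsilon,\kappa,d}(s)e^{-5\varepsilon s/(6\kappa)}$ with $C$ independent of $\gamma$, $A_1$, $A_2$. A convolution-type estimate of the form
\begin{equation*}
  \int_0^t e^{-C_0(t-s)} w_{\varepsilon,\kappa,d}(s) e^{-\alpha s}\,ds \leq C\, w_{\varepsilon,\kappa,d}(t) e^{-\alpha t}, \qquad 0\leq \alpha<C_0/2,
\end{equation*}
proved by splitting the integration at $s=t/2$ (using $e^{-C_0 t/2}\leq C w_{\varepsilon,\kappa,d}(t)e^{-\alpha t}$ for the head and $w_{\varepsilon,\kappa,d}(s)\leq C\, w_{\varepsilon,\kappa,d}(t)$ together with $\int_{t/2}^t e^{-C_0(t-s)}\,ds\leq C_0^{-1}$ for the tail), converts this into a bound $\leq C'\gamma^3 w_{\varepsilon,\kappa,d}(t)e^{-5\varepsilon t/(6\kappa)}\leq C'\gamma^3 w_{\varepsilon,\kappa,d}(t)e^{-\varepsilon t/(2\kappa)}$ (the assumption $\varepsilon\leq \kappa C_0/4$ guarantees $5\varepsilon/(6\kappa)<C_0/2$). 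Choosing $A_1\geq C'$ proves~\eqref{W_lower}.

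For the upper bound~\eqref{W_upper} on $V_W$ at $t\geq 2t_\gamma$, the leading term is $\leq \gamma e^{-C_0 t}$ since $K(W)\geq C_0$, and I need to produce the required negative correction from the integral. Propositions~\ref{prop:|rWp|_upper} and~\ref{prop:rWm_lower} give, for $s\geq t_\gamma$,
\begin{equation*}
  r_W^+(s)+r_W^-(s)\geq (C_2\gamma^5 - C_1\gamma^{21/4})\, w_{\varepsilon,\kappa,d}(s)e^{-\varepsilon s/\kappa}\geq \tfrac{C_2}{2}\gamma^5\, w_{\varepsilon,\kappa,d}(s)e^{-\varepsilon s/\kappa}
\end{equation*}
for $\gamma$ small, the absorption resting on $21/4>5$. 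Restricting the integration in~\eqref{VW_formula} to $s\in[t/2,t]\subset[t_\gamma,t]$, using $e^{-\int_s^t K(W(\tau))d\tau}\geq e^{-C_\gamma(t-s)}$, and exploiting that $w_{\varepsilon,\kappa,d}(s)e^{-\varepsilon s/\kappa}$ is decreasing in $s$, one obtains a lower bound $\geq C_3\gamma^5 w_{\varepsilon,\kappa,d}(t)e^{-\varepsilon t/\kappa}$ with $C_3$ universal, so setting $A_2=C_3$ gives~\eqref{W_upper}.

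The remaining two conditions are then short. On $[0,t_\gamma]$, $W$ is decreasing so no positive-velocity molecule can recollide, whence $r_W^+(t)=0$; the positivity $W>0$ on this interval forces $r_W^-(t)\geq 0$ (both facts appear at the start of the proofs of Propositions~\ref{prop:|rWp|_upper} and~\ref{prop:rWm_lower}); combined with $V_W(t)>0$ on $[0,t_\gamma]$ (a consequence of the just-proved \eqref{W_lower} and $\gamma e^{-C_\gamma t_\gamma}=\gamma^2\gg \gamma^3$), the ODE $V_W'=-K(W)V_W-r_W^+-r_W^-$ is non-positive there. The Lipschitz bound $|V_W'|\leq 1$ follows from $|V_W|\leq C\gamma$, $K(W)$ bounded, and $|r_W^\pm|=O(\gamma^3)$, all small for $\gamma$ small. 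The main obstacle is bookkeeping the constants: the conclusion demands $A_1,A_2$ independent of $\gamma$, so every constant from the preceding propositions must be truly universal, and in particular the absorption of the cross term $\gamma^{21/4}$ into $\gamma^5$ must be done with constants independent of $A_1,A_2$ — which works because only $\gamma$ small is required.
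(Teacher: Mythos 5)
Your proposal follows essentially the same route as the paper's proof: feed the decay estimates of Propositions~\ref{prop:|rWp|_upper}, \ref{prop:|rWm|_upper} and~\ref{prop:rWm_lower} into the explicit formula~\eqref{VW_formula}, pick $A_1$ and $A_2$ from the resulting universal constants, and verify the four defining conditions of $\mathcal{K}$ in turn. The paper splits the history integral at $s=3t/4$ for the lower bound and at $s=t_\gamma$ for the upper bound, whereas you split at $s=t/2$ in both, but these choices are interchangeable; your use of $r_W^+\equiv 0$ and $r_W^-\geq 0$ on $[0,t_\gamma]$ to get monotonicity is equivalent to the paper's direct bound on $dV_W/dt$. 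One small imprecision worth fixing: in the tail of your convolution estimate you write $\int_{t/2}^t e^{-C_0(t-s)}\,ds\leq C_0^{-1}$, but to land on the factor $e^{-\alpha t}$ you must first peel it off, i.e.\ write $e^{-C_0(t-s)}e^{-\alpha s}=e^{-\alpha t}e^{-(C_0-\alpha)(t-s)}$ and then use $\int_{t/2}^t e^{-(C_0-\alpha)(t-s)}\,ds\leq (C_0-\alpha)^{-1}$; as stated your tail bound leaves an $e^{-\alpha t/2}$, not $e^{-\alpha t}$. Also note the paper proves the inequalities \emph{strictly} (hence the choice $A_1=2\bar C$) because strictness is used in the ``any solution'' step of Section~\ref{sec:any_solution}; this is not needed for the proposition itself, so your argument suffices here, but be aware it is tracked in the paper for a reason.
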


\begin{proof}
  Let $W\in \mathcal{K}(\gamma,A_1,A_2)$, where $A_1$ and $A_2$ are specified later. Note that $|W(t)|\leq \gamma$ by ineqs.~\eqref{W_lower} and~\eqref{W_upper} (take $\gamma$ sufficiently small so that $2^{3-d}\gamma^2 A_1 \leq 1$); therefore, $C_0 \leq K(W(t))\leq C_{\gamma}$ by Lemma~\ref{lem:D0}.

  First, I show that $V_W$ satisfies ineq.~\eqref{W_lower} with $W(t)$ replaced by $V_W(t)$ strictly.\footnote{The strictness is needed later in Section~\ref{sec:any_solution}.} Put $R_W(t)=r_{W}^{+}(t)+r_{W}^{-}(t)$. Then by eq.~\eqref{VW_formula}, Propositions~\ref{prop:|rWp|_upper} and~\ref{prop:|rWm|_upper},
  \begin{align}
    \label{VWinK_step1}
    \begin{aligned}
      V_W(t)
      & \geq \gamma e^{-C_{\gamma}t}-\int_{0}^{\frac{3}{4}t}e^{-C_0(t-s)}R_W(s)\, ds-\int_{\frac{3}{4}t}^{t}e^{-C_0(t-s)}R_W(s)\, ds \\
      & \geq \gamma e^{-C_{\gamma}t}-C\gamma^3 e^{-\frac{C_0}{4}t}-C\gamma^3 w_{\varepsilon,\kappa,d}(t)e^{-\frac{5\varepsilon}{8\kappa}t} \\
      & \geq \gamma e^{-C_{\gamma}t}-\bar{C}\gamma^3 w_{\varepsilon,\kappa,d}(t)e^{-\frac{\varepsilon}{2\kappa}t},
    \end{aligned}
  \end{align}
  where $\bar{C}$ is a positive constant independent of $\gamma$ and $A_1$. Now put $A_1=2\bar{C}$. Then $V_W$ satisfies ineq.~\eqref{W_lower} strictly.

  Next, I show that $V_W$ satisfies ineq.~\eqref{W_upper} with $W(t)$ replaced by $V_W$ (strictly for $t\geq 0$). First, by Propositions~\ref{prop:|rWp|_upper} and~\ref{prop:rWm_lower},
  \begin{equation}
    \label{RW_lower}
    R_W(t)\geq C\gamma^5 w_{\varepsilon,\kappa,d}(t)e^{-\frac{\varepsilon}{\kappa}t}
  \end{equation}
  if $\gamma$ is sufficiently small. In particular, $R_W(t)\geq 0$; therefore, $V_W(t)<\gamma e^{-C_0 t}$ ($t>0$) follows from eq.~\eqref{VW_formula}. So I assume in the following that $t\geq 2t_{\gamma}$. By eq.~\eqref{VW_formula} and ineq.~\eqref{RW_lower},
  \begin{align}
    \label{VWinK_step2}
    \begin{aligned}
      V_W(t)
      & <\gamma e^{-C_0 t}-C\gamma^5 w_{\varepsilon,\kappa,d}(t)e^{-\frac{\varepsilon}{\kappa}t}\int_{t_{\gamma}}^{t}e^{-C_{\gamma}(t-s)}\, ds \\
      & =\gamma e^{-C_0 t}-C\gamma^5 w_{\varepsilon,\kappa,d}(t)e^{-\frac{\varepsilon}{\kappa}t}\frac{1-e^{-C_{\gamma}(t-t_{\gamma})}}{C_{\gamma}} \\
      & \leq \gamma e^{-C_0 t}-\tilde{C}\gamma^5 w_{\varepsilon,\kappa,d}(t)e^{-\frac{\varepsilon}{\kappa}t}
    \end{aligned}
  \end{align}
  if $\gamma$ is sufficiently small (so that $t_{\gamma}$ is sufficiently large), where $\tilde{C}$ is a positive constant independent of $\gamma$ and $A_2$. Put $A_2=\tilde{C}$. Then $V_W$ satisfies ineq.~\eqref{W_upper} (strictly for $t>0$).

  Next, I show that $V_W$ is decreasing on the interval $[0,t_{\gamma}]$. By differentiating eq.~\eqref{VW_formula},
  \begin{align}
    \label{VWinK_step3}
    \begin{aligned}
      \frac{dV_W(t)}{dt}
      & =-K(W(t))\gamma e^{-\int_{0}^{t}K(W(s))\, ds} \\
      & \quad +K(W(t))\int_{0}^{t}e^{-\int_{s}^{t}K(W(\tau))\, d\tau}R_W(s)\, ds-R_W(t).
    \end{aligned}
  \end{align}
  And by Propositions~\ref{prop:|rWp|_upper}, \ref{prop:|rWm|_upper} and $R_W(t)\geq 0$,
  \begin{equation}
    \label{VWinK_step4}
    \frac{dV_W(t)}{dt}\leq -C_0 \gamma e^{-C_{\gamma}t}+C\gamma^3<-\frac{C_0}{2}\gamma e^{-C_{\gamma}t}<0
  \end{equation}
  for $t\leq t_{\gamma}=(\log \gamma^{-1})/C_{\gamma}$ if $\gamma$ is sufficiently small.

  Finally, by eq.~\eqref{VW}, Propositions~\ref{prop:|rWp|_upper} and \ref{prop:|rWm|_upper},
  \begin{equation}
    \label{VWinK_step5}
    \left| \frac{dV_W(t)}{dt} \right| \leq C_{\gamma}\gamma+C\gamma^3 \leq 1
  \end{equation}
  if $\gamma$ is sufficiently small. \qed
\end{proof}

In what follows, I set $A_1=2\bar{C}$ and $A_2=\tilde{C}$, where $\bar{C}$ and $\tilde{C}$ are those appearing in the proof above.

The proposition just proved shows that the map
\begin{equation}
  \label{map_well_defined}
  \mathcal{K}\ni W\mapsto V_W \in \mathcal{K}
\end{equation}
is well-defined if $\gamma$ is sufficiently small. As claimed in the next proposition, this map is continuous in $C_b([0,\infty))$ --- the space of bounded continuous functions on the interval $[0,\infty)$ --- whose topology is defined by the norm $||W||=\sup_{0\leq t<\infty}|W(t)|$.

\begin{proposition}
  \label{prop:continuity}
  If $\{ W_j \}_{j=1}^{\infty}\subset \mathcal{K}$, $W\in \mathcal{K}$ and $W_k \to W$ in $C_b([0,\infty))$, then $V_{W_j}\to V_W$ in $C_b([0,\infty))$.
\end{proposition}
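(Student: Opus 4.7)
My plan is to apply the explicit formula~\eqref{VW_formula} and split the analysis into continuous dependence of the exponential damping factor and pointwise-in-$s$ convergence of the recollision source, controlled by a uniform integrable dominating bound. Writing $E_W(t) = \exp(-\int_0^t K(W(s))\,ds)$ and $R_W(s) = r_W^+(s) + r_W^-(s)$, formula~\eqref{VW_formula} becomes
\begin{equation}
V_W(t) = \gamma E_W(t) - \int_0^t \frac{E_W(t)}{E_W(s)}\, R_W(s)\,ds,
\end{equation}
so estimating $|V_{W_j}(t) - V_W(t)|$ reduces to controlling the differences of these two factors.

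For the exponential factor, every $W \in \mathcal{K}$ satisfies $|W(t)| \leq \gamma$ (by ineqs.~\eqref{W_lower}--\eqref{W_upper} with $\gamma$ small), so by Lemma~\ref{lem:D0} the function $K$ is Lipschitz on $[-\gamma, \gamma]$ with some constant $L$ and satisfies $C_0 \leq K(U) \leq C_\gamma$ there. Combined with $|e^{-a}-e^{-b}| \leq e^{-\min(a,b)}|a-b|$, this gives
\begin{equation}
\left|\frac{E_{W_j}(t)}{E_{W_j}(s)} - \frac{E_W(t)}{E_W(s)}\right| \leq L(t-s)\,e^{-C_0(t-s)}\, \|W_j - W\|,
\end{equation}
which is bounded uniformly in $0 \leq s \leq t$ by $C\|W_j - W\|$.

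For the integral term, Propositions~\ref{prop:|rWp|_upper} and~\ref{prop:|rWm|_upper} provide a uniform dominating bound $|R_W(s)| \leq C \gamma^3 w_{\varepsilon,\kappa,d}(s)$ for every $W \in \mathcal{K}$, and $w_{\varepsilon,\kappa,d}$ is integrable on $[0,\infty)$ (indeed $w_{\varepsilon,\kappa,d}(s) \leq 2^{3-d}/(1+s)^{d+2}$). Given $\eta > 0$, I choose $T$ making the tail $\int_T^\infty w_{\varepsilon,\kappa,d}$ small, treat the contribution from $[T,\infty)$ as uniformly negligible, and on $[0,T]$ combine the bound just proved for the exponential ratio with pointwise convergence $R_{W_j}(s) \to R_W(s)$ (to be established) via dominated convergence; a standard epsilon-chasing then yields $V_{W_j} \to V_W$ uniformly in $t \in [0,\infty)$.

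The main obstacle is the pointwise convergence $R_{W_j}(s) \to R_W(s)$ itself. After translating $x_1 \mapsto x_1 - X_W(s)$, $\xi_1 \mapsto \xi_1 - W(s)$ to fix the domain $I^{\pm}_W(s)$ to a $W$-independent set, the problem reduces to a.e. convergence of the integrand. By eq.~\eqref{rpm} and Lemma~\ref{lem:fW_formula}, that integrand depends on $W$ only through the finite sequence of recollision times $\{\tau_n(x,\xi;W)\}$ and reflected velocities $\{\xi'(\tau_n)\}$, and is uniformly bounded by the Gaussian weight of Lemma~\ref{lem:fW_bound}. Since $X_{W_j} \to X_W$ uniformly on compact intervals and the $\tau_n$ are first intersection times of straight-line trajectories with the Lipschitz moving boundary $\partial C_W(\cdot)$, a transversality argument (applying Proposition~A.1 of~\cite{Caprino2006} to discard the Lebesgue-null set of tangential and infinite-collision trajectories) gives $\tau_n(x,\xi;W_j) \to \tau_n(x,\xi;W)$ for a.e. $(x,\xi)$, and hence the desired pointwise convergence of $R_{W_j}(s)$.
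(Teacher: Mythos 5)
Your proposal is correct and follows essentially the same route as the paper: start from the explicit formula~\eqref{VW_formula}, split $V_{W_j}-V_W$ into a piece controlled by the Lipschitz continuity of $K$ (giving uniform smallness of the difference of exponential weights) and a piece controlled by the recollision source $R_W$, then combine a uniform-in-$j$ decaying dominator (from Propositions~\ref{prop:|rWp|_upper} and~\ref{prop:|rWm|_upper}) for the tail with dominated convergence on a compact time interval. The only difference is that you sketch the pointwise convergence $R_{W_j}(s)\to R_W(s)$ via continuity of the recollision times, whereas the paper simply cites the corresponding argument in~\cite{Caprino2006}.
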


\begin{proof}
  Similar argument as in~\cite[pp. 179--180]{Caprino2006} shows that $R_{W_j}(t)\to R_W(t)$ as $j\to \infty$ for all $t\geq 0$. And by eq.~\eqref{VW_formula},
  \begin{align}
    \label{continuity_step1}
    \begin{aligned}
      & V_W(t)-V_{W_j}(t) \\
      & =\gamma\left( e^{-\int_{0}^{t}K(W(s))\, ds}-e^{-\int_{0}^{t}K(W_j(s))\, ds} \right) \\
      & \quad -\int_{0}^{t}e^{-\int_{s}^{t}K(W(\tau))\, d\tau}(R_W(s)-R_{W_j}(s))\, ds \\
      & \quad -\int_{0}^{t}\left( e^{-\int_{s}^{t}K(W(\tau))\, d\tau}-e^{-\int_{s}^{t}K(W_j(\tau))\, d\tau} \right) R_{W_j}(s)\, ds.
    \end{aligned}
  \end{align}
  I prove in the following that
  \begin{equation}
    \label{continuity_step2}
    \int_{0}^{t}e^{-\int_{s}^{t}K(W(\tau))\, d\tau}(R_W(s)-R_{W_j}(s))\, ds\to 0
  \end{equation}
  as $j\to \infty$ uniformly in $t\geq 0$. (The remaining first and third terms also vanishes as $j\to \infty$ uniformly in $t\geq 0$; I leave them to the reader.) By Propositions~\ref{prop:|rWp|_upper} and \ref{prop:|rWm|_upper}, $R_W(t)$ and $R_{W_j}(t)$ decay as $t\to \infty$ uniformly in $j$; therefore, for any $\delta>0$, there exists $T>0$ such that
  \begin{equation}
    \label{continuity_step3}
    \int_{T}^{t}e^{-C_0(t-s)}|R_W(s)-R_{W_j}(s)|\, ds<\delta/2
  \end{equation}
  for all $t\geq T$. By the Lebesgue dominated convergence theorem; the pointwise convergence $R_{W_j}(t)\to R_W(t)$ as $j\to \infty$; and the uniform boundedness of $R_W$ and $R_{W_j}$, there exists $N=N(T)\in \mathbb{N}$ such that
  \begin{equation}
    \label{continuity_step4}
  \int_{0}^{T}|R_W(s)-R_{W_j}(s)|\, ds<\delta/2
  \end{equation}
  for all $j\geq N$. Ineqs.~\eqref{continuity_step3} and~\eqref{continuity_step4} show that
  \begin{equation}
    \label{continuity_step5}
    \left| \int_{0}^{t}e^{-\int_{s}^{t}K(W(\tau))\, d\tau}(R_W(s)-R_{W_j}(s))\, ds \right|<\delta
  \end{equation}
  for $j\geq N$ and $t\geq 0$. This shows eq.~\eqref{continuity_step2} and proves the proposition. \qed
\end{proof}

\begin{remark}
  \label{rem:C1}
  In the proof above, I used the fact that $R_{W_j}(t)\to R_W(t)$ as $j\to \infty$; similar proof shows that $R_W(s)\to R_W(t)$ as $s\to t$: $R_W(t)$ is continuous in $t$. This implies by eq.~\eqref{VW} that $V_W \in C^1([0,\infty))$ for $W\in \mathcal{K}$.
\end{remark}

Now, note that $\mathcal{K}$ is a closed convex subset of $C_b([0,\infty))$. And since $|dW(t)/dt|\leq 1$ for all $W\in \mathcal{K}$, $\mathcal{K}$ is equi-continuous. Moreover, by ineqs.~\eqref{W_lower} and~\eqref{W_upper}, $\mathcal{K}$ is uniformly decaying as $t\to \infty$; therefore, by the Arzel\`{a}--Ascoli theorem (on a non-compact space $[0,\infty)$), $\mathcal{K}$ is compact in $C_b([0,\infty))$.

By the convexity, compactness of $\mathcal{K}$ and the continuity of the map $\mathcal{K}\in W\mapsto V_W \in \mathcal{K}$, Schauder's fixed theorem shows the existence of a fixed point $V\in \mathcal{K}$: $V=V_V$. Then $(f_V,V)$ is a solution to eqs.~\eqref{Lorentz}, \eqref{init_gas} and~\eqref{specular}; and eq.~\eqref{Newton} with $V_0=\gamma$. Since $V\in \mathcal{K}$, ineqs.~\eqref{lower_thm1} and~\eqref{upper_thm1} are satisfied.

\subsection{``Any Solution'' Part of Theorem~\ref{thm:large_kappa}}
\label{sec:any_solution}
Finally, this section proves that any solution $(f,V)$ satisfies ineqs.~\eqref{lower_thm1}, \eqref{upper_thm1} and $V$ is decreasing on the interval $[0,t_{\gamma}]$.

Let $(f,V)$ be a solution and let $\mathcal{F}$ be the set of $t>0$ such that
\begin{equation}
  \label{any_solution_step1}
  V(t)<\gamma e^{-C_{\gamma}t}-\gamma^3 A_1 w_{\varepsilon,\kappa,d}(t)e^{-\frac{\varepsilon}{2\kappa}t}
\end{equation}
or
\begin{equation}
  \label{any_solution_step2}
  V(t)>\gamma e^{-C_0 t}-\gamma^5 A_2 w_{\varepsilon,\kappa,d}(t)e^{-\frac{\varepsilon}{\kappa}t}\bm{1}_{\{ t\geq 2t_{\gamma} \}}.
\end{equation}
And let $T=\inf \mathcal{F}$.\footnote{The infimum of the empty set is $+\infty$.} 

I show first that $T>0$. Note that ineq.~\eqref{any_solution_step1} is not satisfied for some positive time since $V(0)=\gamma$ and $V\in C([0,\infty))$. Moreover, ineq.~\eqref{any_solution_step2} is also violated for some positive time $t$ since $\lim_{t\to +0}dV(t)/dt=-D_0(\gamma)<-C_0 \gamma$ and $V\in C^1([0,\infty))$ (see Remark~\ref{rem:C1}); therefore, $T>0$.

Next, let $\mathcal{F}'$ be the set of $t>0$ such that
\begin{equation}
  \label{any_solution_step3}
  \frac{dV(t)}{dt}\geq -\frac{C_0}{2}\gamma^2.
\end{equation}
And let $T'=\inf \mathcal{F}'$. Then $T'>0$ if $\gamma$ is sufficiently small: Ineq.~\eqref{any_solution_step3} is violated for some positive time since $\lim_{t\to +0}dV(t)/dt=-D_0(\gamma)$ and $V\in C^1([0,\infty))$. Note that
\begin{equation}
  \label{any_solution_step4}
  \frac{dV(T)}{dt}=-\frac{C_0}{2}\gamma^2
\end{equation}
and $V$ is decreasing on the interval $[0,T']$.

I show now that $T=+\infty$. So, suppose that $T<+\infty$. Then
\begin{equation}
  \label{any_solution_step5}
  V(t)\geq \gamma e^{-C_{\gamma}t}-\gamma^3 A_1 w_{\varepsilon,\kappa,d}(t)e^{-\frac{\varepsilon}{2\kappa}t}
\end{equation}
and
\begin{equation}
  \label{any_solution_step6}
  V(t)\leq \gamma e^{-C_0 t}-\gamma^5 A_2 w_{\varepsilon,\kappa,d}(t)e^{-\frac{\varepsilon}{\kappa}t}\bm{1}_{\{ t\geq 2t_{\gamma} \}}
\end{equation}
for $t\leq T$. And
\begin{equation}
  \label{any_solution_step7}
  V(T)=\gamma e^{-C_{\gamma}T}-\gamma^3 A_1 w_{\varepsilon,\kappa,d}(T)e^{-\frac{\varepsilon}{2\kappa}T}
\end{equation}
or
\begin{equation}
  \label{any_solution_step8}
  V(T)=\gamma e^{-C_0 T}-\gamma^5 A_2 w_{\varepsilon,\kappa,d}(T)e^{-\frac{\varepsilon}{\kappa}T}\bm{1}_{\{ T\geq 2t_{\gamma} \}}.
\end{equation}
Suppose first that $T\leq T'$. In particular, $V$ is decreasing on the interval $[0,T]$. Now, all the arguments leading to ineqs.~\eqref{VWinK_step1} and~\eqref{VWinK_step2} can be repeated (using ineqs.~\eqref{any_solution_step5}, \eqref{any_solution_step6} and the monotonicity of $V$ on the interval $[0,T]$) to show that
\begin{equation}
  \label{any_solution_step9}
  V(t)>\gamma e^{-C_{\gamma}t}-\gamma^3 A_1 w_{\varepsilon,\kappa,d}(t)e^{-\frac{\varepsilon}{2\kappa}t}
\end{equation}
and
\begin{equation}
  \label{any_solution_step10}
  V(t)<\gamma e^{-C_0 t}-\gamma^5 A_2 w_{\varepsilon,\kappa,d}(t)e^{-\frac{\varepsilon}{\kappa}t}\bm{1}_{\{ t\geq 2t_{\gamma} \}}
\end{equation}
for $t\leq T$. This contradicts eqs.~\eqref{any_solution_step7} or \eqref{any_solution_step8}; therefore, $T\geq T'$. Again, all the arguments leading to ineq.~\eqref{VWinK_step4} can be repeated (using ineqs.~\eqref{any_solution_step5}, \eqref{any_solution_step6} and the monotonicity of $V$ on the interval $[0,T']$) to show that
\begin{equation}
  \label{any_solution_step11}
  \frac{dV(t)}{dt}<-\frac{C_0}{2}\gamma^2
\end{equation}
for $t\leq t_{\gamma}\cap T'$. This and eq.~\eqref{any_solution_step4} imply $T'\geq t_{\gamma}$. This shows that $V$ is decreasing on the interval $[0,t_{\gamma}]$. Then, again, all the arguments leading to ineqs.~\eqref{VWinK_step1} and~\eqref{VWinK_step2} can be repeated (using ineqs.~\eqref{any_solution_step5}, \eqref{any_solution_step6} and the monotonicity of $V$ on the interval $[0,t_{\gamma}]$) to show that ineqs.~\eqref{any_solution_step9} and~\eqref{any_solution_step10} hold for $t\leq T$. This contradicts eqs.~\eqref{any_solution_step7} or \eqref{any_solution_step8}: $T=+\infty$. And repeating the argument above shows $T'\geq t_{\gamma}$; therefore, $(f,V)$ satisfies ineqs.~\eqref{lower_thm1}, \eqref{upper_thm1} and $V$ is decreasing on the interval $[0,t_{\gamma}]$. This completes the proof of Theorem~\ref{thm:large_kappa}.
\section{Discussion} 
\label{sec:discussion}
Theorem~\ref{thm:large_kappa} gives a theoretical basis of the numerical observation given in~\cite{Tsuji2012}: The interaction of the molecules with the dispersed obstacles causes exponential decay of the velocity $V(t)$ if $\varepsilon>0$; if, on the other hand, $\varepsilon=0$, then $V(t)$ decays algebraically with a rate independent of the spatial dimension $d$.

The proof reveals the mathematical structure determining the long time behavior of $V(t)$. In particular, it explains why the algebraic decay rate in the case of $\varepsilon=0$ is independent of $d$: In the integral $J$ (eq.~\eqref{J}), $\hat{\xi}$ satisfying
\begin{equation}
  \label{xi_hat_major_contribution}
  \nu_{\varepsilon}(|\hat{\xi}|)=O(\kappa/t) \quad (t\to \infty)
\end{equation}
gives the major contribution. And since $\nu_{0}(z)=\pi^{1/2}z/2$, this implies
\begin{equation}
  \label{xi_hat_restriction}
  |\hat{\xi}|=O(\kappa/t) \quad (t\to \infty).
\end{equation}
Roughly speaking, this means that the length of the rigid body is effectively finite also in the $\hat{\xi}$-direction (see ineq.~\eqref{xiperp_estimate} in Lemma~\ref{lem:xi_estimates}); therefore, dimensional dependence disappears.

Next, I discuss some variants of Theorem~\ref{thm:large_kappa}.

As remarked below Theorem~\ref{thm:large_kappa} (Remark~\ref{rem:thm1} (iv)), the long time behavior of $V(t)$ is qualitatively different if $\varepsilon \geq 2\kappa C_0$:

\begin{theorem}
  \label{thm:small_kappa}
  Suppose that $\kappa \geq 1$ and $\varepsilon \geq 2\kappa C_0$. Then for $\gamma>0$ sufficiently small, there exists a solution $(f,V)$ to eqs.~\eqref{Lorentz}, \eqref{init_gas} and~\eqref{specular}; and eq.~\eqref{Newton} with $V_0=\gamma$ satisfying the following inequalities:
  \begin{equation} 
    \label{small_kappa_bounds}
    \frac{1}{2}\gamma e^{-C_0 t}\leq V(t)\leq \gamma e^{-C_0 t}.
  \end{equation}
  Moreover, any solution $(f,V)$ satisfies these inequalities and $V$ is decreasing on the whole interval $[0,\infty)$.
\end{theorem}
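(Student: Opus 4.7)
The plan is to mirror the Schauder fixed point strategy of Theorem~\ref{thm:large_kappa} but in a much simpler function space reflecting the essentially Markovian behavior $V(t) \approx \gamma e^{-C_0 t}$. Concretely, I define $\mathcal{K}$ to consist of Lipschitz continuous $W \colon [0,\infty) \to \mathbb{R}$ with $W(0)=\gamma$, $\tfrac{1}{2}\gamma e^{-C_0 t}\leq W(t)\leq \gamma e^{-C_0 t}$, $W$ decreasing on $[0,\infty)$, and $|dW/dt|\leq 1$. The map $W\mapsto V_W$ is defined exactly as in Section~\ref{sec:fixed_point} via eq.~\eqref{VW_formula}, and the goal is to show that this map sends $\mathcal{K}$ into itself.

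Two estimates on $r_W^{\pm}$ drive the proof. First, $r_W^{+}(t)\equiv 0$: for $(x,\xi)\in I_W^{+}(t)$ a recollision time $\tau_1>0$ would force $\xi_1=\langle W\rangle_{\tau_1,t}>W(t)$ (since $W$ is positive and strictly decreasing), contradicting $\xi_1<W(t)$. Second, $|r_W^{-}(t)|\leq C\gamma^{3}e^{-2C_0 t}$: for $(x,\xi)\in I_W^{-}(t)$ with $\tau_1>0$, the monotonicity and upper bound on $W\in \mathcal{K}$ give
\begin{equation*}
  0<\xi_1=\langle W\rangle_{\tau_1,t}\leq W(\tau_1)\leq \gamma e^{-C_0\tau_1},
\end{equation*}
so $(\xi_1-W(t))^2 \leq \gamma^{2}e^{-2C_0\tau_1}$, and combining with Lemma~\ref{lem:fW_bound} and the lower bound $\nu_{\varepsilon}\geq \varepsilon$ from ineqs.~\eqref{nu_bounds} yields
\begin{equation*}
  (\xi_1-W(t))^2|f_W-f_0|\leq \pi^{-3/2}\gamma^{2}e^{-2C_0\tau_1}e^{-|(\xi_{\perp},\hat\xi)|^2}e^{-\frac{\varepsilon}{\kappa}(t-\tau_1)}.
\end{equation*}
The hypothesis $\varepsilon\geq 2\kappa C_0$ is now used in a crucial way: it upgrades $e^{-\varepsilon(t-\tau_1)/\kappa}\leq e^{-2C_0(t-\tau_1)}$, so the temporal factor collapses to $e^{-2C_0 t}$ independently of $\tau_1$; integrating against the Gaussian in $(\xi_{\perp},\hat\xi)\in \mathbb{R}^{2}$, over the interval $\xi_1\in (W(t),\langle W\rangle_{0,t}]$ of length $O(\gamma)$, and over $x_\perp$ on the end cap (area $O(1)$) then gives the claimed bound.

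Given these estimates, checking $V_W\in \mathcal{K}$ is easy. Since $K(U)=C_0+O(U^2)$ near $U=0$ by Lemma~\ref{lem:D0} and $W(s)^2\leq \gamma^{2}e^{-2C_0 s}$, one has $\int_{0}^{t}K(W(s))\,ds\in [C_0 t,C_0 t+C\gamma^{2}]$. Together with $r_W^{-}\geq 0$ (valid whenever $W>0$, as noted at the start of the proof of Proposition~\ref{prop:rWm_lower}) and $r_W^{+}\equiv 0$, eq.~\eqref{VW_formula} gives $V_W(t)\leq \gamma e^{-C_0 t}$ at once, while the lower bound follows from
\begin{equation*}
  V_W(t)\geq \gamma e^{-C_0 t}(1-C\gamma^{2})-C\gamma^{3}e^{-C_0 t}\int_{0}^{t}e^{-C_0 s}\,ds \geq \tfrac{1}{2}\gamma e^{-C_0 t}
\end{equation*}
for $\gamma$ sufficiently small. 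Monotonicity $dV_W/dt=-K(W)V_W-r_W^{-}<0$ and the Lipschitz bound $|dV_W/dt|\leq 1$ come directly from eq.~\eqref{VW}. Continuity of $W\mapsto V_W$ in $C_b([0,\infty))$ follows exactly as in Proposition~\ref{prop:continuity}, and the Arzelà--Ascoli-based compactness of $\mathcal{K}$ then lets Schauder's theorem produce a fixed point $V\in \mathcal{K}$; the pair $(f_V,V)$ is the desired solution.

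For the ``any solution'' part, I would run the first-exit-time bootstrap of Section~\ref{sec:any_solution}: let $T=\inf\mathcal{F}$ for $\mathcal{F}$ the set of times at which either bound in~\eqref{small_kappa_bounds} or the monotonicity fails, show $T>0$ using $V\in C^{1}([0,\infty))$ (Remark~\ref{rem:C1}) together with $\lim_{t\to +0}\dot V(t)=-D_0(\gamma)<-C_0\gamma$, and then repeat the estimates of the previous paragraph on $[0,T]$ to obtain strict versions of the bounds, forcing $T=+\infty$. The main obstacle in the whole argument is engineering the $e^{-2C_0 t}$ decay of $|r_W^{-}(t)|$ --- everything else is soft --- and this is exactly where the hypothesis $\varepsilon\geq 2\kappa C_0$ enters, matching the geometric decay of $\xi_1\leq \gamma e^{-C_0\tau_1}$ against the collisional damping $e^{-\varepsilon(t-\tau_1)/\kappa}$ to a product uniformly bounded by $e^{-2C_0 t}$ for all $\tau_1\in [0,t]$.
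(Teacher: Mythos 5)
Your proof is correct and follows the same overall strategy as the paper's appendix --- Schauder's fixed point theorem in the same space $\mathcal{K}(\gamma)$ defined in Definition~\ref{def:space2}, the same observations $r_W^{+}\equiv 0$ and $r_W^{-}\geq 0$, and the same bootstrap for the ``any solution'' part --- but your estimate of $r_W^{-}(t)$ is obtained by a slightly different and cleaner route that actually yields a stronger bound. The paper's Proposition~\ref{prop:rWm_upper_small_kappa} splits $L_t$ into $\{\tau_1\leq t/2\}$ and $\{\tau_1>t/2\}$, reuses the machinery of Proposition~\ref{prop:|rWm|_upper} (including the $|\xi_{\perp}|\lesssim 1/t$ constraint and Lemma~\ref{lem:integral}) to get $r_{W,*}^{-}\lesssim \gamma^3 w_{\varepsilon,\kappa,d}(t)e^{-\varepsilon t/(2\kappa)}$ and $r_{W,**}^{-}\lesssim \gamma^3 e^{-3C_0 t/2}$, and concludes $r_W^{-}(t)\leq C\gamma^3 e^{-C_0 t}/(1+t)^{d+2}$. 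You instead observe that the pointwise product $(\xi_1-W(t))^2\cdot e^{-\nu_{\varepsilon}(|\xi|)(t-\tau_1)/\kappa}$ is already bounded by $\gamma^2 e^{-2C_0\tau_1}\cdot e^{-2C_0(t-\tau_1)}=\gamma^2 e^{-2C_0 t}$ uniformly in $\tau_1$, using only $\nu_{\varepsilon}\geq \varepsilon\geq 2\kappa C_0$ and the monotone decay of $W$; integrating crudely over $\xi_1$ in an $O(\gamma)$ window and over the Gaussian in $(\xi_{\perp},\hat{\xi})$ then gives $r_W^{-}(t)\leq C\gamma^3 e^{-2C_0 t}$, which dominates the paper's bound for large $t$. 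This avoids the $\tau_1$-splitting and the use of Lemmas~\ref{lem:integral_preliminary}--\ref{lem:integral} entirely; both bounds are more than sufficient for the fixed-point argument, but yours is shorter and makes the role of the hypothesis $\varepsilon\geq 2\kappa C_0$ especially transparent as an exact matching of geometric decay against collisional damping. The rest of your argument (bounding $K(W)\in[C_0,C_0+CW^2]$, the resulting two-sided estimates on $V_W$, monotonicity and Lipschitz bound, continuity and compactness, the exit-time bootstrap) coincides with the paper's.
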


Theorem~\ref{thm:small_kappa} implies that $V(t)$ is always positive, which is in contrast with the occurrence of sign change in Theorem~\ref{thm:large_kappa} (Remark~\ref{rem:thm1} (iii)). The basic strategy of the proof is the same as that of Theorem~\ref{thm:large_kappa}. The proof is given in the appendix.

Theorem~\ref{thm:large_kappa} can be extended to the case with a constant external force $E>0$: that is, eq.~\eqref{Newton} is replaced by
\begin{equation}
  \label{Newton_E}
  \frac{d}{dt}V(t)=E-D(t), \quad V(0)=V_0.
\end{equation}
Note that in the free molecular case, this problem was considered in~\cite{Caprino2007,Caprino2006}: It was shown that the velocity $V(t)$ approaches the terminal velocity $V_{\infty}=D_{0}^{-1}(E)$ algebraically as $V_{\infty}-V(t)\approx t^{-(d+2)}$. In the Lorentz gas case, the approach becomes exponential (even if $\varepsilon=0$). The proof is similar to that of Theorem~\ref{thm:large_kappa} --- much easier because $r_{W}^{\pm}(t)\geq 0$ becomes trivial in this case.

There are more variants studied in the free molecular case: with linear restoring force~\cite{Caprino2007}, other rigid body shapes~\cite{Cavallaro2007,Sisti2014,Fanelli2016}, the rigid body replaced by an elastic body~\cite{Cavallaro2012} and when the gas fills the half-space~\cite{Koike2017}. These all assume the specular boundary condition; other boundary conditions including the Maxwell boundary condition were studied in~\cite{Aoki2008,Chen2014,Chen2015}. It is reasonable to expect that Theorem~\ref{thm:large_kappa} can be extended to these variants. Also, the case with linear restoring force under the diffuse boundary condition treated numerically in~\cite{Tsuji2012} (both the free molecular and Lorentz gas cases) may be handled mathematically using the techniques developed in this paper and the references above; I have not, however, examined these cases in detail.

Lastly, I briefly comment on the case of the Boltzmann equation. Tsuji and Aoki~\cite{Tsuji2013} considered this case numerically (with linear restoring force under the diffuse boundary condition in the spatially one dimensional case $d=1$) and observed that $V(t)$ decays algebraically: $V(t)\approx t^{-3/2}$. Note that the decay rate $-3/2$ is slower than $-2$ in the free molecular case~\cite{Tsuji2012}. This result is in sharp contrast with the exponential decay in the Lorentz gas case. As the analysis in this paper suggests, intermolecular collisions destroy memory effect due to recollision; therefore, I suspect that it is due to more fluid like effect as discussed in~\cite{Belmonte2001,Vazquez2003,Cavallaro2011,Butta2015}. Mathematical understanding of the memory effect in rarefied gases needs further investigation.

\appendix
\section*{Appendix: Proof of Theorem~\ref{thm:small_kappa}}
\label{appendix:proof_thm:small_kappa}
The basic strategy is the same as that of Theorem~\ref{thm:large_kappa}: (i) Define an appropriate function space $\mathcal{K}$. (ii) Prove decay estimates of $r_{W}^{\pm}(t)$ given $W\in \mathcal{K}$. (iii) Prove that $V_W$ again belongs to $\mathcal{K}$. (iv) Apply Schauder's fixed point theorem to show that the map $\mathcal{K}\ni W\mapsto V_W \in \mathcal{K}$ has a fixed point $V\in \mathcal{K}$ --- this shows the existence part. (v) Prove that any solution $(f,V)$ satisfies ineqs.~\eqref{small_kappa_bounds} and that $V$ is decreasing on the interval $[0,\infty)$.

Only steps (i) through (iii) are explained in this section; Steps (iv) and (v) can be carried out similarly as in the proof of Theorem~\ref{thm:large_kappa}.

Step (i) is the following:

\begin{definition}
  \label{def:space2}
  Let $\gamma>0$. A Lipschitz continuous function $W\colon [0,\infty)\to \mathbb{R}$ belongs to $\mathcal{K}=\mathcal{K}(\gamma)$ if $W(0)=\gamma$; $W$ is decreasing on the interval $[0,\infty)$; and satisfies
  \begin{equation}
    \label{space2_bounds}
    \frac{1}{2}\gamma e^{-C_0 t}\leq W(t)\leq \gamma e^{-C_0 t}
  \end{equation}
  and $|dW(t)/dt|\leq 1$.
\end{definition}

Let $W\in \mathcal{K}$. Note that $r_{W}^{\pm}(t)\geq 0$ since $W(t)>0$. Moreover, $r_{W}^{+}(t)=0$ since $W$ is decreasing on the interval $[0,\infty)$; therefore, only $r_{W}^{-}(t)$ needs analysis.

Step (ii) is to prove the following:

\begin{proposition}
  \label{prop:rWm_upper_small_kappa}
  Let $\kappa \geq 1$ and $\varepsilon \geq 2\kappa C_0$. If $W\in \mathcal{K}(\gamma)$, then
  \begin{equation}
    \label{rWm_upper_small_kappa}
    0\leq r_{W}^{-}(t)\leq C\gamma^3 \frac{e^{-C_0 t}}{(1+t)^{d+2}}
  \end{equation}
  for some positive constant $C$ independent of $\gamma$.
\end{proposition}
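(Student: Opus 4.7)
The bound has two parts. For the non-negativity $r_W^-(t)\geq 0$, I would reuse the opening argument of the proof of Proposition~\ref{prop:rWm_lower}: every $W\in\mathcal{K}(\gamma)$ is strictly positive and monotonically decreasing on $[0,\infty)$, and on $I_W^-(t)$ the recollision identity forces $\xi_1(\tau_k)=\langle W\rangle_{\tau_k,\tau_{k-1}}<W(\tau_k)$; then eq.~\eqref{xi_prime_squared} gives $|\xi'(\tau_k)|^2\geq|\xi(\tau_k)|^2$, Lemma~\ref{lem:fW_formula} yields $f_0-f_W\geq 0$ on $I_W^-(t)$, and $r_W^-(t)\geq 0$ follows from eq.~\eqref{rpm}.

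For the upper bound I would start from Lemma~\ref{lem:fW_bound}, which reduces the task to estimating $\int_{L_t}(\xi_1-W(t))^2 e^{-|(\xi_\perp,\hat{\xi})|^2}e^{-\nu_\varepsilon(|\xi|)(t-\tau_1)/\kappa}\,d\xi\,dS$ over $L_t=\{(x,\xi)\in I_W^-(t):\tau_1>0\}$. The recollision relation $\xi_1=\langle W\rangle_{\tau_1,t}$, together with $W\in\mathcal{K}(\gamma)$ (giving $W$ decreasing and $W(t)\leq\gamma e^{-C_0 t}$), provides the two simultaneous upper bounds $\xi_1\leq\langle W\rangle_{0,t}\leq C\gamma/(1+t)$ and $\xi_1\leq W(\tau_1)\leq\gamma e^{-C_0\tau_1}$, while the transverse recollision condition $|x_\perp-(t-\tau_1)\xi_\perp|\leq 1$ gives $|\xi_\perp|\leq 2/(t-\tau_1)$. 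I would then split $L_t=L_t'\cup L_t''$ at $\tau_1=t/2$, so that on $L_t'$ the surviving $\nu_\varepsilon$-decay is strong while on $L_t''$ the factor $W(\tau_1)$ is itself exponentially small.

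On $L_t'$, where $t-\tau_1\geq t/2$, the hypothesis $\varepsilon\geq 2\kappa C_0$ is used critically: $e^{-\nu_\varepsilon(|\xi|)(t-\tau_1)/\kappa}\leq e^{-\varepsilon t/(2\kappa)}\leq e^{-C_0 t}$. After integrating $\hat{\xi}$ via Lemma~\ref{lem:integral} (whose $(\cdots)^{3-d}$ prefactor is uniformly bounded because $\kappa\geq 1$ and $\varepsilon\geq 2\kappa C_0$), combining with $(\xi_1-W(t))^2\leq C\gamma^2/(1+t)^2$, $\xi_1$-range $\leq C\gamma/(1+t)$, and $\xi_\perp$-volume $\leq C/(1+t)^{d-1}$ (the latter via $\min\{(4/t)^{d-1},\pi^{(d-1)/2}\}$) yields the target $C\gamma^3 e^{-C_0 t}/(1+t)^{d+2}$. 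On $L_t''$ I would use $\xi_1\leq\gamma e^{-C_0 t/2}$ to get both $(\xi_1-W(t))^2\leq\gamma^2 e^{-C_0 t}$ and $\xi_1$-range $\leq\gamma e^{-C_0 t/2}$, bounding the $(\xi_\perp,\hat{\xi})$ integration trivially by the Gaussian; this produces $r_{W,2}^-(t)\leq C\gamma^3 e^{-3C_0 t/2}$, absorbed into $C\gamma^3 e^{-C_0 t}/(1+t)^{d+2}$ by the elementary inequality $e^{-C_0 t/2}\leq C_d(1+t)^{-(d+2)}$. The main technical point is the $L_t'$ estimate: the assumption $\varepsilon\geq 2\kappa C_0$ is precisely what cleanly converts the $\nu_\varepsilon$-exponential into the required $e^{-C_0 t}$, while the compensation on $L_t''$ is more robust but loses the polynomial $\xi_\perp$-decay, forcing the symmetric choice of the split threshold.
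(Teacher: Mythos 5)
Your proof is correct and follows essentially the same approach as the paper: the non-negativity argument (via $|\xi_1'(\tau_k)|\geq|\xi_1(\tau_k)|$ and Lemma~\ref{lem:fW_formula}) is the one the paper gives in the text just before the proposition, and for the magnitude you use the same decomposition of $L_t$ at $\tau_1=t/2$ that the paper does (its $L_t^*$, $L_t^{**}$), with the same bounds $\xi_1\leq C\gamma/(1+t)$ on the near part and $\xi_1\leq\gamma e^{-C_0 t/2}$ on the far part, the same appeal to Lemmas~\ref{lem:fW_bound} and~\ref{lem:integral}, and the hypothesis $\varepsilon\geq 2\kappa C_0$ used in the same place to convert the $e^{-\varepsilon t/(2\kappa)}$ factor into $e^{-C_0 t}$.
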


\begin{proof}
  Define $L_t$ by eq.~\eqref{Lt} and let
  \begin{equation}
    \label{Lt*Lt**}
    L_{t}^{*}=\{ (x,\xi)\in L_t \mid \tau_1 \leq t/2 \}, \quad L_{t}^{**}=\{ (x,\xi)\in L_t \mid \tau_1>t/2 \}.
  \end{equation}
  Then $r_{W}^{-}(t)=r_{W,*}^{-}(t)+r_{W,**}^{-}(t)$, where
  \begin{align}
    r_{W,*}^{-}(t) & =2\int_{L_{t}^{*}}(\xi_1-W(t))^2 (f_0-f_W)\, d\xi dS, \label{rW*m} \\
    r_{W,**}^{-}(t) & =2\int_{L_{t}^{**}}(\xi_1-W(t))^2 (f_0-f_W)\, d\xi dS. \label{rW**m}
  \end{align}
  Note first that $(x,\xi)\in L_t$ implies
  \begin{equation}
    \label{xi1-W_Lt}
    0<\xi_1 -W(t)<\xi_1 \leq \frac{1}{t-\tau_1}\int_{\tau_1}^{t}\gamma e^{-C_0 s}\, ds=\gamma e^{-C_0 \tau_1}\frac{1-e^{-C_0(t-\tau_1)}}{C_0(t-\tau_1)}.
  \end{equation}
  Similar arguments as in the proof of Proposition~\ref{prop:|rWm|_upper} show
  \begin{align}
    r_{W,*}^{-}(t) & \leq C\gamma^3 w_{\varepsilon,\kappa,d}(t)e^{-\frac{\varepsilon}{2\kappa}t}, \label{rW*m_upper} \\
    r_{W,**}^{-}(t) & \leq C\gamma^3 e^{-\frac{3}{2}C_0 t}. \label{rW**m_upper}
  \end{align}
  Since $\varepsilon/2\kappa \geq C_0$, these imply ineq.~\eqref{rWm_upper_small_kappa}. \qed
\end{proof}

This is step (ii). Step (iii) is to prove the following:

\begin{proposition}
  \label{prop:VWinK_small_kappa}
  Let $\kappa \geq 1$ and $\varepsilon \geq 2\kappa C_0$. Then $W\in \mathcal{K}=\mathcal{K}(\gamma)$ implies $V_W \in \mathcal{K}$ for sufficiently small $\gamma$.
\end{proposition}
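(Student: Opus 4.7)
The plan is to verify each clause of Definition~\ref{def:space2} for $V_W$ using formula~\eqref{VW_formula}, with two preliminary simplifications available in this regime: since $W$ is decreasing on $[0,\infty)$, a front-face recollision at positive time is geometrically impossible, hence $r_W^+(t)=0$; and since $0\leq W(t)\leq \gamma$, Lemma~\ref{lem:D0} gives $C_0\leq K(W(t))\leq C_\gamma$. The initial condition $V_W(0)=\gamma$ is immediate from~\eqref{VW_formula}. The upper bound $V_W(t)\leq \gamma e^{-C_0 t}$ follows at once from $K(W(s))\geq C_0$ together with $r_W^-(s)\geq 0$, which makes the subtracted integral in~\eqref{VW_formula} nonnegative. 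The Lipschitz bound $|dV_W/dt|\leq 1$ follows from the differential relation $dV_W/dt=-K(W)V_W-r_W^-$, using $K(W)\leq C_\gamma$, $|V_W|\leq \gamma$, and Proposition~\ref{prop:rWm_upper_small_kappa}, provided $\gamma$ is small.

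The main obstacle is the matching lower bound $V_W(t)\geq \frac{1}{2}\gamma e^{-C_0 t}$. Simply invoking $K(W(s))\leq C_\gamma$ would replace $e^{-C_0 t}$ by $e^{-C_\gamma t}$ and lose the prefactor for large $t$. The key refinement uses the oddness of $D_0$ from Lemma~\ref{lem:D0}: expanding near zero gives $D_0(u)=C_0 u+O(u^3)$ and consequently
\begin{equation}
K(u)-C_0=O(u^2).
\end{equation}
Combined with $W(s)\leq \gamma e^{-C_0 s}$, this produces the uniform-in-$t$ bound
\begin{equation}
\int_0^t\bigl(K(W(s))-C_0\bigr)\,ds\leq C\gamma^2\int_0^\infty e^{-2C_0 s}\,ds\leq C'\gamma^2,
\end{equation}
so the leading term of~\eqref{VW_formula} satisfies $\gamma e^{-\int_0^t K(W(s))\,ds}\geq \gamma e^{-C_0 t}(1-C''\gamma^2)$. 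The remaining integral is controlled by Proposition~\ref{prop:rWm_upper_small_kappa}:
\begin{equation}
\int_0^t e^{-\int_s^t K(W(\tau))\,d\tau}\,r_W^-(s)\,ds\leq C\gamma^3 e^{-C_0 t}\int_0^\infty\frac{ds}{(1+s)^{d+2}}\leq C'''\gamma^3 e^{-C_0 t}.
\end{equation}
Subtracting these two contributions yields $V_W(t)\geq \gamma e^{-C_0 t}(1-C''\gamma^2-C'''\gamma^2)$, which exceeds $\frac{1}{2}\gamma e^{-C_0 t}$ once $\gamma$ is sufficiently small.

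Finally, monotonicity on $[0,\infty)$ is immediate from the differential form: the positivity of $V_W$ just established, combined with $K(W(t))\geq C_0>0$ and $r_W^-(t)\geq 0$, gives $dV_W/dt=-K(W(t))V_W(t)-r_W^-(t)<0$. This completes the verification that $V_W\in\mathcal{K}(\gamma)$.
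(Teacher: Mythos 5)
Your proposal is correct and follows essentially the same approach as the paper: both prove the lower bound by expanding $K$ near zero to obtain $K(W(s))\leq C_0+C\gamma^2 e^{-2C_0 s}$ (the paper gets this via convexity $K(u)\leq D_0'(u)$ and Taylor-expands $D_0'$; you Taylor-expand $D_0$ directly and divide — an equivalent step), integrate to control the exponential prefactor up to a factor $1-O(\gamma^2)$, and absorb the $r_W^-$ contribution via Proposition~\ref{prop:rWm_upper_small_kappa}. The remaining clauses (initial value, upper bound, monotonicity, Lipschitz bound) are handled identically in both.
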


\begin{proof}
  Since $r_{W}^{\pm}(t)\geq 0$, the upper bound
  \begin{equation}
    \label{VWinK_small_kappa_step1}
    V_W(t)\leq \gamma e^{-C_0 t}
  \end{equation}
  follows from eq.~\eqref{VW_formula}.

  To prove the lower bound
  \begin{equation}
    \label{VWinK_small_kappa_step2}
    V_W(t)\geq \frac{1}{2}\gamma e^{-C_0 t},
  \end{equation}
  I need first to show
  \begin{equation}
    \label{VWinK_small_kappa_step3}
    e^{-\int_{0}^{t}K(W(s))\, ds}\geq \frac{3}{4}e^{-C_0 t}
  \end{equation}
  for sufficiently small $\gamma$: Note first that $D_{0}''(0)=0$ and $D_{0}'''(0)>0$ by eq.~\eqref{D0}; hence for $\gamma \leq 1$,
  \begin{align}
    \label{VWinK_small_kappa_step4}
    \begin{aligned}
      0<K(W(s)) & =D_0(W(s))/W(s) \\
                & \leq D_{0}'(W(s)) \\
                & \leq C_0+C(W(s))^2 \\
                & \leq C_0+C\gamma^2 e^{-2C_0 s}
    \end{aligned}
  \end{align}
  for some positive constant $C$. This implies
  \begin{equation}
    \label{VWinK_small_kappa_step5}
    e^{-\int_{0}^{t}K(W(s))\, ds}\geq e^{-\int_{0}^{t}(C_0+C\gamma^2 e^{-2C_0 s})\, ds}=e^{-C\gamma^2}e^{-C_0 t}.
  \end{equation}
  And taking $\gamma$ sufficiently small leads to ineq.~\eqref{VWinK_small_kappa_step3}.

  Now ineq.~\eqref{VWinK_small_kappa_step2} follows from eq.~\eqref{VW_formula}, Proposition~\ref{prop:rWm_upper_small_kappa} and ineq.\eqref{VWinK_small_kappa_step3} as follows:
  \begin{align}
    \label{VWinK_small_kappa_step6}
    \begin{aligned}
      V_W(t)
      & \geq \frac{3}{4}\gamma e^{-C_0 t}-C\gamma^3 \int_{0}^{t}e^{-C_0(t-s)}\frac{e^{-C_0 s}}{(1+s)^{d+2}}\, ds \\
      & =\frac{3}{4}\gamma e^{-C_0 t}-C\gamma^3 e^{-C_0 t}\int_{0}^{t}\frac{ds}{(1+s)^{d+2}} \\
      & \geq \frac{1}{2}\gamma e^{-C_0 t}
    \end{aligned}
  \end{align}
  if $\gamma$ is sufficiently small.

  Next, I show that $V_W$ is decreasing on the interval $[0,\infty)$: Since $V_W>0$ (by ineq.~\eqref{VWinK_small_kappa_step2}) and $r_{W}^{\pm}(t)\geq 0$,
  \begin{equation}
    \label{VWinK_small_kappa_step7}
    \frac{dV_W(t)}{dt}=-K(W(t))V_W(t)-r_{W}^{+}(t)-r_{W}^{-}(t)<0
  \end{equation}
  for $t\geq 0$.

  Lastly, since $r_{W}^{+}(t)=0$, $|r_{W}^{-}(t)|\leq C\gamma^3$ (by Proposition~\ref{prop:rWm_upper_small_kappa}) and $|V_W(t)|\leq \gamma$ (by ineqs.~\eqref{VWinK_small_kappa_step1} and \eqref{VWinK_small_kappa_step2}),
  \begin{equation}
    \label{VWinK_small_kappa_step8}
    \left| \frac{dV_W(t)}{dt} \right|\leq K(W(t))|V_W(t)|+|r_{W}^{+}(t)|+|r_{W}^{-}(t)|\leq C_{\gamma}\gamma+C\gamma^3 \leq 1
  \end{equation}
  for sufficiently small $\gamma$. \qed
\end{proof}

Steps (iv) and (v) can be carried out similarly to the proof of Theorem~\ref{thm:large_kappa}. And Theorem~\ref{thm:small_kappa} is proved.


\bibliographystyle{spmpsci}
\bibliography{bibtex_rigidbodymotion}

\end{document}